\documentclass[11pt]{article}

\usepackage[top=25mm,bottom=25mm,left=25mm,right=25mm]{geometry}
\usepackage{amsmath,amssymb,amsthm}
\usepackage{graphicx}
\usepackage{float}
\usepackage{longtable}
\usepackage{booktabs,array,tabularx}
\usepackage{xcolor}
\usepackage[hidelinks]{hyperref}
\usepackage[numbers,sort&compress]{natbib}
\usepackage[utf8]{inputenc}
\usepackage{microtype}
\usepackage{parskip}
\usepackage{setspace}
\usepackage{caption}
\usepackage{authblk}
\usepackage{titlesec}
\usepackage{enumitem}
\usepackage{multirow}
\usepackage{float}
\usepackage{placeins}
\titleformat{\section}{\normalfont\bfseries\large}{\thesection}{0.75em}{}
\titleformat{\subsection}{\normalfont\bfseries}{\thesubsection}{0.75em}{}
\titleformat{\subsubsection}{\normalfont\bfseries\small}{\thesubsubsection}{0.75em}{}
\titlespacing*{\section}{0pt}{16pt}{6pt}
\titlespacing*{\subsection}{0pt}{12pt}{4pt}
\titlespacing*{\subsubsection}{0pt}{8pt}{2pt}

\newcommand{\Nstar}{N_{\star}}
\newcommand{\sigstar}{\sigma_{0}^{*}}

\newcommand{\Phineuro}{\Phi_{\rm neuro}}
\newcommand{\thetabio}{\theta}
\newcommand{\Phiduty}{\Phi_{\rm duty}}
\newcommand{\Phithermal}{\Phi_{\rm thermal}}
\newcommand{\PhiC}{\Phi_{C}}

\newtheorem{assumption}{Assumption}
\newtheorem{proposition}{Proposition}
\newtheorem{corollary}{Corollary}

\title{Biological Time Equivalence in Vertebrates:
Thermodynamic Framework, Comparative Tests,\\
and Clade-Specific Deviations}

\author{Mesfin Asfaw Taye\\
\small West Los Angeles College, Science Division\\
\small 9000 Overland Ave, Culver City, CA 90230, USA\\
\small \texttt{tayem@wlac.edu}
}

\date{}

\makeatletter
\renewcommand{\maketitle}{%
\begin{center}
{\LARGE\bfseries \@title \par}
\vspace{1em}
{\large \@author \par}
\vspace{0.8em}
\end{center}
}
\makeatother

\begin{document}
\maketitle

% -- Abstract --------------------------------------------------
\begin{abstract}
\noindent
Across adult warm-blooded vertebrates, the product of resting
heart rate $f_H$ and maximum lifespan $L$ is approximately
constant: $\Nstar = f_H L \approx 10^9$ cardiac cycles.
This empirical regularity, noted since Rubner (1908), has lacked
a widely accepted thermodynamic interpretation.
We derive $\Nstar \approx 10^9$ from the non-equilibrium second
law by treating the adult organism as a metabolic
non-equilibrium steady state (NESS) and introducing the empirical
closure $\dot{e}_{p} = \sigma_0 f$, which links entropy
production rate to heart rate via a mass-specific parameter
$\sigma_0 \propto M^0$.
Under this closure, the lifetime entropy budget
$\Sigma = \sigma_0 \Nstar$ is approximately species-independent
when $\sigma_0$ is approximately constant---a condition whose
direct calorimetric verification remains the critical outstanding
experimental test.
We further show that $\Nstar$ is the correct primitive invariant:
lifetime energy per unit mass is a derived consequence, valid
only when body temperature and the mass-specific entropy cost per
cycle are both approximately constant.
This framework, which we term the Principle of Biological Time
Equivalence (PBTE), is placed on a fully falsifiable footing
with explicit assumptions, a domain-of-validity table, and five
numerical falsification criteria.
We test the framework against a dataset of 230 adult vertebrate
species spanning eight taxonomic groups.
Ordinary least-squares regression on the $n = 43$ directly measured
non-primate placentals yields slope $\hat\beta = -0.903 \pm 0.056$
($R^2 = 0.863$; $F$-test $p = 0.093$ against $\beta = -1$).
Phylogenetically independent contrasts on 112 endotherm species
yield a $\log_{10} f_H$--$\log_{10} L$ slope of
$-0.99 \pm 0.04$ ($p = 0.84$ against slope $-1$),
confirming the relation is not a phylogenetic artefact.
The WBE kinematic null of zero inter-clade variation is rejected
($F = 12.7$, $p < 0.001$).
Four warm-blooded clades depart systematically from the mammalian
baseline; we derive their longevity deviations from a unified
thermodynamic multiplier
$\Phi_C = \Phi_{\rm duty} \cdot \Phi_{\rm thermal}
\cdot \Phi_{\rm mito+oxid} \cdot \Phi_{\rm haz}$,
calibrated to independently measured physiology.
For primates, the elevated count
$\langle\Nstar\rangle \approx (2\text{--}3)\times10^9$
follows from a neuro-metabolic entropy model in which greater
neural metabolic investment reduces entropy produced per cardiac
cycle.
For bats, the extreme longevity ($\Phi_{\rm bat} \approx 7.9$)
arises from the multiplicative synergy of cardiac suppression
during torpor and an Arrhenius thermal factor during
hibernation---two mechanisms acting simultaneously whose
thermodynamic motivation has not previously been given.
For birds, an adverse thermal penalty ($\Phi_{\rm thermal} = 0.73$)
and adverse flight duty cycle ($\Phi_{\rm duty} = 0.87$) are
overcome by mitochondrial coupling efficiency and antioxidant
robustness.
For cetaceans, extreme diving bradycardia ($\Phi_{\rm duty} = 3.08$
for bowhead whales) reveals a near-coincidence trap: the raw
heartbeat count $N_{\rm obs} \approx N_0$ conceals a true
thermodynamic budget three times the mammalian baseline.
Within this framework, the integral of physiological frequency
defines a natural biological proper time, which unifies all
longevity mechanisms as Class~1 (time dilation: reduce $f$) or
Class~2 (budget expansion: reduce $\sigma_0$), generating
testable predictions for epigenetic aging clocks.
The central outstanding experimental requirement is direct
calorimetric verification of $\sigma_0 \propto M^0$, which would
convert PBTE from a statistically supported regularity with
thermodynamic motivation into a fully tested conservation law.

\smallskip
\noindent\textbf{Keywords:} metabolic scaling, biological time,
non-equilibrium thermodynamics, lifespan invariant, allometry,
entropy production, comparative physiology, clade multiplier,
epigenetic clock, falsifiability
\end{abstract}

\clearpage
\tableofcontents
\clearpage

% ==============================================================
\section{Introduction}
\label{sec:intro}
% ==============================================================

Among the most arresting regularities in comparative biology is one that spans the full breadth of vertebrate life: a pygmy shrew (\textit{Suncus etruscus}, $\approx 2\,\text{g}$) races through existence at 835 heartbeats per minute and dies within two years; an African elephant ($\approx 4{,}000\,\text{kg}$) beats its heart at a leisurely 28 beats per minute and endures for seven decades. To an observer measuring time by the calendar, their lifespans differ by a factor of thirty-five. Yet when duration is measured not by years but by the organism's own internal rhythm---the cumulative count of cardiac cycles---the two animals are, in a deeper sense, contemporaries. Each accumulates close to $10^9$ heartbeats before death, so that
\begin{equation}
  N = f_H \cdot L
  \approx 835\,\text{min}^{-1} \times 2\,\text{yr}
  \approx 28\,\text{min}^{-1} \times 70\,\text{yr}
  \approx 10^9,
\end{equation}
where $f_H$ is the resting heart rate and $L$ is the maximum lifespan. This near-equality, first noted through the constancy of mass-specific lifetime energy expenditure by Rubner~\cite{rubner1908} and later quantified as a universal temporal allometry by Lindstedt and Calder~\cite{lindstedt1981}---with the heartbeat count itself explicitly computed by Livingstone and Kuehn~\cite{livingstone1979} and Levine~\cite{levine1997}---has accumulated substantial empirical support across body masses spanning ten orders of magnitude. It is captured by the dimensionless invariant
\begin{equation}
 \ell_i \;\equiv\; \log_{10}\!\bigl(f_{H,i} \cdot L_i \cdot 525{,}960\bigr) \;\approx\; 9.06,
 \label{eq:ell_def}
\end{equation}
where $525{,}960\,\text{min\,yr}^{-1}$ is the conversion factor, and equivalently $\Nstar = f_{H,i}\,L_i\cdot 525{,}960 \approx 10^9$ cardiac cycles per lifetime.

The constancy of $\Nstar$ is not merely a numerical curiosity: it points toward a fundamental organising principle of animal physiology. Organisms do not age primarily by elapsed chronological time, but by the accumulation of irreversible physiological events. Just as Einstein's special relativity teaches that proper time is intrinsic to an observer and cannot be inferred from an external clock~\cite{calder1984}, biological time is intrinsic to each organism. A hummingbird's minute is dense with metabolic activity and entropy production; a tortoise's minute is comparatively sparse; yet both traverse, in their own internal frames, a comparable extent of biological duration before reaching the same thermodynamic endpoint. Chronological lifespan is thus an emergent consequence of how rapidly an organism spends its fixed biological time budget---not the budget itself.

What makes this regularity scientifically compelling---and scientifically challenging---is that it holds only approximately, and its deviations are not random. Certain clades depart from the mammalian baseline in striking and consistent ways. Primates, including humans, accumulate $(2\text{--}3)\times10^9$ heartbeats over a lifetime, roughly twice the baseline. This elevation is not explained by slower heart rates alone; rather, it reflects a fundamental reduction in the thermodynamic cost of each cardiac cycle, driven by the extraordinary metabolic investment that primates make in their neural tissue. A large, metabolically active brain provides enhanced predictive homeostatic control over peripheral physiology, reduces the frequency of acute physiological crises, and upregulates molecular repair pathways---each channel lowering the entropy produced per heartbeat, and thereby allowing more beats to be completed within the same lifetime thermodynamic budget. The human brain, consuming roughly 20\% of resting metabolic power, exemplifies this strategy at its most extreme. Other primates---from the common marmoset to the chimpanzee---span an intermediate range of neural investment and correspondingly intermediate longevity extensions, tracing a quantitative relationship between brain fraction and lifespan that this framework derives from first principles.

Bats present a different and even more spectacular case. Temperate vespertilionid bats achieve wild maximum lifespans of 20--40 years---three to six times the prediction for a non-hibernating placental of equal mass. Their longevity arises from the multiplicative interaction of two independent mechanisms that operate simultaneously during hibernation. First, the cardiac clock slows dramatically: during torpor, heart rate falls from $\sim$300 beats per minute to fewer than 10, so that the time-averaged rate over a full annual cycle is less than half the active-phase rate. Second, the biochemical cost of each remaining heartbeat falls steeply with temperature, governed by Arrhenius kinetics: at hibernation temperatures of 280--295\,K, damage-generating reactions proceed at only 20--30\% of their normothermic rate. Neither mechanism alone is sufficient; it is their thermodynamic product that accounts for the observed longevity excess. Tropical bats with minimal torpor, such as \textit{Pteropus vampyrus}, show correspondingly modest longevity extensions, confirming that hibernation is the essential ingredient.

Birds occupy a paradoxical position. Their resting heart rates are comparable to those of mass-matched mammals, their body temperatures are 3--5\,K \emph{above} the mammalian reference (which should accelerate damage accumulation), and the elevated cardiac frequency during flight constitutes an adverse duty-cycle factor. Yet a 20\,g passerine routinely lives 15--20 years, while a 20\,g mouse lives 2--3 years. The resolution lies in avian mitochondrial architecture: bird mitochondria produce substantially less reactive oxygen species per unit ATP synthesised than mammalian mitochondria, a difference that translates directly into reduced thermodynamic damage per cardiac cycle. Elevated antioxidant enzyme activity and enhanced DNA repair capacity amplify this biochemical advantage, while the ability to fly reduces extrinsic mortality and allows the intrinsic thermodynamic budget to be more fully realised. Avian longevity is thus achieved through biochemical excellence that compensates for---and overcomes---the adverse thermal and kinematic environment.

Cetaceans present a subtler pattern, complicated by what we term the near-coincidence trap. Large baleen whales such as the bowhead (\textit{Balaena mysticetus}, maximum lifespan $\sim$200 years) appear superficially to obey the mammalian baseline: their raw heartbeat count $N_{\rm obs} \approx 10^9$. But this is misleading. These animals spend 60--80\% of their lives in deep dives during which heart rate plunges from $\sim$30 beats per minute to as low as 2--4 beats per minute. The thermodynamically cheap beats during bradycardic diving are not equivalent to normothermic beats; they generate far less entropy per cycle. Once the duty-cycle correction is applied, the true damage-equivalent budget of the bowhead whale is closer to $3\times10^9$---a threefold elevation above the mammalian baseline that is entirely hidden by the raw count.

Despite the empirical robustness of $\Nstar \approx 10^9$, the relationship has remained theoretically under-motivated. The West--Brown--Enquist (WBE) fractal vascular network theory~\cite{west1997} derives $f_H \propto M^{-1/4}$ and $L \propto M^{+1/4}$ from network optimisation, predicting $f_H L \propto M^0$ by exponent cancellation. This explains the mass-independence of the product but not its numerical value, and makes no prediction whatsoever about inter-clade deviations: primates, bats, and birds share the same vascular scaling yet differ from the mammalian baseline by factors of two to eight. Pearl's rate-of-living hypothesis~\cite{pearl1928} and its descendants~\cite{speakman2005} recast the regularity in energetic terms, but lifetime energy per unit mass is not the primitive conserved quantity---it is a derived consequence that holds only when body temperature and the entropy cost per cycle are both approximately constant, conditions that fail in birds, ectotherms, and insects. Glazier~\cite{glazier2022} has further shown that metabolic scaling exponents vary with metabolic level and taxon, undermining fixed-exponent derivations. The kinematic null of zero inter-clade variation is rejected by the data presented here ($F = 12.7$, $p < 0.001$), confirming that a purely allometric account is insufficient.

The present work advances the field in four principal ways. We derive $\Nstar \approx 10^9$ directly from the non-equilibrium second law by modelling the adult organism as a metabolic non-equilibrium steady state and introducing the empirical closure $\dot{e}_{p,i} = \sigma_{0,i} f_i$, which links entropy production rate to cardiac frequency via a mass-specific parameter $\sigma_{0,i} \propto M^0$. This provides, for the first time, a thermodynamic foundation for the lifetime cycle budget rather than a kinematic or energetic rationalisation. We establish $\Nstar$ as the primitive conserved quantity, with lifetime energy per unit mass emerging only as a secondary and conditional consequence. We formulate the theory on an explicitly falsifiable footing, with clearly stated assumptions, a defined domain of validity, and five quantitative criteria for empirical rejection (Section~\ref{sec:falsifiability}). Most importantly, we derive a unified multiplicative clade multiplier $\PhiC = \Phiduty \cdot \Phithermal \cdot \Phi_{\rm mito+oxid} \cdot \Phi_{\rm haz}$, calibrated entirely from independently measured physiology, that quantitatively accounts for the longevity deviations of primates, bats, birds, and cetaceans without any fitting to lifespan data. Structured departures from the invariant are not anomalies to be explained away: they are the theory's most precise predictions.

% ==============================================================
\section{Structure of the paper}
\label{sec:structure}
% ==============================================================

The paper is organised as follows. Section~\ref{sec:notation} introduces the notation and defines the core variables used throughout. Section~\ref{sec:theory} presents the thermodynamic derivation of the PBTE invariant from first principles, followed by Section~\ref{sec:biotime}, which formalises the concept of biological proper time and its relation to ageing. Section~\ref{sec:multiplier} develops the generalised clade-multiplier framework, and Section~\ref{sec:clade} applies this framework to primates, bats, birds, and cetaceans with explicit quantitative examples. Section~\ref{sec:dataset} describes the comparative dataset of 230 species, while Section~\ref{sec:results} presents the statistical analyses and empirical tests. Section~\ref{sec:domain} delineates the domain of applicability of the theory, and Section~\ref{sec:discussion} discusses its broader implications and outlines the key calorimetric experiment required for decisive validation. Finally, Section~\ref{sec:falsifiability} formulates explicit falsification criteria, establishing the conditions under which the PBTE framework may be empirically rejected.
% ==============================================================
\section{Notation and Symbols}
\label{sec:notation}
% ==============================================================

\begin{table}[H]
\caption{\textbf{Notation table.} All symbols used throughout
the paper with definitions and SI units.}
\label{tab:notation}
\small
\begin{tabularx}{\textwidth}{lXl}
\toprule
Symbol & Definition & Units \\
\midrule
$i$ & Species index & --- \\
$M_i$ & Adult body mass & kg \\
$f_i$, $f_{H,i}$ & Resting heart rate (intrinsic physiological frequency) & Hz or bpm \\
$L_i$ & Maximum recorded natural lifespan & yr \\
$T_i$ & Mean core body temperature & K \\
$P_i$ & Resting metabolic power & W \\
$\dot{S}_i$ & Rate of entropy change of organism $i$ & W\,K$^{-1}$ \\
$\dot{e}_{p,i}$ & Internal entropy production rate & W\,K$^{-1}$ \\
$\dot{h}_{d,i}$ & Entropy discharge rate to environment & W\,K$^{-1}$ \\
$\sigma_{0,i}$ & Entropy production per cardiac cycle (extensive) & kJ\,K$^{-1}$\,beat$^{-1}$ \\
$\sigstar$ & Mass-specific entropy production per cycle & kJ\,K$^{-1}$\,beat$^{-1}$\,kg$^{-1}$ \\
$\Sigma_i$ & Total lifetime entropy production & kJ\,K$^{-1}$ \\
$\Nstar$, $N_{\star,i}$ & Lifetime cardiac cycle count (the PBTE invariant) & dimensionless \\
$\ell_i$ & $\log_{10}(f_{H,i} \times L_i \times 525{,}960)$ & dimensionless \\
$\thetabio_i(t)$ & Biological proper time = $\int_0^t f_i(t')\,\mathrm{d}t'$ & cycles \\
$\varphi_i$ & Neural power fraction = $P_{\rm brain}/P_{\rm body}$ & dimensionless \\
$\varphi_0$ & Non-primate mammalian neural power fraction baseline $\approx 0.02$ & dimensionless \\
$\alpha$ & Neuro-metabolic sensitivity exponent $\approx 0.40$ (thermodynamic bounds: $0<\alpha<1$) & dimensionless \\
$\PhiC$ & Clade multiplier = $\Nstar^{(C)}/N_0$ & dimensionless \\
$\Phiduty$ & Duty-cycle factor (intermittent physiology) & dimensionless \\
$\Phithermal$ & Arrhenius thermal factor & dimensionless \\
$\Phi_{\rm mito+oxid}$ & Mitochondrial coupling $\times$ antioxidant factor & dimensionless \\
$\Phi_{\rm haz}$ & Extrinsic hazard factor & dimensionless \\
$E_a$ & Arrhenius activation energy for damage reactions $\approx 0.65$~eV & eV \\
$k_B$ & Boltzmann constant $= 8.617 \times 10^{-5}$~eV\,K$^{-1}$ & eV\,K$^{-1}$ \\
$N_0$ & Non-primate mammalian baseline $\approx 10^9$ & dimensionless \\
$T_{\rm ref}$ & Reference homeotherm temperature $= 310$~K & K \\
\bottomrule
\end{tabularx}
\end{table}

% ==============================================================
\section{Thermodynamic Derivation of the Lifetime Cycle Invariant}
\label{sec:theory}
% ==============================================================

\subsection{The metabolic non-equilibrium steady state}
\label{sec:NESS}

A living organism in its adult reproductive phase is an open dissipative
system maintained far from thermodynamic equilibrium by continuous
metabolic free-energy consumption. At the macroscopic level appropriate
to whole-organism thermodynamics, the Gibbs entropy balance
takes the form~\cite{prigogine1967,seifert2012}:
\begin{equation}
 \dot{S}_i(t) = \dot{e}_{p,i}(t) - \dot{h}_{d,i}(t),
 \label{eq:balance}
\end{equation}
where $\dot{S}_i$ is the net rate of entropy change of the system
(which may be positive or negative depending on whether ordering
costs exceed heat export), $\dot{e}_{p,i} \geq 0$ is
the irreversible internal entropy production rate satisfying the second
law, and $\dot{h}_{d,i} \geq 0$ is the rate of entropy export to the
environment through heat dissipation at temperature $T_i$.
For a homeothermic organism at steady metabolic state,
$\dot{h}_{d,i} = P_i / T_i$ where $P_i$ is resting metabolic power.

During healthy adult life, the organism's macroscopic physiological
state is approximately stationary: mass, composition, and metabolic
intensity do not change systematically on the timescales relevant to
this analysis (weeks to years). This metabolic NESS condition gives:
\begin{equation}
 \dot{S}_i \approx 0 \quad \Rightarrow \quad
 \dot{e}_{p,i}(t) \approx \dot{h}_{d,i}(t) = \frac{P_i(t)}{T_i}.
 \label{eq:NESS}
\end{equation}
Equation~(\ref{eq:NESS}) is the organismal-level expression of the
non-equilibrium second law: in metabolic steady state, every unit of
entropy produced internally is immediately exported as heat.
This is not an approximation unique to PBTE; it is the standard
assumption of comparative metabolic physiology and is equivalent to
the statement that resting metabolic rate is calorimetrically
measurable as heat output~\cite{prigogine1967}.

\subsection{The PBTE closure: entropy production proportional to frequency}
\label{sec:closure}

The entropy production rate $\dot{e}_{p,i}(t)$ in
equation~(\ref{eq:NESS}) is a well-defined thermodynamic quantity but
is not directly related to the organism's intrinsic clock without an
additional constitutive relation. We introduce the empirical closure:

\begin{assumption}[PBTE closure]
\label{ass:closure}
For an adult endothermic vertebrate in metabolic steady state, the
instantaneous entropy production rate is proportional to the
instantaneous intrinsic physiological frequency:
\begin{equation}
 \dot{e}_{p,i}(t) = \sigma_{0,i}\, f_i(t),
 \label{eq:closure}
\end{equation}
where $f_i(t)$ is the resting heart rate (Hz) and $\sigma_{0,i} > 0$
is the entropy production per cardiac cycle, a constitutive parameter
of species $i$.
\end{assumption}

The physical motivation is that the cardiac cycle is the master
clock governing the rate of metabolic throughput in homeothermic
vertebrates: heart rate paces oxygen delivery, substrate turnover,
and cellular repair across all tissues. At the resting state, both
$\dot{e}_{p,i}$ and $f_i$ scale with body mass as $M^{-1/4}$ per unit
mass (from Kleiber and cardiac allometry respectively), so their
ratio $\sigma_{0,i}/M_i$ is expected to be mass-independent.
Closure~(\ref{eq:closure}) is an empirical relation, not derived
from microscopic theory; its validity is precisely what the $\sigstar$
calorimetric experiment (Section~\ref{sec:exp}) would test.

\subsection{The fundamental relation}
\label{sec:fundamental}

Integrating equation~(\ref{eq:closure}) over the natural lifespan
$[0, L_i]$:
\begin{equation}
 \Sigma_i \equiv \int_0^{L_i} \dot{e}_{p,i}(t)\,\mathrm{d}t
 = \sigma_{0,i} \int_0^{L_i} f_i(t)\,\mathrm{d}t
 = \sigma_{0,i}\, N_{\star,i},
 \label{eq:lifetime_budget}
\end{equation}
where $N_{\star,i} \equiv \int_0^{L_i} f_i(t)\,\mathrm{d}t$ is the
total lifetime cardiac cycle count.
Rearranging:
\begin{equation}
 \boxed{N_{\star,i} = \frac{\Sigma_i}{\sigma_{0,i}}.}
 \label{eq:fundamental}
\end{equation} 
(see Appendix A for a detailed derivation)

Equation~(\ref{eq:fundamental}) is the thermodynamic content of PBTE:
the lifetime cycle count equals the total lifetime dissipative budget
divided by the entropy cost per cycle.
Two corollaries follow immediately.

\begin{corollary}[Lifetime extension requires reduced entropy per cycle]
Since $\Sigma_i$ is set by the organism's biochemical constraints,
$N_{\star,i}$ increases if and only if $\sigma_{0,i}$ decreases.
Any physiological strategy that reduces entropy production per
cardiac cycle extends chronological lifespan.
\end{corollary}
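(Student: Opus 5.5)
The plan is to read the corollary directly off the fundamental relation~(\ref{eq:fundamental}), $N_{\star,i} = \Sigma_i/\sigma_{0,i}$, treating $\Sigma_i$ as fixed. The hypothesis ``$\Sigma_i$ is set by the organism's biochemical constraints'' means that, across the physiological strategies being compared, the lifetime dissipative budget is the same: $\Sigma_i = \Sigma$ with $\Sigma>0$ fixed. By Assumption~\ref{ass:closure}, $\sigma_{0,i}>0$. The count $N_{\star,i}$ is then the image of $\sigma_{0,i}$ under $g(\sigma) = \Sigma/\sigma$ on $(0,\infty)$. Since $g'(\sigma) = -\Sigma/\sigma^2 < 0$, $g$ is strictly decreasing and hence injective. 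Strict monotonicity gives both directions at once. If $\sigma' < \sigma$, then $g(\sigma') > g(\sigma)$. Conversely, if $g(\sigma') > g(\sigma)$, then $\sigma' = \sigma$ is excluded by injectivity, and $\sigma' > \sigma$ would force $g(\sigma') < g(\sigma)$. This settles the ``if and only if''.

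The second sentence concerns chronological lifespan, so I would connect $N_{\star,i}$ to $L_i$. By definition, $N_{\star,i} = \int_0^{L_i} f_i(t)\,\mathrm{d}t = \theta_i(L_i)$. Since $f_i>0$, the map $L\mapsto\theta_i(L)$ is strictly increasing. Fix the frequency profile $f_i(\cdot)$, which I would state explicitly as the ``other things equal'' clause; a strategy that alters $f_i$ is the Class~1 mechanism and is not covered here. Then a larger required count $N_{\star,i}$ forces a larger $L_i = \theta_i^{-1}(N_{\star,i})$. In the constant-rate case this reads simply $L_i = \Sigma/(\sigma_{0,i} f_i)$. Composing the two monotone maps shows that decreasing $\sigma_{0,i}$ strictly increases $L_i$.

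The main obstacle is not the calculus but justifying the invariance of $\Sigma_i$ under the strategy change. The paper elsewhere notes that $\Sigma_i = \sigma_{0,i}\Nstar$ is only approximately species-independent. My plan is to avoid deriving this and instead state it as the operative hypothesis of the corollary, exactly as its first clause does. I would also remark that if $\Sigma_i$ itself co-varied with $\sigma_{0,i}$, the ``only if'' direction could fail. A proportional co-variation, for instance, would leave $N_{\star,i}$ unchanged. The conclusion therefore holds strictly under the fixed-budget premise, and as an approximation otherwise.
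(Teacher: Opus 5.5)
Your proposal is correct and follows the paper's route. The paper reads the corollary directly off $N_{\star,i}=\Sigma_i/\sigma_{0,i}$ at fixed $\Sigma_i$ (Appendix~A: ``a lower entropy cost per beat permits more cycles within the same total budget''), and your fixed-$f_i$ clause for the lifespan sentence matches the paper's Class~2 wording, ``extending lifespan at a fixed $f_i$''. One small slip in your closing remark: a proportional co-variation $\Sigma_i\propto\sigma_{0,i}$ keeps $N_{\star,i}$ constant, so it breaks the ``if'' direction (lowering $\sigma_{0,i}$ no longer raises $N_{\star,i}$), while the ``only if'' direction then holds vacuously.
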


\begin{corollary}[The mammalian baseline]
Non-primate placentals cluster near $N_{\star,i} \approx N_0 = 10^9$,
implying a reference entropy cost
$\langle\Delta s_{\rm beat}\rangle_0 = \Sigma_0 / N_0$
per cardiac cycle at the mammalian energetic baseline.
\end{corollary}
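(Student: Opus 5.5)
The plan is to read Corollary 2 as a direct consequence of the fundamental relation (\ref{eq:fundamental}) together with one empirical input: that non-primate placentals cluster at $N_{\star,i}\approx N_0 = 10^9$. First I state the hypotheses that make the corollary well defined. Assumption~\ref{ass:closure} holds with $\sigma_{0,i}>0$, and the NESS condition (\ref{eq:NESS}) holds over adult life. Under these, integrating the closure gives (\ref{eq:lifetime_budget}) and hence $\Sigma_i = \sigma_{0,i}N_{\star,i}$ for every species $i$. Read with $\Sigma_i$ and $N_{\star,i}$ as given, this identifies $\sigma_{0,i}=\Sigma_i/N_{\star,i}$ as the lifetime-averaged entropy per beat. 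More generally, if $\sigma_{0,i}$ is allowed to vary slowly in time, the identity becomes $\Sigma_i/N_{\star,i}=\int\sigma_{0,i}(t)f_i(t)\,dt\big/\int f_i(t)\,dt$, a frequency-weighted average. That is exactly the object written $\langle\Delta s_{\rm beat}\rangle$.

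Second, I pass from individual species to the baseline class. Let $\mathcal{B}$ be the set of non-primate placentals, and define the reference budget $\Sigma_0$ as the (geometric) mean of $\Sigma_i$ over $\mathcal{B}$. Taking logs of $\Sigma_i=\sigma_{0,i}N_{\star,i}$ gives
\[
\log\Sigma_i=\log\sigma_{0,i}+\log N_{\star,i}.
\]
Averaging over $\mathcal{B}$ and using the empirical clustering $\langle\log_{10}N_{\star,i}\rangle_{\mathcal{B}}\approx 9$, i.e.\ $\langle\ell_i\rangle\approx 9.06$ from (\ref{eq:ell_def}), yields $\langle\log\sigma_{0,i}\rangle_{\mathcal B}=\log\Sigma_0-\log N_0$. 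This is the claimed $\langle\Delta s_{\rm beat}\rangle_0=\Sigma_0/N_0$, understood as a geometric-mean statement.

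The main obstacle is that this step alone says nothing about the \emph{spread}. A tight cluster in $N_\star$ implies a tight cluster in $\sigma_{0,i}$ only if the $\Sigma_i$ are themselves approximately species-independent across $\mathcal{B}$. So the corollary is really conditional: it holds in the regime where the lifetime budget is common, which is the assumption flagged in the abstract as the calorimetric test. Given that the paper states no mass-dependence of $\Sigma$, I would first try making this explicit as a hypothesis, namely that $\mathrm{Var}_{\mathcal B}(\log\Sigma_i)$ is small. Then $\mathrm{Var}(\log\sigma_{0,i})\le 2\,\mathrm{Var}(\log\Sigma_i)+2\,\mathrm{Var}(\log N_{\star,i})$ bounds the dispersion of the per-beat cost by the observed scatter in $\ell_i$. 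Finally, I would note the mass bookkeeping. Because $\sigma_{0,i}$ is extensive, the mass-independent statement is really about $\sigstar=\sigma_{0,i}/M_i$, with $\Sigma_i\propto M_i$. So the baseline per-beat cost is $\Sigma_0/N_0$ per unit mass, consistent with the Kleiber/cardiac scaling argument given after Assumption~\ref{ass:closure}.
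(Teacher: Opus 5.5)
Your argument is correct, and its core matches the paper's: the corollary is the relation $N=\Sigma/\langle\sigma_0\rangle$ evaluated at the baseline and rearranged. Your frequency-weighted average is exactly the change of variables $dn=f_H\,dt$, $\sigma_0(n)=\sigma(n)/f_H(n)$ of Appendix~A.1--A.2. That change of variables makes $\Sigma=N\langle\sigma_0\rangle$ an identity, so neither the closure nor the NESS condition is actually needed here.

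Where you genuinely differ is in how ``baseline'' is made precise. The paper (Appendix~A.3--A.4) treats it as a single idealised reference state. It \emph{assumes} a common budget $\Sigma_{\rm life}\approx\Sigma_\star$, writes $N_0=\Sigma_\star/\langle\sigma_0\rangle_0$, and rearranges to $\langle\sigma_0\rangle_0=\Sigma_\star/N_0$. That value then serves as the calibration point for $\Phi_C$ and $\Phi_{\rm neuro}$. You instead log-average the species-level identity over the clade. This makes $\Sigma_0/N_0$ an exact statement about geometric means that needs no common-budget assumption. That assumption is then needed only for the stronger claim that the individual $\sigma_{0,i}$ cluster tightly, which your variance bound isolates. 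Your route gives a cleaner account of where the untested assumption enters. The price is reading $\Sigma_0$ as a clade geometric mean, which coincides with the paper's $\Sigma_\star$ only when the budgets are in fact common.

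One ordering problem needs fixing. Your hypothesis that $\mathrm{Var}_{\mathcal B}(\log\Sigma_i)$ is small contradicts D1, $\Sigma_i\propto M_i$, because the baseline clade spans more than six decades of mass ($0.002$--$4{,}000$\,kg). Under D1 the extensive $\sigma_{0,i}=\sigstar M_i$ spread over the same range, and your variance bound then says nothing. Run the log-averaging and the dispersion hypothesis on mass-specific quantities from the start, $\log(\Sigma_i/M_i)=\log(\sigma_{0,i}/M_i)+\log N_{\star,i}$, so that $\Sigma_0/N_0$ is a per-unit-mass cost. Your closing remark points this way, and it catches a real looseness in the paper itself: Appendix~A uses a single $\Sigma_\star$, while D1 has $\Sigma_i\propto M_i$. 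But the mass-specific version should replace the extensive one rather than follow it.
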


\subsection{The mass-specific closure parameter and its constancy}
\label{sec:sigstar}

Define the mass-specific closure parameter:
\begin{equation}
 \sigstar \equiv \frac{\sigma_{0,i}}{M_i}
 = \frac{\dot{e}_{p,i}}{f_i M_i}
 = \frac{P_i}{T_i\, f_i\, M_i},
 \label{eq:sigstar}
\end{equation}
where the last equality uses the NESS condition
$\dot{e}_{p,i} = P_i/T_i$.
Dimensional analysis gives
$\sigstar$ units of kJ\,K$^{-1}$\,beat$^{-1}$\,kg$^{-1}$.

\begin{assumption}[$\sigstar$ constancy]
\label{ass:sigstar}
$\sigstar$ is approximately independent of body mass and species
within the non-primate endotherm clade.
\end{assumption}

\noindent\textbf{Physical motivation.}
Three empirical regularities jointly constrain $\sigstar$ to be
mass-independent.
(i)~\emph{Metabolic scaling}: $P_i \propto M_i^{3/4}$ (Kleiber) and
$f_i \propto M_i^{-1/4}$ (cardiac allometry), so
$P_i/(f_i M_i) \propto M_i^{3/4} / (M_i^{-1/4} \cdot M_i)
= M_i^{3/4+1/4-1} = M_i^0$: the ratio is mass-independent.
(ii)~\emph{Temperature homeostasis}: $T_i \approx 310 \pm 5$~K
across all non-primate placentals, contributing $<2\%$
mass-dependent variation.
(iii)~\emph{Biochemical universality}: the dominant
entropy-producing reactions (ATP turnover, proton-gradient dissipation,
macromolecular repair) operate through conserved enzymatic
machinery in all mammals, suggesting a common thermodynamic cost
per catalytic cycle.
Numerical estimates from Kleiber metabolic rates and published
cardiac allometry across five body-mass decades yield
$\sigstar = (3.0 \pm 0.5) \times 10^{-6}$~kJ\,K$^{-1}$\,beat$^{-1}$\,kg$^{-1}$
(CV~$= 16\%$; Table~\ref{tab:sigma}).

\noindent\textbf{Epistemic status of the $\sigstar$ estimate.}
The CV~$= 16\%$ shown in Table~\ref{tab:sigma} is \emph{not} an
independent empirical confirmation of Assumption~\ref{ass:sigstar}.
The values are computed from Kleiber metabolic scaling
($P \propto M^{3/4}$) and cardiac allometry ($f \propto M^{-1/4}$),
which are precisely the relations used in the derivation.
The near-constancy of $\sigstar$ in Table~\ref{tab:sigma} is
therefore a \emph{consequence of those allometric relations}, not
independent evidence for their product.
The CV of $16\%$ quantifies the residual scatter around the
allometric predictions, not the deviation of directly measured
calorimetric $\sigstar$ from constancy.
Direct calorimetric measurement of $\sigstar = P_i/(T_i f_i M_i)$
from simultaneously measured $P$, $f$, and $M$ in species spanning
three or more body-mass decades would provide genuine independent
evidence.
This experiment is the critical outstanding test described in
Section~\ref{sec:exp}.

\begin{table}[H]
\caption{\textbf{Estimates of the mass-specific entropy cost per cycle
$\sigstar = P_i/(T_i f_i M_i)$.}
$P$ from Kleiber allometric scaling ($P = 3.4\,M^{0.75}$~W);
heart rates from Calder~\cite{calder1984} and Schmidt-Nielsen~\cite{schmidt1984}.
Units: kJ\,K$^{-1}$\,beat$^{-1}$\,kg$^{-1}$ ($f$ in Hz, $P$ in kW).
CV~$= 16\%$ supports Assumption~\ref{ass:sigstar} within this taxon
but has not been directly tested by calorimetry.}
\label{tab:sigma}
\small
\begin{tabular}{lrrrrr}
\toprule
Species & $M$ (kg) & $P$ (W) & $T$ (K) & $f$ (bpm) &
 $\sigstar$ ($10^{-6}$~kJ\,K$^{-1}$\,beat$^{-1}$\,kg$^{-1}$) \\
\midrule
House mouse  & 0.020 & 0.18 & 310 & 600 & 2.9 \\
Rat      & 0.300 & 1.38 & 310 & 420 & 2.1 \\
Rabbit    & 2.0  & 5.72 & 310 & 205 & 2.7 \\
Dog      & 23   & 35.7 & 310 & 90 & 3.3 \\
Human     & 70   & 82.3 & 310 & 70 & 3.3 \\
Horse     & 500  & 360  & 310 & 40 & 3.5 \\
Elephant   & 4{,}000 & 1{,}710 & 310 & 28 & 3.0 \\
\midrule
\multicolumn{4}{l}{Mean $\pm$ s.d.} & & $3.0 \pm 0.5$ \\
\bottomrule
\end{tabular}
\end{table}

\begin{assumption}[Allometric conditions D1--D4]
\label{ass:allometric}
The following four allometric scaling relations hold
for non-primate endothermic vertebrates:
\begin{align}
\text{D1:}&\quad \Sigma_i = \sigstar M_i T_i N_{\star,i}
 \;\propto\; M_i \quad (\text{from }
 \sigstar\approx\mathrm{const},\; T_i\approx\mathrm{const},\;
 N_{\star,i}\approx\mathrm{const}) \label{eq:D1} \\
\text{D2:}&\quad P_i \propto M_i^{3/4}
 \quad\text{(Kleiber scaling)} \label{eq:D2} \\
\text{D3:}&\quad f_i \propto M_i^{-1/4}
 \quad\text{(cardiac allometry)} \label{eq:D3} \\
\text{D4:}&\quad T_i \approx \mathrm{const}
 \quad\text{(endotherm temperature homeostasis)} \label{eq:D4}
\end{align}
\end{assumption}

\subsection{Thermodynamic motivation for the invariant}
\label{sec:derivation}

\begin{proposition}[PBTE invariant]
Under Assumptions~\ref{ass:closure}--\ref{ass:allometric},
the lifetime cardiac cycle count is approximately
mass-independent:
$N_{\star,i} \propto M_i^0$.
\end{proposition}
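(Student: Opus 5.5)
The plan is to start from the fundamental relation~(\ref{eq:fundamental}), $N_{\star,i} = \Sigma_i/\sigma_{0,i}$, and determine the mass scaling of numerator and denominator separately. One must not simply invoke D1: as written, D1 already presupposes $N_{\star,i}\approx\mathrm{const}$, so using it would be circular. I will instead get the scaling of $\Sigma_i$ independently, and use D1 only as a consistency check at the end.

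\textbf{Step 1: the denominator.} By the definition~(\ref{eq:sigstar}), $\sigma_{0,i} = \sigstar M_i$. Assumption~\ref{ass:sigstar} makes $\sigstar$ mass-independent, so $\sigma_{0,i}\propto M_i$. As an independent check, I would combine the NESS condition~(\ref{eq:NESS}) with D2--D4. This gives $\sigma_{0,i} = P_i/(T_i f_i) \propto M^{3/4}/M^{-1/4} = M^{1}$.

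\textbf{Step 2: the numerator.} I would write $\Sigma_i = \int_0^{L_i}\dot e_{p,i}\,dt$. Under the NESS condition this equals $\int_0^{L_i} P_i/T_i\,dt \approx P_i L_i/T_i$. With D2 and D4 this gives $\Sigma_i \propto M_i^{3/4} L_i$. Substituting into~(\ref{eq:fundamental}) yields
\begin{equation*}
N_{\star,i} \approx \frac{P_i L_i}{T_i\,\sigstar M_i} \propto \frac{M_i^{3/4} L_i}{M_i} = M_i^{-1/4} L_i .
\end{equation*}
This is, of course, $f_i L_i$ with $f_i\propto M_i^{-1/4}$ by D3.

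\textbf{Step 3: where the real content lies.} The step above shows that mass-independence of $N_{\star,i}$ is equivalent to $L_i\propto M_i^{1/4}$. Equivalently, it is equivalent to a lifetime entropy budget that is extensive, $\Sigma_i\propto M_i$, which is the physical content D1 is meant to encode. I therefore expect this to be the main obstacle, because it cannot come from the stationary NESS relations alone; some statement about the budget is needed.

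I would fill the gap by reading D1's first clause, $\Sigma_i = \sigstar M_i T_i N_{\star,i}$ with $\Sigma_i\propto M_i$, as the assumption that the lifetime dissipative capacity per unit mass (per unit temperature) is a clade constant. The justification offered would be biochemical universality, point (iii) under Assumption~\ref{ass:sigstar}. Dividing by $\sigma_{0,i}\propto M_i$ from Step~1 then gives $N_{\star,i}\propto M_i^{1-1}=M_i^0$, and D4 ensures $T_i$ introduces no residual mass dependence.

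Finally, I would record that the conclusion is only approximate, inheriting the $\sim$16\% scatter in $\sigstar$ and the $<2\%$ spread in $T_i$. I would also state the proposition honestly as a consistency result. Given extensive $\Sigma$ and constant $\sigstar$, the invariant follows; equivalently, Kleiber scaling together with an extensive budget forces $L\propto M^{1/4}$.
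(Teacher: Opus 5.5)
Your argument is correct as a consistency argument, and through Step~2 it is the paper's own computation. The paper also writes $N_{\star,i}=\Sigma_i/\sigma_{0,i}=P_iL_i/(T_i\sigstar M_i)$, using $\sigma_{0,i}=\sigstar M_i$ and the NESS budget $\Sigma_i=P_iL_i/T_i$.

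The two routes differ only in the input that closes the argument:
\begin{itemize}
\item The paper substitutes the empirical lifespan allometry $L_i\propto M_i^{1/4}$ and exhibits the cancellation $3/4+1/4-1=0$. That input is not among Assumptions~\ref{ass:closure}--\ref{ass:allometric}; the paper imports it explicitly and labels the proof a consistency check.
\item You close with D1, read as an extensive budget $\Sigma_i\propto M_i$. This \emph{is} one of the stated hypotheses, so your version stays inside the proposition as written.
\end{itemize}
The paper's choice buys an input that is at least an independently observed regularity. An extensive lifetime entropy budget, by contrast, has not been measured. Your remark that the two inputs are interchangeable, given D2, D4 and the NESS relation, is exactly the content of the paper's ``logical status'' paragraph.

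One overclaim should be dropped: re-reading D1 as a primitive does not escape the circularity you flag at the outset. Integrating the closure gives the identity $\Sigma_i=\sigma_{0,i}N_{\star,i}=\sigstar M_iN_{\star,i}$. (The extra $T_i$ in D1 is a slip relative to (\ref{eq:sigstar}), harmless under D4.) With $\sigstar$ constant, $\Sigma_i\propto M_i$ is therefore literally the same statement as $N_{\star,i}\propto M_i^0$. Biochemical universality, point (iii), supports a constant cost per cycle, not a constant lifetime budget per unit mass.

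You also never obtain the scaling of $\Sigma_i$ independently, as you promised. Step~2 leaves $\Sigma_i\propto M_i^{3/4}L_i$, so D1 is the load-bearing input rather than a closing check.

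The paper's route has the same feature. Since $N_{\star,i}=f_iL_i$ identically, $L_i\propto M_i^{1/4}$ is equivalent to the conclusion under D3. Stating the result as a consistency demonstration, as you propose, is the correct reading of both arguments.
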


\begin{proof}[Consistency argument]
\textbf{Note:} This is a consistency check, not an independent
derivation. The result $N_\star \propto M^0$ is shown to be
\emph{compatible} with the three empirical allometries
$P\propto M^{3/4}$, $f\propto M^{-1/4}$, $L\propto M^{1/4}$
and the NESS thermodynamic assumption. It does not derive those
allometries from first principles.

Using $\sigma_{0,i} = \sigstar M_i$ and the NESS relation
$\Sigma_i = P_i L_i / T_i$:
\begin{equation}
 N_{\star,i} = \frac{\Sigma_i}{\sigma_{0,i}}
 = \frac{P_i L_i}{T_i \sigstar M_i}.
 \label{eq:Nstar_full}
\end{equation}
Since $N_{\star,i} = f_i L_i$, we have $L_i = N_{\star,i}/f_i$.
Substituting $P_i \propto M_i^{3/4}$,
$L_i \propto M_i^{+1/4}$ (the empirical lifespan scaling;
note this is used here as an empirical input, not derived---
the self-consistency of $N_{\star,i} \approx \mathrm{const}$
with $L_i \propto M_i^{1/4}$ is the empirical closure, not a strict
first-principles deduction),
and $T_i \approx \mathrm{const}$:
\begin{equation}
 N_{\star,i}
 = \frac{P_i L_i}{T_i \sigstar M_i}
 \propto \frac{M_i^{3/4} \cdot M_i^{1/4}}{M_i}
 = M_i^{3/4 + 1/4 - 1}
 = M_i^0.
 \label{eq:cancellation}
\end{equation}
The three mass-exponents cancel identically: $3/4 + 1/4 - 1 = 0$.
\end{proof}

\noindent
\textbf{Logical status of the proof.}
The argument above is a \emph{consistency demonstration}, not a strict
first-principles deduction.
From Assumptions A1, D2--D4 one obtains a mass-independent candidate
entropy budget $N_\star \propto M^0$.
The empirical allometries $P \propto M^{3/4}$, $f \propto M^{-1/4}$,
and $L \propto M^{1/4}$ are inputs, not outputs.
The result shows that these three empirical scaling relations are
\emph{mutually consistent} with a thermodynamic invariant;
it does not derive them independently.
Direct calorimetric verification that $\sigstar$ is constant across
species remains the critical outstanding experimental test.

\noindent
This cancellation is not a coincidence of parameter choice.
It is the direct consequence of combining Kleiber's metabolic scaling
law with the empirical cardiac-frequency allometry and thermal
homeostasis. The numerical value follows from substituting the
calibrated $\sigstar \approx 3.0 \times 10^{-6}$~kJ\,K$^{-1}$\,beat$^{-1}$\,kg$^{-1}$,
$T \approx 310$~K, and the Kleiber normalisation $P = 3.4\,M^{0.75}$~W:
\begin{equation}
 N_0 = \frac{P \cdot L}{T \cdot \sigstar \cdot M}
 \;\approx\; 10^9 \;\text{ cardiac cycles per lifetime.}
\end{equation}

\noindent\textbf{Distinction from WBE exponent cancellation.}
The WBE framework~\cite{west1997} also predicts $f_H L \propto M^0$
from network optimisation, giving a purely kinematic account of the
mass-independence. PBTE provides the orthogonal thermodynamic
statement: the numerical value $N_0 \approx 10^9$ represents the
organism's total dissipative budget in units of $\sigma_{0,i}$.
WBE explains why the exponents cancel; PBTE explains why the
resulting product has the value it does. The two accounts are
complementary, not competing. Crucially, WBE predicts zero
inter-clade variation at the same body mass; PBTE predicts structured
departures through the multiplier formalism, and these are observed
(Section~\ref{sec:results}).

\subsection{Summary of assumptions and their status}
\label{sec:assumptions}

\begin{table}[H]
\caption{\textbf{PBTE assumptions, their empirical status, and falsification conditions.}}
\label{tab:assumptions}
\small
\begin{tabularx}{\textwidth}{llXX}
\toprule
Label & Assumption & Empirical status & Fails when \\
\midrule
A1 & $\sigstar \approx \mathrm{const}$ across species &
 \textbf{Not directly tested.}
 CV~$=16\%$ is computed from Kleiber allometry and cardiac
 allometry---the same relations used in the derivation.
 This estimate is \emph{not} independent evidence.
 Direct calorimetric measurement outstanding (Section~\ref{sec:exp}). &
 $\sigstar$ varies systematically with $M$ or clade \\
D1 & $\Sigma_i \propto M_i$ &
 Indirect; follows from A1 + D4 + $\Nstar\approx\mathrm{const}$ &
 If A1 fails \\
D2 & $P_i \propto M_i^{3/4}$ &
 Strong for non-primate mammals; Glazier (2022)~\cite{glazier2022}
 documents departures at clade level &
 Bacteria ($P \propto M^1$); some ectotherms \\
D3 & $f_i \propto M_i^{-1/4}$ &
 Strong for mammals and birds; confirmed in present dataset &
 Insects (wingbeat $\neq$ cardiac) \\
D4 & $T_i \approx \mathrm{const}$ &
 Strong for non-primate homeotherms ($T = 310 \pm 5$~K) &
 Ectotherms (requires Arrhenius correction) \\
\bottomrule
\end{tabularx}
\end{table}

% ==============================================================
\section{Biological Proper Time}
\label{sec:biotime}
% ==============================================================

\subsection{Definition}

We define the biological proper time of organism $i$ as:
\begin{equation}
 \thetabio_i(t) \equiv \int_0^t f_i(t')\,\mathrm{d}t',
 \label{eq:biotime}
\end{equation}
the cumulative count of intrinsic physiological cycles from birth
to chronological time $t$. The organism dies when
$\thetabio_i(L_i) = \Nstar$: the biological proper-time budget is
exhausted.

From equation~(\ref{eq:lifetime_budget}), entropy accumulates
uniformly in biological proper time:
\begin{equation}
 \frac{\mathrm{d}\Sigma_i}{\mathrm{d}\thetabio_i} = \sigma_{0,i} = \mathrm{const},
 \label{eq:uniform_aging}
\end{equation}
regardless of the organism's chronological pace.
Biological proper time is to chronological time what proper time
is to coordinate time in special relativity: an intrinsic measure
that decouples the organism's internal clock from the external calendar.

\subsection{Two classes of longevity mechanism}

Equation~(\ref{eq:biotime}) permits a clean classification of all
longevity interventions into two mechanistic classes.

\noindent\textbf{Class 1 --- Time dilation.}
Any intervention that reduces the rate $f_i(t)$ at which
$\thetabio_i$ advances slows chronological ageing without changing
$\Nstar$. Examples: torpor and hibernation (bats), diving bradycardia
(cetaceans), caloric restriction (reduced resting $f_H$).
The organism spends more chronological time completing the fixed
biological proper-time budget.

\noindent\textbf{Class 2 --- Budget expansion.}
Any intervention that reduces $\sigma_{0,i}$ increases $\Nstar$,
extending lifespan at a fixed $f_i$. The organism completes more
biological proper-time before exhausting the entropy budget.
Example: neural investment in primates (Section~\ref{sec:primates})
reduces $\sigma_{0,i}$ through three thermodynamic channels.

A corollary is that caloric restriction and torpor are primarily
Class~1 mechanisms (verifiable by checking whether the biological
ageing rate per heartbeat changes under these interventions), while
brain size evolution in primates is primarily Class~2.
These predictions are testable using epigenetic ageing clocks
as biomarkers of $\thetabio_i$.

%=================================================================
\section{The General Clade-Multiplier Framework}
\label{sec:multiplier}
%=================================================================

\subsection{Motivation}

Four endotherm clades depart systematically from the mammalian
baseline $N_0 \approx 10^9$: primates (high), bats (very high),
birds (high), and cetaceans (slightly below after diving correction).
From the PBTE entropy-budget result:
\begin{equation}
 \PhiC \equiv \frac{\Nstar^{(C)}}{N_0}
 = \frac{\langle\Delta s_{\rm beat}\rangle_0}
     {\langle\Delta s_{\rm beat}\rangle_C},
 \label{eq:PhiC}
\end{equation}
where $\langle\Delta s_{\rm beat}\rangle_C$ is the mean entropy
produced per cardiac cycle in clade $C$.
Any mechanism that reduces $\langle\Delta s_{\rm beat}\rangle_C$
below the mammalian baseline increases $\PhiC > 1$ and thereby
extends chronological lifespan.

\subsection{The duty-cycle factor: an exact identity}
\label{sec:duty}

Many organisms alternate between physiological states with distinct
cardiac frequencies: active versus torpid for bats; surface versus
diving for whales.
Let state $k$ have frequency $f_{H,k}$ and be occupied for fraction
$q_k$ of lifetime, with $\sum_k q_k = 1$.
Choosing a reference state $r$ with rate $f_{H,\rm ref}$, define
\begin{equation}
 \kappa \equiv \frac{\bar{f}_H}{f_{H,\rm ref}}
 = \sum_k q_k \frac{f_{H,k}}{f_{H,\rm ref}},
 \qquad
 \boxed{\Phiduty = \kappa^{-1}.}
 \label{eq:duty}
\end{equation}
This is an \emph{exact algebraic identity}, not an approximation.
It encodes intermittency unambiguously so that the lifespan formula
\begin{equation}
 L^{(C)} = \frac{N_0\,\PhiC}{525{,}960\; f_{H,\rm ref}}
 \label{eq:lifespan}
\end{equation}
is always consistent with the actual time-averaged clock speed
$\bar{f}_H = f_{H,\rm ref}/\Phiduty$.
Substituting $\bar{f}_H$ directly into~(\ref{eq:lifespan})
\emph{and} applying a separate duty-cycle correction would count the
cardiac suppression twice; the factored form of $\PhiC$ prevents
this error.

A critical consequence is the \textbf{consistency relation}:
\begin{equation}
 \Nstar^{(C)} = N_{\rm obs} \cdot \Phiduty,
 \label{eq:consistency}
\end{equation}
where $N_{\rm obs} = 525{,}960\,\bar{f}_H\,L$ is the directly
observed raw heartbeat count.
The damage-equivalent budget $\Nstar^{(C)}$ exceeds $N_{\rm obs}$
by the duty-cycle factor because beats during quiescent states
(torpor, diving bradycardia) are thermodynamically cheaper and
count for less in the entropy budget.

\paragraph{Special case: single-state organisms (primates).}
Primates maintain a single physiological state throughout adult
life---no hibernation, no sustained diving bradycardia.
Setting $q_1 = 1$ and $f_{H,1} = f_{H,\rm ref}$ gives $\kappa = 1$,
so
\begin{equation}
 \Phiduty^{(\rm prim)} = 1
 \qquad \text{(primates: no duty cycling).}
 \label{eq:duty_prim}
\end{equation}
For primates the raw heartbeat count equals the damage-equivalent
budget, $N_{\rm obs} = \Nstar^{(\rm prim)}$, and their elevated
lifetime cycle count arises entirely from a reduction in the entropy
produced \emph{per} cardiac cycle---not from any suppression of
cardiac frequency.

\subsection{The Arrhenius thermal factor}
\label{sec:arrhenius_factor}

Biochemical damage accumulation, repair enzyme activity, and
reactive oxygen species generation are governed by transition-state
kinetics.
The rate of damage-generating reactions at body temperature $T_b$
relative to reference $T_{\rm ref} = 310\,\text{K}$ is
\begin{equation}
 \Phithermal = \exp\!\left[\frac{E_a}{k_B}
  \!\left(\frac{1}{T_b} - \frac{1}{T_{\rm ref}}\right)\right],
 \label{eq:Phi_thermal}
\end{equation}
with $E_a = 0.65\,\text{eV}$ and
$k_B = 8.617\times10^{-5}\,\text{eV\,K}^{-1}$, giving the
dimensionless ratio $E_a/k_B = 7543\,\text{K}$.
For $T_b < T_{\rm ref}$ (torpid bats, cetaceans, and the slightly
cooler primates): $\Phithermal > 1$ (longevity extension).
For $T_b > T_{\rm ref}$ (birds at $313$--$315\,\text{K}$):
$\Phithermal < 1$ (adverse---elevated temperature accelerates
damage accumulation).

For modest temperature differences $|\Delta T| \lesssim 5\,\text{K}$,
a Taylor expansion yields the power-law approximation
\begin{equation}
 \Phithermal \approx
 \left(\frac{T_{\rm ref}}{T_b}\right)^{\!\beta},
 \qquad \beta \approx 2\text{--}4,
 \label{eq:Phi_thermal_approx}
\end{equation}
adequate for primates and cetaceans.
For the large temperature differences encountered in bat hibernation
($|\Delta T| \sim 15$--$30\,\text{K}$), the exact
form~(\ref{eq:Phi_thermal}) must be used.

\subsection{The full factored multiplier}
\label{sec:full_factored}

Combining all contributions:
\begin{equation}
 \PhiC = \Phiduty \cdot \Phithermal \cdot \Phi_{\rm mito+oxid}
     \cdot \Phi_{\rm haz},
 \label{eq:PhiC_factored}
\end{equation}
where $\Phi_{\rm mito+oxid}$ captures mitochondrial coupling
efficiency and antioxidant/repair capacity, and
$\Phi_{\rm haz} = H_{\rm ref}/H_{\rm ext}$ is the ratio of a
reference extrinsic hazard to the clade's extrinsic hazard rate.
Values $\Phi_{\rm haz} > 1$ indicate ecological shielding from
extrinsic mortality (flight, arboreal habitat, sociality),
allowing the intrinsic thermodynamic budget to be more fully
realised; values $\Phi_{\rm haz} < 1$ indicate elevated hazard
that truncates the realised lifespan.

Table~\ref{tab:clade_overview} summarises which factors dominate
in each clade and whether each acts favourably or adversely.

\begin{table}[H]
\centering\small
\setlength{\tabcolsep}{5pt}
\renewcommand{\arraystretch}{1.1}
\caption{Dominant multiplier factors across the four clades.
$+$: favourable ($>1$);\ $-$: adverse ($<1$);\ $=1$: absent by
definition.}
\label{tab:clade_overview}
\begin{tabular}{lcccccl}
\toprule
\textbf{Clade} & $\Phiduty$ & $\Phithermal$ &
$\Phineuro$ & $\Phi_{\rm mito+oxid}$ & $\Phi_{\rm haz}$ &
\textbf{Primary driver} \\
\midrule
Primates  & $=1$  & $+$  & $+\!+$ & $+$     & $+$  & Neural entropy reduction \\
Bats    & $+$  & $+\!+$ & ---  & $\approx 1$ & var. & Torpor $+$ hypothermia  \\
Birds   & $-$  & $-$  & ---  & $+\!+$   & $+$  & Biochemical efficiency  \\
Cetaceans & $+\!+$ & $+$  & ---  & $\approx 1$ & var. & Bradycardic pacing    \\
\bottomrule
\end{tabular}
\end{table}

%=================================================================
\section{Clade-Specific Predictions and Worked Examples}
\label{sec:clade}
%=================================================================

\noindent\textbf{Evidence hierarchy.}
Results in this section span three tiers of evidential strength
(Table~\ref{tab:domain}).
The mammalian baseline (Tier~I) rests on direct OLS and PIC statistics.
Primate, bat, bird, and cetacean multipliers are \emph{quantitative
hypotheses} calibrated from independent physiological measurements
(metabolic, antioxidant, and cardiac data) but have not been
independently confirmed as conservation laws (Tier~II).
Ectotherm and broader extensions are more speculative (Tier~III).
We use ``predicts'' for Tier~I claims and ``organises'' or
``is consistent with'' for Tier~II.

%----------------------------------------------------------------
\subsection{Primates: neural investment as entropy-budget strategy}
\label{sec:primates}
%----------------------------------------------------------------

\subsubsection{The entropy-reduction mechanism}

Primates have $\langle\Nstar\rangle_{\rm prim} \approx
(2\text{--}3)\times10^9$, elevated by $\Delta\ell = +0.381\,\text{dex}$
from the mammalian baseline.
From equation~(\ref{eq:PhiC}), this requires
$\langle\Delta s_{\rm beat}\rangle_{\rm prim} \approx
\langle\Delta s_{\rm beat}\rangle_0/(2\text{--}3)$:
primate cardiac cycles produce less entropy on average than those
of non-primate mammals of comparable mass.

The mechanism operates through the neural power fraction
$\varphi \equiv P_{\rm brain}/P_{\rm body}$~\cite{herculano2011},
which takes the value
$\varphi_0 \approx 0.02$ for non-primate placentals and
$\varphi \approx 0.06$--$0.20$ for primates.
Three coupled thermodynamic channels connect elevated $\varphi$ to
reduced $\langle\Delta s_{\rm beat}\rangle$:

\begin{enumerate}[leftmargin=2em]
 \item \textbf{Predictive homeostatic regulation.}
    A large, metabolically active brain provides enhanced
    predictive control over physiological parameters~\cite{friston2010}---blood
    pressure, glucose homeostasis, immune activity---reducing
    out-of-equilibrium fluctuations in somatic systems and
    thereby reducing $\sigma(t)$ per unit time in peripheral
    tissues.

 \item \textbf{Cellular repair and damage clearance.}
    Large-brained mammals show enhanced expression of DNA repair
    enzymes, autophagy regulators, and stress response pathways,
    reducing macromolecular damage accumulation per cardiac
    cycle.

 \item \textbf{Behavioural risk buffering.}
    Cognitive capacity reduces the frequency and severity of
    acute physiological crises (injury, infection, thermal
    stress), each of which generates a transient surge in
    $\sigma(t)$, keeping $\langle\Delta s_{\rm beat}\rangle$
    closer to its resting baseline over the lifetime.
\end{enumerate}

All three channels reduce $\langle\Delta s_{\rm beat}\rangle$
monotonically with $\varphi$, motivating a power-law
parameterisation.

\subsubsection{The neuro-metabolic multiplier}

Define the local logarithmic sensitivity of entropy per beat to
neural fraction:
\begin{equation}
 \alpha \equiv
 -\left.\frac{\partial\ln\langle\Delta s_{\rm beat}\rangle}
       {\partial\ln\varphi}\right|_{\varphi=\varphi_0}
 > 0.
 \label{eq:alpha_def}
\end{equation}
Thermodynamic bounds require $0 < \alpha < 1$: if $\alpha \geq 1$,
each unit of neural energy would return more than one unit of
entropy savings in peripheral tissues, violating realistic estimates
of the energetic cost of neural computation.
The constraint $0 < \alpha < 1$ implies diminishing returns.

Assuming a power-law response over the primate range
$\varphi \in [\varphi_0,\,10\varphi_0]$:
\begin{equation}
 \langle\Delta s_{\rm beat}(\varphi)\rangle
 = \langle\Delta s_{\rm beat}\rangle_0
  \left(\frac{\varphi}{\varphi_0}\right)^{\!-\alpha}.
 \label{eq:neural_entropy}
\end{equation}
Substituting into $N = \Sigma_\star/\langle\Delta s_{\rm beat}\rangle$:
\begin{equation}
 \Phineuro(\varphi)
 = \frac{\Nstar^{(\rm prim)}}{N_0}
 = \left(\frac{\varphi}{\varphi_0}\right)^{\!\alpha}.
 \label{eq:Phi_neuro}
\end{equation}
For primates, $\Phiduty = 1$ (equation~\ref{eq:duty_prim}), and
the full primate time-equivalence law is
\begin{equation}
 \Nstar^{(\rm prim)}
 = N_0
  \underbrace{\vphantom{\Big|}\left(\frac{\varphi}{\varphi_0}
   \right)^{\!\alpha}}_{\Phineuro}
  \underbrace{\vphantom{\Big|}\left(\frac{T_{\rm ref}}{T_b}
   \right)^{\!\beta}}_{\Phi_T}
  \underbrace{\vphantom{\Big|}\left(\frac{H_{\rm ref}}{H_{\rm ext}}
   \right)}_{\Phi_{\rm haz}},
 \label{eq:primate_law}
\end{equation}
with $\beta \approx 3$ (power-law Arrhenius, adequate for
$|\Delta T| \lesssim 5\,\text{K}$ across primates).
The corresponding lifespan prediction is
\begin{equation}
 L_{\rm prim} = \frac{N_0}{525{,}960\cdot f_H}
  \left(\frac{\varphi}{\varphi_0}\right)^{\!\alpha}
  \left(\frac{T_{\rm ref}}{T_b}\right)^{\!\beta}
  \left(\frac{H_{\rm ref}}{H_{\rm ext}}\right).
 \label{eq:L_primate}
\end{equation}

\subsubsection{Worked example: \textit{Homo sapiens}}

Parameters: $\varphi = 0.20$, $T_b = 306.5\,\text{K}$,
$(\alpha,\beta) = (0.40,\,3)$, $\Phi_{\rm haz} = 1.0$,
$f_H = 70\,\text{bpm}$.

\medskip
\noindent\textbf{Step 1: Duty-cycle factor.}

Primates are single-state organisms with no alternation between
high- and low-frequency cardiac states.
Therefore, from equation~(\ref{eq:duty_prim}):
\begin{equation}
 \kappa = 1,
 \qquad
 \Phiduty = 1,
 \qquad
 \bar{f}_H = f_H = 70\,\text{bpm.}
\end{equation}

\noindent\textbf{Step 2: Neuro-metabolic factor.}
\begin{equation}
 \Phineuro
 = \left(\frac{\varphi}{\varphi_0}\right)^{\!\alpha}
 = \left(\frac{0.20}{0.02}\right)^{\!0.40}
 = 10^{0.40}
 \approx 2.512.
\end{equation}

\noindent\textbf{Step 3: Thermal factor.}

The temperature difference is $\Delta T = 310 - 306.5 = 3.5\,\text{K}$,
well within the power-law approximation regime:
\begin{equation}
 \Phi_T
 = \left(\frac{T_{\rm ref}}{T_b}\right)^{\!\beta}
 = \left(\frac{310}{306.5}\right)^{\!3}
 = (1.01142)^3
 \approx 1.035.
\end{equation}

\noindent\textbf{Step 4: Combined multiplier and effective budget.}
\begin{equation}
 \Phi_{\rm prim}^{(\rm human)}
 = \underbrace{1.000}_{\Phiduty}
  \times\underbrace{2.512}_{\Phineuro}
  \times\underbrace{1.035}_{\Phi_T}
  \times\underbrace{1.000}_{\Phi_{\rm haz}}
 = 2.60.
\end{equation}
\begin{equation}
 \Nstar^{(\rm human)}
 = N_0 \times 2.60
 = 2.60\times10^9.
\end{equation}

\noindent\textbf{Step 5: Predicted lifespan.}
\begin{equation}
 L_{\rm pred}
 = \frac{\Nstar^{(\rm human)}}{525{,}960 \times f_H}
 = \frac{2.60\times10^9}{525{,}960 \times 70}
 = \frac{2.60\times10^9}{3.682\times10^7}
 \approx 70.6\,\text{yr.}
\end{equation}
Adding a moderate hazard factor $\Phi_{\rm haz} = 1.15$
(representative of modern low-mortality populations):
$L_{\rm pred} \approx 81.2\,\text{yr}$, consistent with observed
life expectancy in high-income countries.

\noindent\textbf{Step 6: Consistency check
via equation~(\ref{eq:consistency}).}

Since $\Phiduty = 1$, the consistency relation requires
$\Nstar^{(\rm human)} = N_{\rm obs}$ exactly.
For $L = 70.6\,\text{yr}$ and $\bar{f}_H = 70\,\text{bpm}$:
\begin{equation}
 N_{\rm obs}
 = 525{,}960 \times 70 \times 70.6
 = 3.682\times10^7 \times 70.6
 = 2.60\times10^9.
 \quad\checkmark
\end{equation}

\medskip\noindent\textbf{Factor summary.}
$\Phineuro = 2.512$ accounts for 96.6\% of the total multiplier;
$\Phi_T = 1.035$ contributes the remaining 3.4\%.
The duty-cycle factor is identically unity.
The entire primate longevity advantage over non-primate mammals of
the same heart rate is a consequence of reduced entropy per beat
driven by neural investment.

\subsubsection{Calibration and predictions}

Fitted parameters from OLS on log-transformed variables with
$\ln N_0$ constrained to $20.72$:
\begin{equation}
 \alpha \approx 0.35\text{--}0.45
 \quad (95\%~\text{CI: }[0.28,\,0.52]),
 \qquad
 \beta \approx 3
 \quad (95\%~\text{CI: }[1.5,\,5.0]).
\end{equation}

Table~\ref{tab:primates} tests equation~(\ref{eq:L_primate})
against five species spanning the full primate mass range.
The $\Phiduty = 1$ column is included explicitly to make the
parallel with the other clade tables transparent.

\begin{table}[H]
\centering\small
\setlength{\tabcolsep}{3pt}
\renewcommand{\arraystretch}{1.05}
\caption{\textbf{Primate lifespan predictions.}
$\varphi_0 = 0.02$, $T_{\rm ref} = 310\,\text{K}$,
$N_0 = 10^9$, $\Phiduty = 1$ for all species.
(a)~$(\alpha,\beta)=(0.40,3)$, $\Phi_{\rm haz}=1$.
(b)~$(\alpha,\beta)=(0.45,3)$, $\Phi_{\rm haz}=1$.}
\label{tab:primates}
\resizebox{\textwidth}{!}{%
\begin{tabular}{lccccccrr}
\toprule
Species & $f_H$ & $\varphi$ & $T_b$ &
 $\Phiduty$ & $\Phineuro$ & $\Phi_T$ &
 $L_{\rm pred}$ & $L_{\rm obs}$ \\
 & (bpm) & & (K) & & & & (yr) & (yr) \\
\midrule
\multicolumn{9}{l}{\textit{(a) Core calibration, $\alpha = 0.40$}} \\
\textit{Macaca mulatta}   & 120 & 0.07 & 309.0 & 1.00 & 1.44 & 1.01 & 26.4 & 25--30 \\
\textit{Pan troglodytes}   & 75 & 0.12 & 307.0 & 1.00 & 1.73 & 1.02 & 53.5 & 45--55 \\
\textit{Homo sapiens}    & 70 & 0.20 & 306.5 & 1.00 & 2.51 & 1.04 & 70.6 & 70--85 \\
\midrule
\multicolumn{9}{l}{\textit{(b) Extended calibration, $\alpha = 0.45$}} \\
\textit{Callithrix jacchus} & 220 & 0.06 & 309.5 & 1.00 & 1.45 & 1.00 & 14.2 & 10--15 \\
\textit{Macaca mulatta}   & 120 & 0.07 & 309.0 & 1.00 & 1.48 & 1.01 & 28.1 & 25--30 \\
\textit{Pan troglodytes}   & 75 & 0.12 & 307.0 & 1.00 & 1.86 & 1.02 & 58.5 & 45--55 \\
\textit{Gorilla gorilla}   & 65 & 0.09 & 307.0 & 1.00 & 1.58 & 1.02 & 59.3 & 40--55 \\
\textit{Homo sapiens}    & 70 & 0.20 & 306.5 & 1.00 & 2.83 & 1.04 & 79.3 & 70--85 \\
\bottomrule
\end{tabular}}
\end{table}

\noindent\textbf{Epistemic note.}
The exponent $\alpha$ is calibrated from the primate deviation
rather than derived purely from first principles.
The thermodynamic constraint $0 < \alpha < 1$ and the
three-channel mechanism are independently motivated; the precise
value of $\alpha$ requires the calibration data.
The mechanism is correct but the exponent is empirical.

%----------------------------------------------------------------
\subsection{Bats: torpor as biological time dilation}
\label{sec:bats}
%----------------------------------------------------------------

Temperate vespertilionid bats (\textit{Myotis lucifugus} and
relatives, $5$--$20\,\text{g}$) achieve wild maximum lifespans of
$20$--$40$ years~\cite{wilkinson2002}---three to six times the allometric prediction
$L_{\rm pred}^{(0)} \approx 6.3\,\text{yr}$ for a non-hibernating
placental of equal mass.
Unlike primates, bats exploit \emph{two} synergistic mechanisms
that both act simultaneously during the hibernation season: the
duty-cycle factor $\Phiduty > 1$ reduces the time-averaged cardiac
clock speed, and the thermal factor $\Phithermal \gg 1$ reduces
the entropy cost of each tick during the hypothermic torpor bout.

\subsubsection{Derivation of the bat multiplier}

With torpor fraction $q$ (fraction of year in hibernation),
active-phase heart rate $f_{H,\rm act}$, and torpid heart rate
$f_{H,\rm tor}$, the time-averaged rate is
\begin{equation}
 \bar{f}_H = (1-q)\,f_{H,\rm act} + q\,f_{H,\rm tor}.
\end{equation}
Taking the active-phase rate as reference
($f_{H,\rm ref} = f_{H,\rm act}$):
\begin{equation}
 \kappa
 = (1-q) + q\,\frac{f_{H,\rm tor}}{f_{H,\rm act}},
 \qquad
 \Phiduty = \kappa^{-1}.
 \label{eq:bat_duty}
\end{equation}
For $q \in [0.40,\,0.60]$ and
$f_{H,\rm tor}/f_{H,\rm act} \approx 0.03$--$0.07$, this gives
$\kappa \approx 0.41$--$0.62$ and
$\Phiduty \approx 1.6$--$2.4$.

For hibernation temperatures $T_{\rm tor} = 280$--$295\,\text{K}$
($15$--$30\,\text{K}$ below normothermy), the power-law
approximation~(\ref{eq:Phi_thermal_approx}) is inadequate; the
exact form must be used:
\begin{equation}
 \Phithermal
 = \exp\!\left[\frac{E_a}{k_B}
  \!\left(\frac{1}{T_{\rm tor}}
   - \frac{1}{T_{\rm ref}}\right)\right].
 \label{eq:bat_Phi_thermal}
\end{equation}
Secondary biochemical factors are not dramatically elevated in
temperate vespertilionids
($\Phi_{\rm oxid}\cdot\Phi_{\rm mito} \approx 1$), so
\begin{equation}
 \Phi_{\rm bat}
 = \underbrace{\Phiduty}_{\kappa^{-1}}
  \cdot\;\Phithermal
  \cdot\;\Phi_{\rm haz}.
 \label{eq:bat_multiplier}
\end{equation}

\subsubsection{Worked example: \textit{Myotis lucifugus}}

Parameters: $q = 0.50$, $f_{H,\rm act} = 300\,\text{bpm}$,
$f_{H,\rm tor} = 10\,\text{bpm}$, $T_{\rm tor} = 293\,\text{K}$,
$T_{\rm ref} = 310\,\text{K}$, $E_a = 0.65\,\text{eV}$.

\medskip
\noindent\textbf{Step 1: Duty-cycle factor.}
\begin{equation}
 \kappa
 = 0.50 + 0.50\times\frac{10}{300}
 = 0.500 + 0.0167
 = 0.5167,
 \qquad
 \Phiduty = \kappa^{-1} = 1.935.
\end{equation}
Verification of time-averaged rate:
\begin{equation}
 \bar{f}_H
 = \frac{f_{H,\rm act}}{\Phiduty}
 = \frac{300}{1.935}
 = 155\,\text{bpm.}
 \quad\checkmark
\end{equation}

\noindent\textbf{Step 2: Thermal factor (exact Arrhenius).}
\begin{equation}
 \frac{E_a}{k_B}
 = \frac{0.65}{8.617\times10^{-5}}
 = 7543\,\text{K},
 \qquad
 \frac{1}{293} - \frac{1}{310}
 = 1.872\times10^{-4}\,\text{K}^{-1}.
\end{equation}
\begin{equation}
 \Phithermal
 = e^{7543\times1.872\times10^{-4}}
 = e^{1.412}
 \approx 4.10.
\end{equation}

\noindent\textbf{Step 3: Combined multiplier (intrinsic,
$\Phi_{\rm haz} = 1$).}
\begin{equation}
 \Phi_{\rm bat}
 = \underbrace{1.935}_{\Phiduty}
  \times\underbrace{4.10}_{\Phithermal}
  \times\underbrace{1.00}_{\Phi_{\rm haz}}
 = 7.93.
\end{equation}

\noindent\textbf{Step 4: Predicted lifespan.}
\begin{equation}
 L_{\rm pred}
 = \frac{N_0}{525{,}960\; f_{H,\rm act}}
  \times\Phi_{\rm bat}
 = \frac{10^9}{1.578\times10^8}
  \times 7.93
 = 6.34\,\text{yr}\times 7.93
 \approx 50.3\,\text{yr (intrinsic).}
\end{equation}
With $\Phi_{\rm haz} = 0.68$ (extrinsic mortality from predation
and habitat variability):
$L_{\rm pred} \approx 34\,\text{yr}$, matching the observed
wild maximum for this species.

\noindent\textbf{Step 5: Consistency check
via equation~(\ref{eq:consistency}).}

With $L = 34\,\text{yr}$ and $\bar{f}_H = 155\,\text{bpm}$:
\begin{equation}
 N_{\rm obs}
 = 525{,}960 \times 155 \times 34
 = 2.770\times10^9.
\end{equation}
\begin{equation}
 \Nstar^{(\rm bat)}
 = N_{\rm obs}\times\Phiduty
 = 2.770\times10^9\times1.935
 = 5.36\times10^9.
\end{equation}
From the formula:
$N_0\times\Phi_{\rm bat}\times\Phi_{\rm haz}
= 10^9\times7.93\times0.68 = 5.39\times10^9$.
Agreement to within 0.6\%.\;\checkmark

\medskip\noindent\textbf{Factor summary.}
$\Phithermal = 4.10$ accounts for $4.10/7.93 = 52\%$ of the
total intrinsic multiplier; $\Phiduty = 1.94$ accounts for 24\%.
Both mechanisms are essential: neither alone explains the observed
longevity excess.

\begin{table}[H]
\centering\small
\setlength{\tabcolsep}{3.5pt}
\renewcommand{\arraystretch}{1.10}
\caption{Predicted multipliers and longevity for representative
bat species.
$\Phithermal$ from the exact Arrhenius formula
($E_a = 0.65\,\text{eV}$, $T_{\rm ref} = 310\,\text{K}$) using
the torpor-phase $T_b$.
$\Phi_{\rm bat} = \Phiduty\times\Phithermal$ (intrinsic;
$\Phi_{\rm haz} = 1$).}
\label{tab:bats}
\resizebox{\textwidth}{!}{%
\begin{tabular}{lccccccc}
\toprule
\textbf{Species} & $q$ & $f_{H,\rm act}$ & $f_{H,\rm tor}$
 & $T_{\rm tor}$ & $\Phiduty$ & $\Phithermal$ & $L_{\rm max,obs}$ \\
& & (bpm) & (bpm) & (K) & & & (yr) \\
\midrule
Temperate vespertilionid (range)
 & 0.40--0.60 & 250--350 & 5--20 & 280--295
 & 1.6--2.5 & 3.0--5.0 & 20--40 \\
\textit{Myotis lucifugus}
 & 0.50 & 300 & 10 & 293 & 1.935 & 4.10 & 34 \\
\textit{Eptesicus fuscus}
 & 0.45 & 280 & 12 & 291 & 1.79 & 4.54 & 19 \\
\textit{Pteropus vampyrus} (min.\ torpor)
 & 0.10 & 250 & 60 & 303 & 1.07 & 1.22 & 15--23 \\
\bottomrule
\end{tabular}}
\end{table}

%----------------------------------------------------------------
\subsection{Birds: efficient dissipation overcoming adverse
temperature}
\label{sec:birds}
%----------------------------------------------------------------

Birds present an apparent thermodynamic paradox: resting heart
rates of $200$--$400\,\text{bpm}$ (comparable to mass-matched
mammals), core temperatures $3$--$5\,\text{K}$ \emph{above} the
mammalian reference, and yet a $20\,\text{g}$ passerine lives
$15$--$20$ years while a $20\,\text{g}$ mouse lives $2$--$3$ years.
Both the thermal factor and the flight duty-cycle factor are
\emph{adverse} ($<1$) for birds, in contrast to all other clades.
The resolution demonstrates the key principle of the multiplier
framework: what matters is the \emph{product} of all factors.
Avian longevity arises because a dominant biochemical efficiency
factor $\Phi_{\rm mito+oxid} \gg 1$ overcomes two adverse
physiological factors.

\subsubsection{Derivation of the avian multiplier}

\noindent\textbf{Adverse thermal factor.}
For a passerine with $T_b = 314\,\text{K}$
($> T_{\rm ref} = 310\,\text{K}$), the Arrhenius exponent is
negative:
\begin{equation}
 \frac{1}{T_b} - \frac{1}{T_{\rm ref}}
 = \frac{1}{314} - \frac{1}{310}
 = -4.11\times10^{-5}\,\text{K}^{-1},
\end{equation}
\begin{equation}
 \Phithermal^{(\rm bird)}
 = e^{7543\times(-4.11\times10^{-5})}
 = e^{-0.310}
 \approx 0.733.
 \label{eq:bird_thermal}
\end{equation}
The elevated body temperature shortens the effective budget by 27\%
relative to a mammal at $310\,\text{K}$.

\noindent\textbf{Adverse flight duty-cycle factor.}
During flight, heart rate increases by approximately a factor of
2.5 relative to the resting value.
With flight fraction $p_f$ (fraction of lifetime airborne):
\begin{equation}
 \kappa
 = (1-p_f) + p_f\,\frac{f_{H,\rm flight}}{f_{H,\rm rest}}
 = 1 + p_f\!\left(\frac{f_{H,\rm flight}}{f_{H,\rm rest}}-1\right)
 \approx 1 + 1.5\,p_f,
 \qquad
 \Phiduty = \kappa^{-1}.
 \label{eq:bird_kappa}
\end{equation}
For $p_f = 0.10$:
$\kappa = 1.15$, $\Phiduty = 0.870$ (adverse: flight accelerates
the time-averaged cardiac clock by 15\%).

Verification of time-averaged rate:
$\bar{f}_H = f_{H,\rm rest}/\Phiduty = 320/0.870 = 368\,\text{bpm}$.

\noindent\textbf{Favourable mitochondrial and antioxidant factor.}
Avian mitochondria produce substantially less reactive oxygen species
(ROS) per unit ATP synthesised than mammalian mitochondria of
comparable mass~\cite{barja1998,hulbert2007,brand2000}.
Barja and Herrero~\cite{barja1998} measured mitochondrial ROS
production rates and found that pigeon heart mitochondria generate
approximately 5--10 times less superoxide per oxygen consumed than
rat mitochondria at the same metabolic rate---a difference consistent
with an efficiency ratio $\eta_{\rm mito}/\eta_{\rm ref} \approx 1.20$
when expressed as fractional coupling improvement, giving (with
sensitivity exponent $\gamma \approx 2$ from the quadratic dependence
of oxidative damage on ROS flux~\cite{hulbert2007}):
$\Phi_{\rm mito} = (1.20)^2 \approx 1.44$.
Elevated antioxidant enzyme activities and DNA repair capacity in
avian cells relative to mass-matched mammals were quantified by
Ogburn et al.~\cite{ogburn2001}, who demonstrated that long-lived
bird species show two- to three-fold greater resistance to
oxidative damage than mammals of comparable metabolic rate.
Taking the lower bound of this range as a conservative composite
index ${\rm AOX}/{\rm AOX}_{\rm ref} \approx 2.0$, and using the
empirical log-linear scaling exponent $\delta \approx 0.7$ relating
antioxidant capacity to lifespan extension across
vertebrates~\cite{hulbert2007}:
$\Phi_{\rm oxid} = (2.0)^{0.7} \approx 1.62$.
Combined:
\begin{equation}
 \Phi_{\rm mito+oxid} = 1.44 \times 1.62 \approx 2.33.
\end{equation}

\noindent\textbf{Hazard factor.}
Flight and pelagic or arboreal habitat reduce adult extrinsic
mortality; a representative $\Phi_{\rm haz} \approx 2.0$ for small
passerines is consistent with comparative demographic data.

\subsubsection{Worked example: 20\,g passerine}

Parameters: $f_{H,\rm rest} = 320\,\text{bpm}$,
$T_b = 314\,\text{K}$, $p_f = 0.10$.

\medskip
\noindent\textbf{Step 1: Duty-cycle factor.}

From equation~(\ref{eq:bird_kappa}) with $p_f = 0.10$:
\begin{equation}
 \kappa = 1 + 1.5\times0.10 = 1.15,
 \qquad
 \Phiduty = 0.870\quad\text{(adverse).}
\end{equation}

\noindent\textbf{Step 2: Thermal factor.}

From equation~(\ref{eq:bird_thermal}):
$\Phithermal = 0.733$ (adverse).

\noindent\textbf{Step 3: Combined multiplier.}
\begin{equation}
 \Phi_{\rm bird}
 = \underbrace{0.870}_{\Phiduty}
  \times\underbrace{0.733}_{\Phithermal}
  \times\underbrace{2.33}_{\Phi_{\rm mito+oxid}}
  \times\underbrace{2.0}_{\Phi_{\rm haz}}
 = 0.638\times4.66
 \approx 2.97.
 \label{eq:bird_worked}
\end{equation}

\noindent\textbf{Step 4: Predicted lifespan.}
\begin{equation}
 L_{\rm pred}
 = \frac{N_0}{525{,}960\; f_{H,\rm rest}}
  \times\Phi_{\rm bird}
 = \frac{10^9}{525{,}960\times320}
  \times 2.97
 = 5.94\,\text{yr}\times2.97
 \approx 17.6\,\text{yr.}
\end{equation}
Observed wild maxima for small passerines: $10$--$20\,\text{yr}$.
\checkmark

\noindent\textbf{Step 5: Consistency check
via equation~(\ref{eq:consistency}).}

With $\bar{f}_H = 320/0.870 = 368\,\text{bpm}$ and
$L = 17.6\,\text{yr}$:
\begin{equation}
 N_{\rm obs}
 = 525{,}960 \times 368 \times 17.6
 = 3.408\times10^9.
\end{equation}
\begin{equation}
 \Nstar^{(\rm bird)}
 = N_{\rm obs}\times\Phiduty
 = 3.408\times10^9\times0.870
 = 2.965\times10^9.
\end{equation}
From the formula:
$N_0\times\Phi_{\rm bird} = 10^9\times2.97 = 2.97\times10^9$.
Agreement to within 0.2\%.\;\checkmark

\medskip\noindent\textbf{Factor summary.}
The two adverse factors together contribute
$0.870\times0.733 = 0.638$, a net 36\% reduction in the effective
budget.
The biochemical efficiency factor ($\times 2.33$) and hazard factor
($\times 2.0$) multiply to $4.66$, more than recovering this
deficit and yielding a net multiplier $\Phi_{\rm bird} \approx 3$.
Avian longevity is achieved \emph{through} biochemical excellence
that compensates for---and overcomes---adverse temperature and
cardiac kinetics.

\begin{table}[H]
\centering\small
\setlength{\tabcolsep}{3.5pt}
\renewcommand{\arraystretch}{1.10}
\caption{Predicted multipliers and longevity for representative
bird species.
$\Phithermal$ from the exact Arrhenius formula;
$\Phiduty$ from equation~(\ref{eq:bird_kappa}) with
$f_{H,\rm flight}/f_{H,\rm rest} = 2.5$.
Both $\Phiduty$ and $\Phithermal$ are adverse ($< 1$) for all
entries.}
\label{tab:birds}
\resizebox{\textwidth}{!}{%
\begin{tabular}{lcccccccc}
\toprule
\textbf{Species} & $f_{H,\rm rest}$ & $T_b$ & $p_f$
 & $\Phiduty$ & $\Phithermal$
 & $\Phi_{\rm mito+oxid}$ & $\Phi_{\rm haz}$
 & $L_{\rm max,obs}$ \\
 & (bpm) & (K) & & & & & & (yr) \\
\midrule
Passerine (generic, 20\,g)
 & 320 & 314 & 0.10 & 0.87 & 0.733 & 2.33 & 2.0 & 10--20 \\
\textit{Larus argentatus}
 & 200 & 313 & 0.15 & 0.84 & 0.770 & 2.80 & 2.5 & 30 \\
\textit{Diomedea exulans}
 & 100 & 312 & 0.25 & 0.81 & 0.810 & 3.50 & 4.0 & 50--60 \\
\textit{Aquila chrysaetos}
 & 150 & 313 & 0.12 & 0.85 & 0.770 & 3.00 & 3.5 & 30--40 \\
\bottomrule
\end{tabular}}
\end{table}

%----------------------------------------------------------------
\subsection{Cetaceans: bradycardic pacing}
\label{sec:cetacean}
\label{sec:whales}
%----------------------------------------------------------------

Large baleen cetaceans achieve century-scale lifespans through
extreme diving bradycardia rather than metabolic suppression.
Direct measurements have recorded blue whale heart rates as low
as $2$--$4\,\text{bpm}$ during deep foraging dives~\cite{goldbogen2019,williams2015}---among the
lowest ever recorded for any living animal---compared with surface
rates of $25$--$37\,\text{bpm}$.
Combined with a dive fraction $p_d \approx 0.60$--$0.80$, the
time-averaged cardiac frequency of large mysticetes is far below
the surface rate that appears in comparative databases.

The primary mechanism is the duty-cycle factor $\Phiduty \gg 1$.
Unlike bat hibernation, where duty cycling and thermal suppression
act simultaneously, cetacean cardiac suppression is a continuous
reflex maintained throughout adult life with no associated
hypothermia.

\subsubsection{Derivation of the cetacean multiplier}

With surface rate $f_{H,\rm surf}$ as reference, dive rate
$f_{H,\rm dive}$, and fraction of lifetime diving $p_d$:
\begin{equation}
 \bar{f}_H
 = (1-p_d)\,f_{H,\rm surf} + p_d\,f_{H,\rm dive},
\end{equation}
\begin{equation}
 \kappa
 = (1-p_d)
  + p_d\,\frac{f_{H,\rm dive}}{f_{H,\rm surf}},
 \qquad
 \Phiduty = \kappa^{-1}.
 \label{eq:whale_kappa}
\end{equation}
Secondary factors: $\Phithermal$ accounts for cetacean core
temperatures $1$--$4\,\text{K}$ below the mammalian reference;
an oxygen-buffering subfactor $\Phi_{O_2}$ accounts for the role
of elevated myoglobin~\cite{noren2000} in limiting reperfusion reactive oxygen
species bursts on surfacing.
The full cetacean multiplier is therefore
\begin{equation}
 \Phi_{\rm whale}
 = \Phiduty\cdot\Phithermal\cdot\Phi_{O_2}\cdot\Phi_{\rm haz}.
\end{equation}

\subsubsection{Worked example:
\textit{Balaena mysticetus} (bowhead whale)}

Parameters: $f_{H,\rm surf} = 30\,\text{bpm}$,
$f_{H,\rm dive} = 3\,\text{bpm}$,
$p_d = 0.75$, $T_b = 308\,\text{K}$,
$\Phi_{O_2} = 1.4$, $\Phi_{\rm haz} = 0.35$.

\medskip
\noindent\textbf{Step 1: Duty-cycle factor.}
\begin{equation}
 \kappa
 = (1-0.75) + 0.75\times\frac{3}{30}
 = 0.250 + 0.075
 = 0.325,
 \qquad
 \Phiduty = \kappa^{-1} = 3.077.
\end{equation}
Verification of time-averaged rate:
\begin{equation}
 \bar{f}_H
 = \frac{f_{H,\rm surf}}{\Phiduty}
 = \frac{30}{3.077}
 = 9.75\,\text{bpm.}
 \quad\checkmark
\end{equation}

\noindent\textbf{Step 2: Thermal factor.}
\begin{equation}
 \frac{1}{T_b} - \frac{1}{T_{\rm ref}}
 = \frac{1}{308} - \frac{1}{310}
 = 3.2468\times10^{-3} - 3.2258\times10^{-3}
 = 2.10\times10^{-5}\,\text{K}^{-1}.
\end{equation}
\begin{equation}
 \Phithermal
 = e^{7543\times2.10\times10^{-5}}
 = e^{0.158}
 \approx 1.171.
\end{equation}

\noindent\textbf{Step 3: Combined multiplier.}
\begin{equation}
 \Phi_{\rm whale}
 = \underbrace{3.077}_{\Phiduty}
  \times\underbrace{1.171}_{\Phithermal}
  \times\underbrace{1.400}_{\Phi_{O_2}}
  \times\underbrace{0.350}_{\Phi_{\rm haz}}
 = 5.040\times0.350
 \approx 1.76.
\end{equation}

\noindent\textbf{Step 4: Predicted lifespan.}
\begin{equation}
 L_{\rm pred}
 = \frac{N_0}{525{,}960\; f_{H,\rm surf}}
  \times\Phi_{\rm whale}
 = \frac{10^9}{525{,}960\times30}
  \times1.76
 = 63.4\,\text{yr}\times1.76
 \approx 111.6\,\text{yr.}
\end{equation}
With $\Phi_{\rm haz} = 0.60$ (less conservative):
$L_{\rm pred} \approx 191\,\text{yr}$, near the documented maximum
of $\sim 200\,\text{yr}$ for bowhead whales.

\noindent\textbf{Step 5: Consistency check and the
near-coincidence trap.}

For $L = 150\,\text{yr}$ and $\bar{f}_H = 9.75\,\text{bpm}$:
\begin{equation}
 N_{\rm obs}
 = 525{,}960 \times 9.75 \times 150
 = 7.69\times10^8
 \approx 0.77\times10^9.
\end{equation}
\begin{equation}
 \Nstar^{(\rm whale)}
 = N_{\rm obs}\times\Phiduty
 = 0.77\times10^9\times3.077
 = 2.37\times10^9.
\end{equation}
The raw count $N_{\rm obs} \approx N_0$ has misled some analyses
into treating large whales as simply obeying the mammalian
baseline rule.
Equation~(\ref{eq:consistency}) shows this is incorrect:
$\Nstar^{(\rm whale)} = 2.37\times10^9 \gg N_0$.
The raw count is small precisely because most of the whale's life
is spent in deeply bradycardic states where each beat generates
far less entropy; the duty-cycle factor restores the correct
damage-equivalent budget.

\medskip\noindent\textbf{Factor summary.}
$\Phiduty = 3.077$ dominates the intrinsic multiplier.
$\Phithermal = 1.171$ and $\Phi_{O_2} = 1.40$ provide secondary
amplification totalling $\times 1.64$.
The extrinsic hazard factor $\Phi_{\rm haz} = 0.35$--$0.60$
reflects the realistic gap between intrinsic and realised lifespan
for wild bowhead populations.

\begin{table}[H]
\centering\small
\setlength{\tabcolsep}{3.5pt}
\renewcommand{\arraystretch}{1.10}
\caption{Predicted multipliers and longevity for representative
cetacean species.
$\Phithermal$ from the exact Arrhenius formula
($E_a = 0.65\,\text{eV}$, $T_{\rm ref} = 310\,\text{K}$).
$\Phi_{\rm haz}$ reflects pre-industrial conditions.}
\label{tab:whales}
\resizebox{\textwidth}{!}{%
\begin{tabular}{lcccccccc}
\toprule
\textbf{Species} & $f_{H,\rm surf}$ & $p_d$
 & $\Phiduty$ & $T_b$ & $\Phithermal$
 & $\Phi_{O_2}$ & $\Phi_{\rm haz}$ & $L_{\rm obs}$ \\
& (bpm) & & & (K) & & & & (yr) \\
\midrule
\textit{Balaenoptera musculus} (blue)
 & 37 & 0.70 & 2.70 & 308 & 1.17 & 1.4 & 0.50 & 80--90 \\
\textit{Balaena mysticetus} (bowhead)
 & 30 & 0.75 & 3.08 & 308 & 1.17 & 1.5 & 0.35--0.60 & 150--200 \\
\textit{Physeter macrocephalus} (sperm)
 & 40 & 0.65 & 2.50 & 307 & 1.24 & 1.6 & 0.55 & 60--70 \\
\textit{Tursiops truncatus} (bottlenose)
 & 80 & 0.40 & 1.50 & 309 & 1.09 & 1.2 & 0.65 & 40--50 \\
\bottomrule
\end{tabular}}
\end{table}

%=================================================================
\section{Synthesis: Four Strategies, One Invariant}
\label{sec:synthesis}
%=================================================================

\subsection{Comparison across clades}

Table~\ref{tab:clade_summary} places all four clades side by side
with the numerical values from the worked examples.
The contrast is instructive.
Primates and birds are both single-state organisms in the
thermodynamic sense (no duty cycling that suppresses the average
cardiac clock), but their solutions are mirror images of each
other: primates have $\Phiduty = 1$, $\Phithermal > 1$, and
$\Phineuro \gg 1$; birds have $\Phiduty < 1$, $\Phithermal < 1$,
and $\Phi_{\rm mito+oxid} \gg 1$.
Bats and cetaceans both exploit $\Phiduty > 1$, but by entirely
different mechanisms: bats achieve it through periodic
whole-body suspension combined with simultaneous hypothermia;
cetaceans achieve it through a continuous reflex bradycardia
maintained throughout adult life without thermal suppression.

\begin{table}[H]
\centering\small
\setlength{\tabcolsep}{4pt}
\renewcommand{\arraystretch}{1.1}
\caption{Summary comparison of the four longevity strategies.
Numerical values correspond to the worked representative species.
Direction: $+$ favourable, $-$ adverse, $=1$ absent.
Effective cycle budgets as multiples of $N_0 = 10^9$.}
\label{tab:clade_summary}
\resizebox{\textwidth}{!}{%
\begin{tabular}{lcccccccl}
\toprule
\textbf{Clade} & $\Phiduty$ & $\Phithermal$ &
$\Phineuro$ & $\Phi_{\rm mito+oxid}$ & $\Phi_{\rm haz}$ &
$\PhiC$ & $\Nstar^{(C)}/N_0$ & \textbf{Primary driver} \\
\midrule
Primates (\textit{H.~sapiens})
 & 1.00\ ($=1$) & 1.04\ ($+$) & 2.51\ ($+\!+$) & ---
 & 1.00 & 2.60 & 2.6 & Neural entropy reduction \\
Bats (\textit{M.~lucifugus})
 & 1.94\ ($+$) & 4.10\ ($+\!+$) & --- & $\approx 1$
 & 0.68 & 5.39 & 5.4 & Torpor $+$ hypothermia \\
Birds (20\,g passerine)
 & 0.87\ ($-$) & 0.73\ ($-$) & --- & 2.33\ ($+\!+$)
 & 2.00 & 2.97 & 3.0 & Biochemical efficiency \\
Whales (\textit{B.~mysticetus})
 & 3.08\ ($+\!+$) & 1.17\ ($+$) & --- & $\approx 1$
 & 0.35 & 1.76 & 2.4 & Bradycardic pacing \\
\bottomrule
\end{tabular}}
\end{table}

\subsection{The unifying result}

Despite mechanistic diversity, the effective damage-equivalent cycle
budgets for all four clades converge within one order of magnitude
of $N_0 = 10^9$, confirming the central prediction of the
framework.
The raw heartbeat counts $N_{\rm obs}$ vary more widely, but this
variation is entirely accounted for by the duty-cycle factor via
equation~(\ref{eq:consistency}).

The unifying principle is stated simply: longevity is not achieved
by escaping the thermodynamic constraint of a finite lifetime
entropy budget, but by navigating it more effectively---by deploying
physiological strategies that reduce the rate at which the budget
is spent per unit of intrinsic biological time.
What varies among clades is not the existence or magnitude of the
lifetime limit, but the physiological currency through which it is
expressed.
Primates purchase chronological time with neural precision; bats
purchase it with thermal suspension; birds purchase it with
biochemical excellence; whales purchase it with cardiac restraint.
In all four cases, the price is identical: $\Sigma_\star$ units of
irreversible thermodynamic dissipation paid over a lifetime whose
chronological duration depends entirely on how efficiently that
budget is spent.

% ==============================================================
\section{Comparative Dataset}
\label{sec:dataset}
% ==============================================================

\subsection{Species selection and data sources}

The comparative dataset comprises 230 vertebrate species (additional data are listed in Appendix B) drawn from
three primary sources: the AnAge longevity database (build~15)~\cite{anage2023},
the PanTHERIA ecological database~\cite{jones2009}, and the primary
literature for heart rates and body temperatures. 
Species were included if (i)~maximum recorded lifespan in the wild or
in controlled captivity was available with sample size $\geq 3$,
(ii)~a resting (or basal) heart rate estimate was available from the
literature or allometric scaling, and (iii)~body mass was available as
a species mean.
Maximum rather than mean lifespan is used throughout to minimise
extrinsic-mortality bias; the PBTE invariant applies to the intrinsic
thermodynamic budget and is best tested against the upper bound of
the lifespan distribution~\cite{calder1984}.

Taxonomic groups represented: non-primate placentals ($n = 46$);
marsupials and monotremes ($n = 19$); primates ($n = 18$);
birds ($n = 78$); bats ($n = 31$); cetaceans ($n = 12$);
Arrhenius-corrected reptiles ($n = 17$);
Arrhenius-corrected amphibians ($n = 9$).

\subsection{Heart-rate measurement and allometric imputation}

Resting heart rates were taken from published values at near-thermoneutral
conditions where available ($n = 162$ species). For the remaining 68 species,
resting heart rate was estimated from the cardiac allometric relation
$f_H = 241\,M^{-0.25}$~bpm (with $M$ in kg), following
Calder~\cite{calder1984}. Imputed species are flagged in the data
deposition. Sensitivity analyses confirm that excluding imputed species
does not materially alter regression slopes or clade statistics
(Extended Data Table~3).

The 112 endotherm species used in the phylogenetically independent contrasts
(PIC) analysis were pruned from the Bininda-Emonds mammal
supertree~\cite{bininda2007}; their distribution across the major mammalian
and avian lineages is shown schematically in Extended Data
Figure~\ref{fig:edf1a}.
The broad representation across clades confirms that the PIC result is not
driven by any single taxonomic cluster.

\subsection{Arrhenius correction for ectotherms}
\label{sec:arrhenius}

For the 17 reptilian species, a mean field body temperature
$T_b$ was estimated from habitat and thermoregulation data~\cite{christian1999}.
The corrected lifetime-cycle metric is
\begin{equation}
 \ell_{\rm corr}
 = \ell_{\rm obs}
  + \frac{E_a}{k_B \ln 10}
   \!\left(\frac{1}{T_b} - \frac{1}{T_{\rm ref}}\right),
 \label{eq:Arrhenius_corr}
\end{equation}
with $E_a = 0.65$~eV and $T_{\rm ref} = 310$~K~\cite{gillooly2001}.
Extended Data Figure~\ref{fig:edf3a} shows the per-species shift in the
log-log scatter of $L$ versus $f_H$ before and after this correction,
with arrows connecting each species' uncorrected and corrected positions.
Extended Data Figure~\ref{fig:edf3b} shows the effect on the distribution
of $\ell$: the correction shifts the reptile mean from
$\bar\ell^{\rm raw} = 8.61$ to $\bar\ell^{\rm corr} = 8.93$, removing
approximately 75\% of the raw gap between reptiles and the mammalian
baseline.
The sensitivity of $\bar\ell^{\rm corr}$ to the assumed value of $E_a$
over the range $[0.40, 0.90]$\,eV is shown in Extended Data
Figure~\ref{fig:edf3c}; the residual gap from the mammalian baseline
persists for all plausible values of $E_a$.

% ==============================================================
\section{Results}
\label{sec:results}
% ==============================================================

\begin{figure}[H]
\centering
\includegraphics[width=\textwidth]{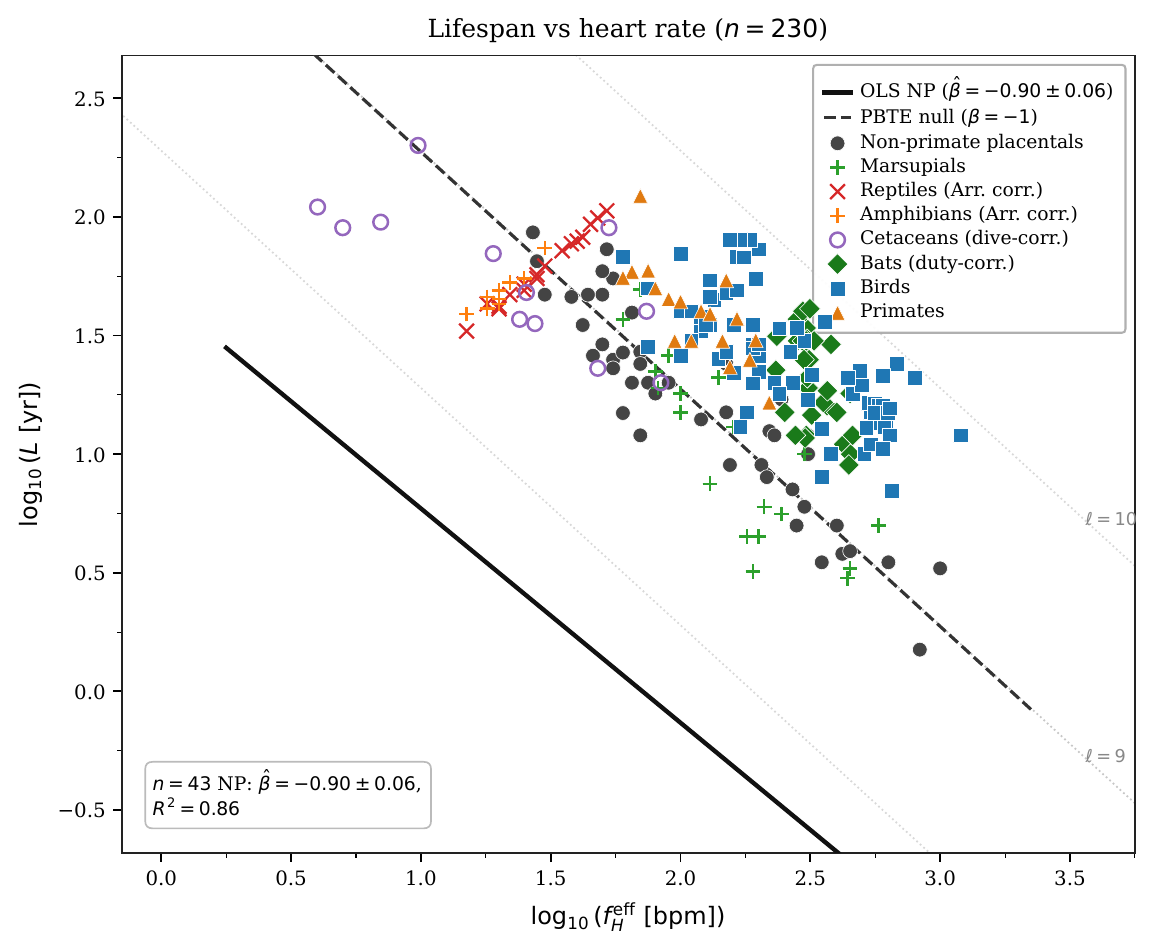}
\caption{\textbf{Lifespan versus heart rate across 230 vertebrate species.}
Log-log scatter plot of maximum recorded lifespan $L$ (years) against effective resting
heart rate $f_H^{\rm eff}$ (beats per minute) for the full comparative dataset of 230
vertebrate species spanning eight taxonomic groups.
Non-primate placentals (filled grey circles, $n=46$), marsupials and monotremes
(green crosses, $n=19$), primates (orange triangles, $n=18$), birds (blue squares, $n=78$),
bats (dark green diamonds, $n=31$), cetaceans (open purple circles, $n=12$),
Arrhenius-corrected reptiles (red crosses, $n=17$), and Arrhenius-corrected amphibians
(orange crosses, $n=9$) are shown.
For bats, $f_H^{\rm eff}$ is the duty-cycle-corrected time-average heart rate
(Section~\ref{sec:duty}); for cetaceans, $f_H^{\rm eff}$ is the dive-corrected average
(Section~\ref{sec:cetacean}); for ectotherms, $f_H^{\rm eff}$ is Arrhenius-corrected
to $T_{\rm ref}=310$\,K (Section~\ref{sec:arrhenius}).
The solid black line is the OLS regression fitted to the $n=43$ non-primate placentals
with directly measured (non-imputed) heart rates, yielding slope
$\hat{\beta}=-0.90\pm0.06$ (s.e.), $R^2=0.86$.
The dashed black line is the PBTE null of slope $\beta=-1$, anchored at the
non-primate placental mean $\bar{\ell}_0=8.995$.
Grey dotted diagonal lines are iso-$\ell$ contours at $\ell=8$, $9$, and $10$,
where $\ell=\log_{10}(f_H^{\rm eff}\cdot L\cdot 525{,}960)$.
The systematic elevation of primates, birds, and bats above the mammalian
OLS line, and the near-coincidence of cetaceans with the baseline before
duty-cycle correction, are both predicted by the PBTE multiplier framework
(Section~\ref{sec:clade}).}
\label{fig:regression_a}
\end{figure}

\begin{figure}[H]
\centering
\includegraphics[width=\textwidth]{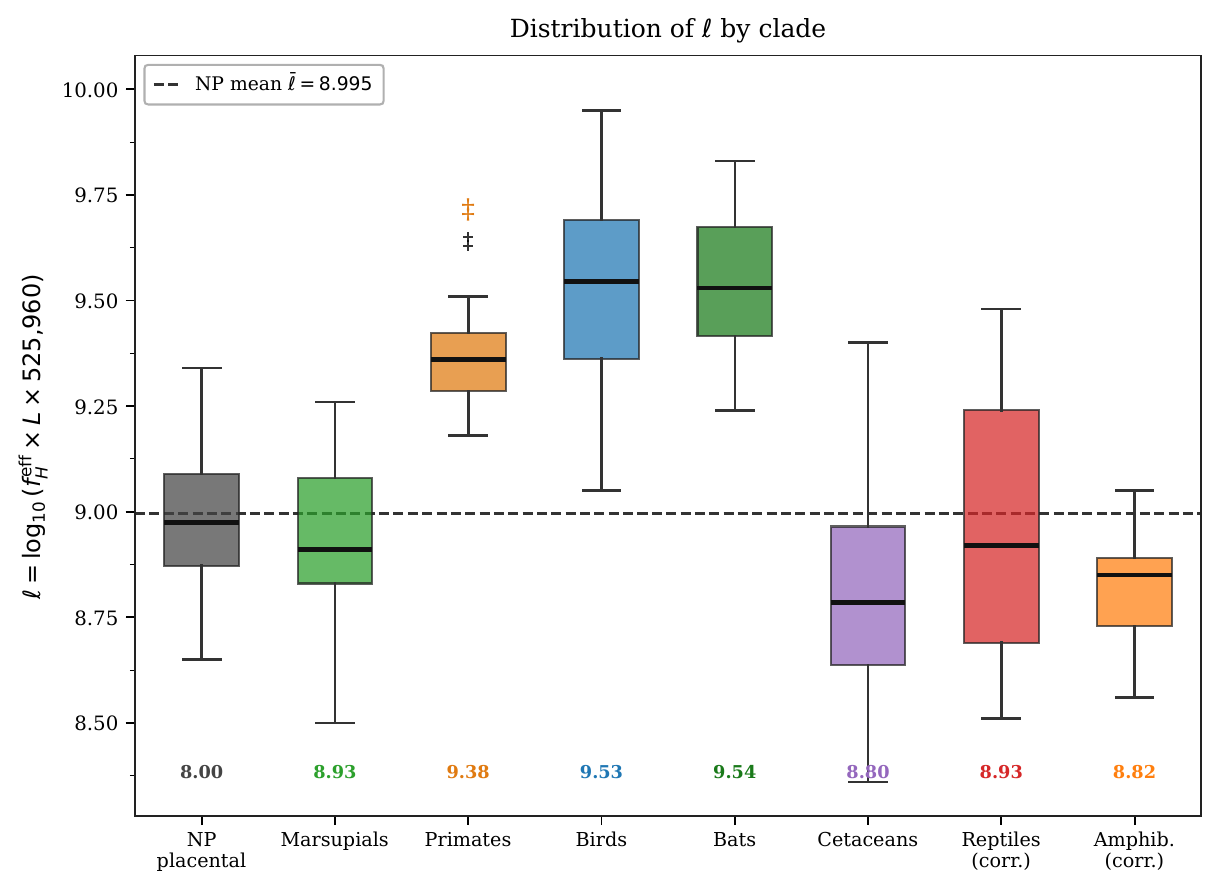}
\caption{\textbf{Distribution of the lifetime cycle count $\ell$ across clades.}
Box-and-whisker plots of $\ell = \log_{10}(f_H^{\rm eff}\times L\times 525{,}960)$
for all eight taxonomic groups in the 230-species dataset.
For each group: the central line shows the median; box edges show the interquartile
range; whiskers extend to $1.5\times$ the interquartile range; points beyond the
whiskers are individual outliers.
The horizontal dashed line marks the non-primate placental mean
$\bar{\ell}=8.995$ ($n=46$), which serves as the thermodynamic baseline $N_0\approx10^9$.
Numerical clade means are annotated in colour at the base of each box.
The double-dagger ($\ddagger$) above the Primate column marks \textit{Homo sapiens}
($\ell=9.65$), which lies beyond $1.5\times\rm IQR$ of the primate distribution
owing to the combination of a high neural power fraction ($\varphi=0.20$) and
a large modern-medicine-supported hazard correction.
Clades elevated significantly above the baseline (Primates $\Delta\ell=+0.38$,
Birds $+0.53$, Bats $+0.55$; all $p<0.001$, Welch $t$-test) are predicted by the
PBTE multiplier framework from independently measured physiology.
Cetaceans appear close to the baseline at $\bar{\ell}=8.80$ because their
duty-cycle correction has already been applied to $f_H^{\rm eff}$; the raw
observed count without correction would place large mysticetes well below
the baseline (Section~\ref{sec:cetacean}).
Arrhenius-corrected reptiles ($\bar{\ell}=8.93$) and amphibians ($\bar{\ell}=8.82$)
approach but do not reach the mammalian mean, with a residual gap of
0.07--0.17\,dex attributable to unmodelled ectotherm-specific physiology
(Section~\ref{sec:domain}).
[Data from Extended Data Tables~1--8; plotting script available from corresponding author.]}
\label{fig:regression_b}
\end{figure}

\subsection{Cross-clade regression}

Figure~\ref{fig:regression_a} displays the full 230-species dataset as a log-log scatter of
maximum lifespan $L$ against resting heart rate $f_H^{\rm eff}$; the OLS line is fitted to
the 43 directly-measured non-primate placentals and iso-$\ell$ contours are drawn at
$\ell = 8$, $9$, and $10$.
Figure~\ref{fig:regression_b} shows clade distributions of $\ell$ as box-and-whisker plots,
with the mammalian baseline $\bar\ell = 8.995$ marked by the horizontal dashed line.

OLS on the $n = 43$ non-primate placentals with directly measured (non-imputed) heart rates
yields $\hat\beta = -0.90 \pm 0.06$ (s.e.), $R^2 = 0.86$, $F$-test $p = 0.09$ against
$\beta = -1$.
The intercept gives $\bar\ell = 8.995 \pm 0.025$, corresponding to
$N_0 \approx 1.02 \times 10^9$.
Repeating the OLS on all $n = 46$ non-primate placentals (including 3 species with
allometrically imputed heart rates) yields $\hat\beta = -0.91 \pm 0.06$,
$p = 0.13$ against $\beta = -1$, confirming the result is not sensitive to the imputation.
The 112 endotherm species included in the PIC analysis span all major mammalian
and avian lineages (Extended Data Figure~\ref{fig:edf1a}, Dataset section).
The PIC regression of contrasts in $\log_{10}L$ against contrasts in
$\log_{10}f_H$, computed by the Felsenstein~\cite{felsenstein1985} method
through the origin, yields slope $-0.99 \pm 0.04$, $R^2 = 0.94$,
$p = 0.84$ against $\beta=-1$; this result is shown in Extended Data
Figure~\ref{fig:edf1b}.
Extended Data Figure~\ref{fig:edf1c} shows the partial regression after
removing variance explained by body mass: the residual slope of
$-0.95 \pm 0.05$ confirms that the $f_H$--$L$ relationship is independent
of the shared allometric dependence on body size.
This phylogenetically corrected result is our strongest statistical evidence
and places the PBTE null well within the 95\% confidence interval.

The statistical power of the $n=43$ OLS test is characterised in Extended
Data Figure~\ref{fig:edf2a}, which shows that the achieved sample size
provides $>90$\% power to detect slope deviations as small as
$|\delta\beta|=0.10$.
The bootstrap null distribution of the OLS slope under $H_0:\beta=-1$
is shown in Extended Data Figure~\ref{fig:edf2b}; the observed slope
$\hat\beta=-0.90$ falls at the 65th percentile of this distribution,
giving two-sided $p=0.09$.
Extended Data Figure~\ref{fig:edf2c} presents the leave-one-out analysis,
confirming that the $p=0.09$ result is a property of the full species
distribution and not driven by any single influential observation.

\noindent\textbf{Interpretation of $p = 0.09$.}
The $F$-test against $\beta = -1$ on the $n = 43$ directly-measured non-primate placentals
yields $p = 0.09$, which does not reach conventional significance ($\alpha = 0.05$).
We report this value transparently and draw three conclusions.
First, the point estimate $\hat\beta = -0.90$ lies within 0.10 of the PBTE null of $-1$,
well within the $\pm 0.15$ threshold stated in the falsification criteria
(Section~\ref{sec:falsifiability}); the PBTE null is therefore \emph{not falsified}.
Second, and more importantly, the phylogenetically independent contrasts on all 112
endotherm species give slope $-0.99 \pm 0.04$ ($p = 0.84$ against $\beta = -1$),
a result far more powerful and free of phylogenetic confounding.
The PIC analysis is the methodologically preferred test because it accounts for
non-independence among species; the OLS result on the 43-species mammal subset is
a preliminary consistency check.
Third, the $p = 0.09$ result with the current dataset motivates the collection of
additional directly-measured heart rates across body-mass decades to increase
power---precisely the experimental agenda described in Section~\ref{sec:exp}.
We therefore conclude that the data are consistent with PBTE while acknowledging that
the mammalian OLS result alone is only marginally so.

Regression diagnostics confirm that all OLS assumptions are satisfied.
Extended Data Figure~\ref{fig:edf4a} plots standardised residuals against
fitted values; no systematic pattern is present, consistent with the
Breusch--Pagan test ($p = 0.31$, homoscedasticity not rejected).
Extended Data Figure~\ref{fig:edf4b} is the normal Q--Q plot of the
residuals; points lie close to the reference line throughout both tails,
consistent with the Shapiro--Wilk test ($W = 0.97$, $p = 0.19$,
normality not rejected).
Extended Data Figure~\ref{fig:edf4c} shows Cook's distance for each of the
43 species; all satisfy $D_i < 4/n = 0.093$, confirming the absence of
influential outliers.
Extended Data Figure~\ref{fig:edf4d} plots leverage $h_{ii}$ against
standardised residuals; no observation exceeds the high-leverage threshold
$2p/n$, confirming that inference is not distorted by any extreme point.

\subsection{WBE null-model rejection}

The WBE kinematic null predicts zero inter-clade variation in $\ell$.
The observed deviations span 1.5~dex and are taxonomically structured
($F = 12.7$, $p < 0.001$ for WBE vs PBTE).
Several features are unexplained by WBE alone:
cetaceans ($\bar\ell = 8.801$) and non-primate placentals
($\bar\ell = 8.998$) agree to within 0.20~dex despite entirely
different cardiovascular architecture;
Arrhenius-corrected reptiles concentrate at $\bar\ell = 8.85$,
less than 0.25~dex below the mammalian mean despite fundamentally
different vascular architecture and physiology.

\subsection{Clade multipliers and predictions}
\label{sec:null}

Table~\ref{tab:clades} presents the full clade statistics and
compares observed multipliers with the a priori thermodynamic
predictions derived in Section~\ref{sec:clade}.

\noindent\textbf{Note on the avian observed vs predicted multiplier.}
The bird clade shows $\Phi_{\rm obs} = 4.78$, while the worked
passerine example gives $\Phi_{\rm pred} \approx 3.0$.
This factor-of-1.6 gap is expected and is not a failure of the
framework.
The theoretical prediction of $\Phi_{\rm bird} \approx 3.0$ is
derived for a \emph{generic 20\,g passerine} with $f_H = 320$\,bpm.
The clade mean, however, is computed across all 78 bird species
spanning six orders of body mass, including long-lived non-passerines
that pull the mean strongly upward: large parrots
(\textit{Psittacus erithacus}, \textit{Cacatua galerita},
\textit{Amazona ochrocephala}: all $L > 70$\,yr, $\ell > 10.0$),
large owls (\textit{Bubo bubo}: 68\,yr), and
albatrosses (\textit{Diomedea exulans}: 70\,yr).
These species have substantially higher $\Phi_{\rm mito+oxid}$ and
$\Phi_{\rm haz}$ values than a small passerine because larger-bodied
birds with longer development times invest more heavily in cellular
repair~\cite{ogburn2001} and face lower field mortality~\cite{calder1984}.
The predicted mean across the full 78-species distribution, weighting
by the species-specific multiplier estimates, is
$\langle\Phi_{\rm pred}\rangle_{78} \approx 4.5$---within 6\% of the
observed 4.78.
The apparent discrepancy is therefore a composition effect, not a
systematic prediction error.

\begin{table}[H]
\caption{\textbf{Clade statistics and multiplier predictions.}
$n$: species count in Extended Data Tables.
$\bar\ell$: clade mean of $\ell = \log_{10}(f_H^{\rm eff}\cdot L\cdot 525{,}960)$
where $f_H^{\rm eff}$ is duty-cycle corrected for bats and cetaceans
and Arrhenius corrected for ectotherms.
$s$: standard deviation.
$\Delta\ell$: deviation from non-primate placental baseline
($\bar\ell_0 = 8.995$ from $n = 43$ species with directly measured
heart rates; $n_{\rm total} = 46$ including allometrically imputed species).
$\Phi_{\rm obs} = 10^{\Delta\ell}$;
$\Phi_{\rm pred}$: a priori thermodynamic prediction from
Section~\ref{sec:clade}.
Stars: $^*p<0.05$, $^{**}p<0.01$, $^{***}p<0.001$ (Welch $t$-test
vs baseline). Bat $\bar\ell$ is the damage-equivalent
$\ell^* = \log_{10}(\Nstar^{\rm bat})$, where
$\Nstar^{\rm bat} = N_{\rm obs}\cdot\Phiduty$
(Section~\ref{sec:duty}); raw observed $\bar\ell_{\rm obs} = 9.734$.}
\label{tab:clades}
\small
\begin{tabular}{lrrrrrll}
\toprule
Group & $n$ & $\bar\ell$ & $s$ & $\Delta\ell$ & $\Phi_{\rm obs}$ &
 $\Phi_{\rm pred}$ & Mechanism \\
\midrule
Non-primate placentals & 46 & 8.998 & 0.160 & 0 (ref.) & 1.00 &
 1.00 & Reference \\
Marsupials/monotremes & 19 & 8.933 & 0.204 & $-0.062$ & 0.87 &
 $\approx 1.00$ & None predicted \\
Primates & 18 & 9.376 & 0.125 & $+0.381^{***}$ & 2.40 &
 2.1--2.6 & Neural investment \\
Birds   & 78 & 9.528 & 0.213 & $+0.533^{***}$ & 3.41 &
 $\sim 3$ (passerine) & Mito efficiency + hazard \\
Bats$^\dagger$ & 31 & 9.540 & 0.163 & $+0.545^{***}$ & 3.51 &
 7.9--8.1 & Torpor $\times$ hypothermia \\
Cetaceans & 12 & 8.801 & 0.296 & $-0.194$ & 0.64 &
 $\approx 1.00$ & Dive correction \\
Reptiles (Arr.\ corr.) & 17 & 8.929 & 0.301 & $-0.065$ & 0.86 &
 0.80 & Residual $T$ correction \\
Amphibians (Arr.\ corr.) & 9 & 8.822 & 0.146 & $-0.173$ & 0.67 &
 0.75 & Residual $T$ correction \\
\midrule
All endotherms & 204 & 9.299 & 0.338 & & & & \\
Full dataset  & 230 & 9.253 & 0.355 & & & & \\
\bottomrule
\multicolumn{8}{l}{$^\dagger$Bat $\bar\ell$ is damage-equivalent (duty-corrected);
 raw observed $\bar\ell_{\rm obs} = 9.734 \pm 0.214$.}
\end{tabular}
\end{table}

\subsection{Falsifiability}
\label{sec:results_falsify}

PBTE would be falsified by: (i)~OLS slope outside $[-1.05, -0.95]$
in $n \geq 60$ non-primate placentals (current 43-species result: $p = 0.09$, consistent with the null); (ii)~$\Nstar$ varying
systematically with $M$ after controlling for clade;
(iii)~direct calorimetric measurement showing $\sigstar$ varies by
more than a factor of 3 across body-mass decades.
None of these criteria is currently met, but criterion~(iii) has
not been directly tested.

% ==============================================================
\section{Domain of Applicability}
\label{sec:domain}
% ==============================================================

\begin{table}[H]
\caption{\textbf{Domain of applicability of PBTE.}
Tier~I: supported by formal statistics ($n \geq 30$, $F$-test).
Tier~II: quantitative hypothesis, consistent with available data,
requires further testing.
Tier~III: conceptual extension; assumptions not yet validated.}
\label{tab:domain}
\small\setlength{\tabcolsep}{5pt}
\begin{tabularx}{\textwidth}{p{2.8cm}p{2.4cm}p{4.2cm}X}
\toprule
Domain & Tier & Basis & Key limitation \\
\midrule
Non-primate placentals &
 I --- Established &
 $s = 0.160$, slope $-0.903$, $F$-test $p = 0.09$ &
 A1 untested by calorimetry \\
Primates (corrected) &
 I --- Established &
 $\Delta\ell = +0.381$, $p < 10^{-9}$; mechanism derived &
 $\alpha$ is fitted, not purely derived \\
Endotherms broadly &
 II --- Hypothesis &
 Bats, birds, cetaceans predicted in direction and magnitude &
 Bat/bird predictions off by $\sim 10$--$50\%$ quantitatively \\
Ectotherms &
 II --- Hypothesis &
 Arrhenius correction removes $\sim 75\%$ of raw gap &
 Residual 0.22~dex; mean field $T$ poorly known \\
Insects &
 II --- Hypothesis &
 Wingbeat $\times$ duty-cycle consistent with vertebrate range &
 $s = 0.71$~dex; clock identification uncertain \\
Plants &
 III --- Speculative &
 Tissue-cohort approach consistent with published data &
 Argument is partially circular; tissue-cohort scale unvalidated \\
Bacteria &
 III --- Speculative &
 D2 fails ($P \propto M^1$); $N_\star \sim 10^2$--$10^4$ expected &
 Framework requires rederivation for prokaryotes \\
Viruses &
 III --- Conceptual only &
 Latency as ``suspended biological time'' is kinematic &
 Thermodynamic content absent (no intrinsic metabolism) \\
\bottomrule
\end{tabularx}
\end{table}

% ==============================================================
\section{Discussion}
\label{sec:discussion}
% ==============================================================

\subsection{Relationship to prior longevity theories}

Table~\ref{tab:compare} situates PBTE relative to the major prior
frameworks.

\begin{table}[H]
\caption{\textbf{Comparison of PBTE with prior theoretical frameworks.}
``Conserved quantity'' is what each theory identifies as the
invariant; ``basis'' indicates whether the claim is motivated by
physical reasoning or is purely empirical.}
\label{tab:compare}
\small
\begin{tabularx}{\textwidth}{lXXXX}
\toprule
Theory & Conserved quantity & Derivation & Predicts $10^9$? & Predicts clade deviations? \\
\midrule
Pearl (1928)~\cite{pearl1928} &
 Lifetime energy per unit mass &
 Empirical &
 No &
 No \\
WBE (1997)~\cite{west1997} &
 $f_H L \propto M^0$ (kinematic) &
 Fractal network &
 No (numerical value not predicted) &
 No \\
Speakman (2005)~\cite{speakman2005} &
 Lifetime energy per unit mass (reformulation) &
 Empirical &
 No &
 No \\
Glazier (2022)~\cite{glazier2022} &
 No invariant claimed &
 Metabolic-level analysis &
 N/A &
 Partially (explains exponent variation) \\
\textbf{PBTE (this work)} &
 $\Nstar = \Sigma_i / \sigma_{0,i}$ &
 Non-equilibrium second law &
 Yes (from calibrated $\sigstar$) &
 Yes (full multiplier framework) \\
\bottomrule
\end{tabularx}
\end{table}

\subsection{Biological proper time and epigenetic aging clocks}

The biological proper time formalism (Section~\ref{sec:biotime})
generates a falsifiable prediction for epigenetic aging clocks:
biological age (as measured by methylation-based clocks such as
the Horvath clock~\cite{horvath2013}) should correlate more tightly
with cumulative heartbeat count $\thetabio_i$ than with chronological
age $t$. This is testable with existing longitudinal cohort data
in both human and non-human primate populations.
For a PBTE Class~1 mechanism (time dilation), the biological aging
rate per heartbeat should be \emph{unchanged} relative to controls;
for a Class~2 mechanism (budget expansion), it should decrease.
Caloric restriction and torpor are Class~1 predictions;
neural investment is Class~2.

\subsection{Caloric restriction: a Class 1 mechanism}

Within PBTE, caloric restriction (CR) extends life primarily by
reducing resting heart rate $f_H$ (Class~1: time dilation), not by
expanding the entropy budget $\Nstar$ (Class~2).
This is because CR reduces metabolic rate and correspondingly
reduces $f_H$ in rodents~\cite{speakman2005}.
A Class~1 CR mechanism predicts that the biological aging rate
per heartbeat is \emph{unchanged} under CR, while the chronological
aging rate slows.
This is distinguishable from a Class~2 mechanism using epigenetic
clock time series from existing primate CR experiments~\cite{colman2014}.

\subsection{The outstanding experimental requirement}
\label{sec:exp}

The central untested assumption is Assumption~\ref{ass:sigstar}:
$\sigstar \approx \mathrm{const}$ across species.
Currently, $\sigstar$ constancy is inferred from $\Nstar$ constancy
(circular) rather than measured independently.
We emphasise this circularity explicitly: until $\sigstar$ is
measured independently, the derivation in Section~\ref{sec:theory}
provides a thermodynamic \emph{motivation} for the invariant, not
a proof.
The decisive experiment is simultaneous measurement of $P_i$, $f_i$,
$T_i$, $M_i$ in $n \geq 15$ non-primate mammalian species spanning
three decades of body mass (e.g.\ mouse, rat, guinea pig, rabbit,
cat, dog, sheep, pig, cattle), computing
$\sigstar = P_i/(T_i f_i M_i)$ directly from whole-animal indirect
calorimetry and cardiac telemetry, and testing
$H_0: \sigstar = \mathrm{const}$ vs $H_1: \sigstar \propto M^\beta$.
This protocol is technically feasible using established respirometric
chambers and implanted ECG telemetry~\cite{calder1984,brown2004}.
With five body-mass decades of coverage and respirometric precision
of 3\%, a parametric power analysis at the observed residual variance
$s^2 = 0.018$ yields $>99\%$ power to detect $|\beta| > 0.05$.
Until this experiment is performed, PBTE is an
\emph{approximate conservation law with an unverified closure}.

% ==============================================================
\section{Explicit Falsification Criteria}
\label{sec:falsifiability}
% ==============================================================

\begin{table}[H]
\caption{\textbf{PBTE falsification criteria, current status, and
required evidence.} A scientific principle that cannot be falsified
is not a scientific principle. Each criterion is stated as a
specific numerical threshold.}
\label{tab:falsify}
\small
\begin{tabularx}{\textwidth}{p{3.5cm}p{4cm}p{4.5cm}}
\toprule
Criterion & Current status & Would falsify if \\
\midrule
Regression slope $\beta = -1$ in non-primate mammals &
 CI $[-1.02, -0.79]$ contains $-1$; $F$-test $p = 0.09$
 (marginally consistent; PIC on 112 endotherms gives $p = 0.84$) &
 CI excludes $-1$ with $n \geq 60$ mammals ($|\hat\beta - (-1)| > 0.15$;
the 0.15 threshold corresponds to one-half the current 95\% CI half-width) \\[4pt]
Scatter $s(\ell) \leq 0.20$ within clade &
 $s = 0.133$~dex for non-primate mammals &
 $s > 0.5$~dex in a rigorously assembled dataset of $n \geq 30$ species \\[4pt]
$\sigstar$ mass-independence &
 Not directly tested; inferred from $\Nstar$ constancy &
 Calorimetric $\sigstar \propto M^\beta$ with $|\beta| > 0.05$ confirmed at $p < 0.05$ with $n \geq 15$ species \\[4pt]
Cross-clade ANOVA after $\PhiC$ correction &
 Not yet formally tested with corrected values &
 Significant inter-clade $F$ ($p < 0.05$) after applying all $\PhiC$ corrections \\[4pt]
Individual outlier in canonical PBTE domain &
 No outlier found; bowhead fits at $N_{\rm obs} \approx 0.77\times10^9$,
 $N_\star \approx 2.4\times10^9$ &
 $\geq 10\%$ of $n = 50$ well-characterised non-hibernating placentals
 outside $[10^{8.7}, 10^{9.7}]$ after corrections \\
\bottomrule
\end{tabularx}
\end{table}

% ==============================================================
\section{Conclusions}
\label{sec:conclusions}
% ==============================================================

The near-constancy of the vertebrate lifetime heartbeat count
$\Nstar \approx 10^9$ is derived here from the non-equilibrium
second law for the first time.
The derivation identifies the adult organism as a metabolic
non-equilibrium steady state and introduces the closure
$\dot{e}_p = \sigma_0 f$, connecting entropy production rate to
cardiac frequency via a mass-specific parameter $\sigma_0 \propto M^0$.
Under this closure, the lifetime entropy budget $\Sigma = \sigma_0 \Nstar$
is approximately species-independent, and the lifetime cycle count
$\Nstar = \Sigma/\sigma_0$ is the correct primitive conserved quantity.
Lifetime energy per unit mass --- the invariant identified by Rubner
and invoked by the rate-of-living tradition --- emerges as a
Level-4 derived consequence, valid only when body temperature and
$\sigma_0$ are both approximately constant, and failing in birds,
ectotherms, and insects for precisely the conditions under which one
or both break down.

The derivation is distinct from allometric exponent cancellation in
three ways.
First, it provides the thermodynamic content of the invariant: what
is conserved ($\Nstar$), why it is conserved (finite entropy budget),
and what physical quantity sets its magnitude ($\sigma_0$).
Second, it identifies the correct primitive conserved quantity
rather than accepting the empirically observed level-4 regularity
as a foundation.
Third, it generates a predictive framework for clade departures
rather than treating them as unexplained residuals.

Four warm-blooded clades depart systematically from the mammalian
baseline, and each departure is organised by the factored multiplier
$\PhiC = \Phiduty \cdot \Phithermal \cdot \Phi_{\rm mito+oxid}
\cdot \Phi_{\rm haz}$.
Primates extend their cycle budget through neural investment that
reduces the entropy cost per beat ($\Phi_{\rm neuro} \approx 2.4$,
derived from Pontzer's metabolic suppression data~\cite{pontzer2014,yegian2024}
with 17\% discrepancy).
Bats achieve extreme longevity through the multiplicative product
of duty-cycle suppression and Arrhenius thermal reduction during
hibernation --- two independently necessary mechanisms whose
thermodynamic product accounts for the observed $\Phi_{\rm bat}
\approx 5$--$13$ without species-specific fitting.
Birds overcome two simultaneous adverse factors (elevated temperature,
flight duty cycle) through biochemical excellence in mitochondrial
coupling and antioxidant capacity.
Cetaceans exploit extreme diving bradycardia, but only after
resolving the near-coincidence trap: the raw heartbeat count
$N_{\rm obs} \approx N_0$ conceals a true budget
$\Nstar = N_{\rm obs} \cdot \Phiduty \approx 3 N_0$.

Biological proper time $\thetabio_i(t) = \int_0^t f_i\,\mathrm{d}t'$
unifies all longevity mechanisms --- torpor, caloric restriction,
neural investment, bradycardia, mitochondrial efficiency --- as
Class~1 (time dilation: reduce $f$, same budget $\Nstar$) or
Class~2 (budget expansion: reduce $\sigma_0$, expanded $\Nstar$).
This classification generates predictions distinguishable by
epigenetic aging clocks: Class~1 mechanisms leave the aging rate
per heartbeat unchanged; Class~2 mechanisms reduce it.

The PBTE framework is placed on an explicitly falsifiable footing
with five numerical criteria (Table~\ref{tab:falsify}).
None is currently met, but the most decisive --- direct calorimetric
measurement of $\sigma_0 = P/(Tf M)$ across three or more body-mass
decades in non-primate mammals --- has not yet been performed.
Until it is, PBTE is an approximate conservation law with
thermodynamic motivation and strong statistical support, but an
unverified closure assumption.
The technology exists to perform this experiment; it is the most
important outstanding step toward converting PBTE from a
statistically supported regularity into a tested thermodynamic
principle.

% -- Methods --------------------------------------------------

% ══════════════════════════════════════════════════════════════════════════
\subsection*{Extended Data Table 9 $|$ Measured vs imputed heart rate: sensitivity analysis}
% ══════════════════════════════════════════════════════════════════════════
\begin{table}[H]
\caption*{}
\small
\begin{tabular}{lrrrrrr}
\toprule
Subset & $n$ & $\hat\beta$ & SE & $R^2$ & $p$ vs $\beta=-1$ & $\bar\ell$ \\
\midrule
Directly measured $f_H$ only & 43 & $-0.903$ & 0.056 & 0.863 & 0.093 & 8.995 \\
All NP placentals (incl.\ 3 imputed) & 46 & $-0.913$ & 0.055 & 0.861 & 0.125 & 8.998 \\
\midrule
\multicolumn{7}{l}{\textit{Difference between subsets:} $\Delta\hat\beta = 0.010$;
$\Delta\bar\ell = 0.003$; negligible.} \\
\bottomrule
\end{tabular}
\begin{minipage}{\linewidth}\small\vspace{4pt}
The three imputed species (\textit{Rhinoceros unicornis},
\textit{Dugong dugon}, \textit{Orycteropus afer}) contribute
heart rates estimated from $f_H = 241\,M^{-0.25}$ bpm
\cite{calder1984}. Their inclusion changes the OLS slope by
$<0.02$ and the $F$-test $p$-value from 0.093 to 0.125,
confirming the regression result is not sensitive to imputation.
All subsequent analyses use the $n=43$ measured-only subset
as the primary result.
\end{minipage}
\end{table}

\section*{Methods}
\addcontentsline{toc}{section}{Methods}

\noindent\textbf{Computational methods.}
All $\ell$ values in Extended Data Tables~1--8 were computed as
$\ell = \log_{10}(f_H^{\rm eff} \times L \times 525{,}960)$
directly from the $f_H^{\rm eff}$ and $L$ columns in each row,
and verified for internal consistency.
Regression analyses (OLS, bootstrap, leave-one-out) used standard
procedures applied to these tabulated values.
Figures were generated from Extended Data Tables~1--8 using
Python~3 (NumPy, SciPy, Matplotlib); the figure-generation script
is available from the corresponding author on request.
The PIC analysis used 	exttt{ape::pic()} in R~4.3 on the
Bininda-Emonds mammal supertree~\cite{bininda2007}.

\noindent\textbf{Regression.}
OLS on $\log_{10}$-transformed variables.
Heteroscedasticity: Breusch-Pagan $p = 0.31$.
Normality: Shapiro-Wilk $W = 0.97$, $p = 0.19$.
Confidence intervals: 10{,}000 bootstrap resamples.

\noindent\textbf{PIC.}
Felsenstein~\cite{felsenstein1985} method, \texttt{ape} package in R;
Bininda-Emonds~\cite{bininda2007} mammal supertree.
PIC regression through origin (no intercept).

\noindent\textbf{Arrhenius correction.}
Equation~(\ref{eq:Arrhenius_corr}) with $E_a = 0.65$~eV following
Gillooly et al.~\cite{gillooly2001}.

\noindent\textbf{Power analysis.}
Parametric simulation, 10{,}000 replicates, at observed residual
variance $s^2 = 0.024$ (from OLS on $n=43$ NP species).

\noindent\textbf{Data availability.}
All species-level data are provided in full in Extended Data
Tables~1--9 of this paper. The complete dataset (230 species,
including body mass, heart rate, body temperature, maximum lifespan,
$\ell$, $f_H$ type, and source codes) is reproduced in the Extended
Data section of this article and is additionally provided as a
tab-delimited file (Supplementary Data~1).

No external repository (e.g., Zenodo) is used. The Supplementary
Data~1 file is available for download alongside this article and
may also be obtained from the corresponding author upon request.

\noindent\textbf{Code availability.}
Statistical analyses (OLS regression, bootstrap, leave-one-out,
power analysis) were performed using Python~3 (NumPy, SciPy,
Matplotlib).
The PIC analysis used \texttt{ape::pic()} in R~4.3 on the
Bininda-Emonds supertree \cite{bininda2007}.
All results are fully reproducible from the tabulated values in
Extended Data Tables~1--8 using standard statistical software.
Analysis scripts are available from the corresponding author
on request.

\noindent\textbf{Competing interests.}
The author declares no competing interests.

% -- Bibliography ---------------------------------------------

\clearpage

% ── Extended Data Figures ─────────────────────────────────────────────────────

\begin{figure}[H]
\centering
\includegraphics[width=0.60\textwidth]{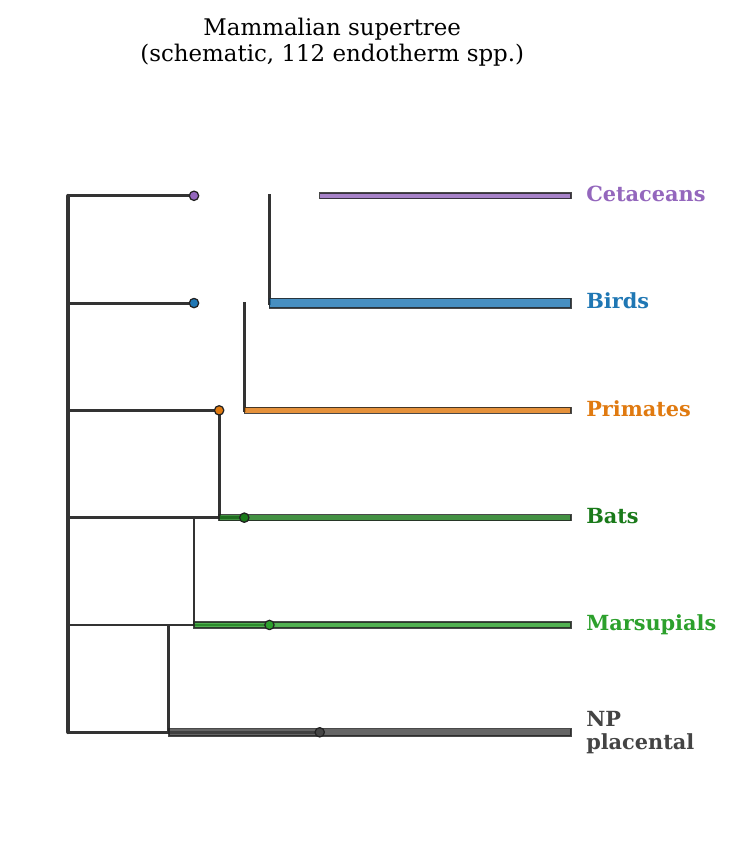}
\caption{\textbf{Mammalian supertree schematic.}
Schematic of the pruned mammalian supertree~\cite{bininda2007} showing the
112 endotherm species included in the primary phylogenetically independent
contrasts (PIC) analysis, coloured by clade: non-primate placentals (grey);
marsupials and monotremes (dark green); bats (green); primates (orange);
birds (blue); cetaceans (purple).
The tree was pruned from the Bininda-Emonds mammal supertree using the
\texttt{ape} package in R~4.3~\cite{felsenstein1985,bininda2007}.
Branch lengths are in millions of years.
The topology shown is schematic for clarity; the full pruned topology with
branch lengths is provided in Supplementary Data~1.
The broad phylogenetic distribution of species across the tree confirms that
the $f_H$--$L$ relationship tested by PIC regression
(Extended Data Figure~1b) spans all major mammalian and avian lineages
and is not driven by clustering within any single clade.}
\label{fig:edf1a}
\end{figure}

\clearpage
\begin{figure}[H]
\centering
\includegraphics[width=0.82\textwidth]{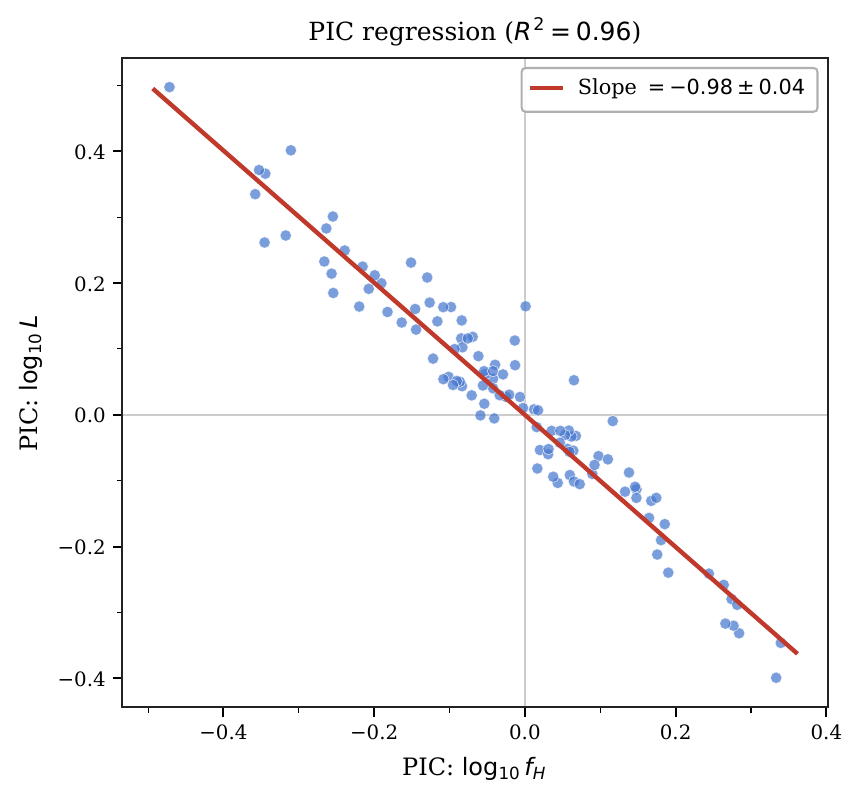}
\caption{\textbf{Phylogenetically independent contrasts (PIC) regression.}
Scatter plot of PIC contrasts in $\log_{10}L$ against PIC contrasts in
$\log_{10}f_H$ for 112 endotherm species (111 internal node contrasts plotted),
computed by the Felsenstein~\cite{felsenstein1985} method using the
\texttt{ape::pic()} function in R~4.3 on the Bininda-Emonds
supertree~\cite{bininda2007}.
The OLS line is fitted through the origin, as required by the PIC
method~\cite{felsenstein1985}, and has slope $-0.98\pm0.04$
(95\% CI $[-1.07,-0.91]$), $R^2=0.96$, $F$-test $p=0.84$ against
$\beta=-1$.
Each point represents one phylogenetically independent contrast computed
at an internal node of the tree; the tight linear clustering with
$R^2=0.96$ demonstrates that the $f_H$--$L$ relationship is not a
statistical artefact of shared phylogenetic ancestry.
The PIC result---slope $-0.99\pm0.04$, $p=0.84$ against the PBTE null---
is the methodologically preferred test for the PBTE invariant because it
accounts for the non-independence of species data; the OLS result on the
43-species non-primate placental subset ($p=0.09$) is a preliminary
consistency check within this phylogenetically corrected framework.
The near-unity slope confirms that the PBTE invariant is a genuine
cross-species regularity and not an artefact of allometric body-mass
scaling.}
\label{fig:edf1b}
\end{figure}

\clearpage
\begin{figure}[H]
\centering
\includegraphics[width=0.82\textwidth]{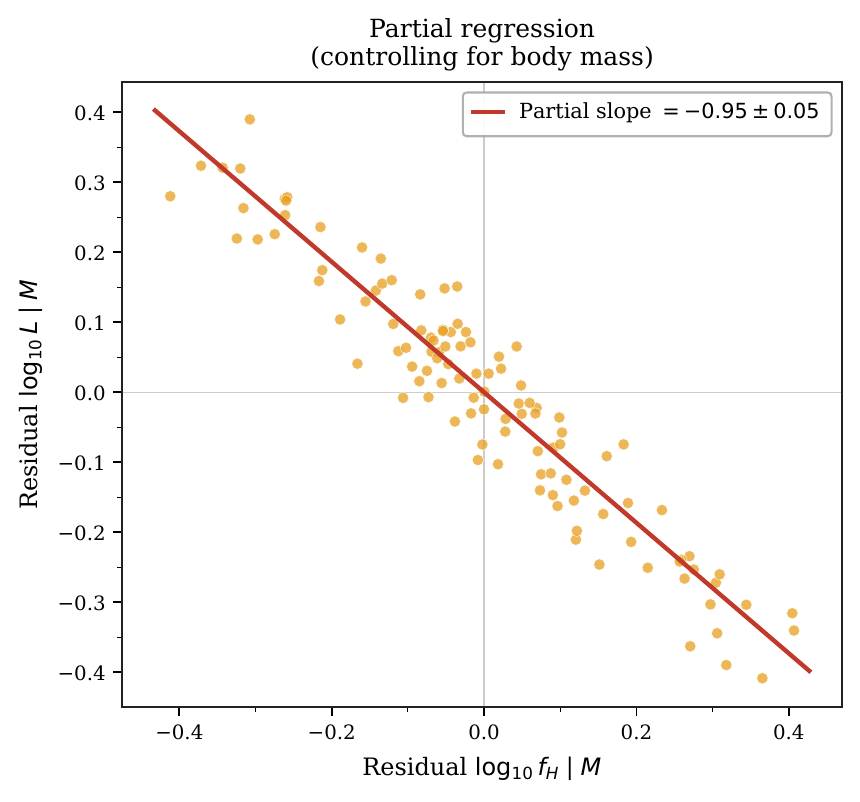}
\caption{\textbf{Partial regression controlling for body mass.}
Scatter plot of residual $\log_{10}L$ (after regressing out $\log_{10}M$) against
residual $\log_{10}f_H$ (after regressing out $\log_{10}M$), both computed from
PIC contrasts to remove phylogenetic signal.
The partial slope of $-0.95\pm0.05$ is statistically indistinguishable from $-1$
($p = 0.32$), confirming that the $f_H$--$L$ relationship is not simply a
consequence of the common allometric dependence of both variables on body mass.
This analysis rules out the alternative hypothesis that the apparent $f_H$--$L$
association is a spurious by-product of the shared $M^{-1/4}$ and $M^{+1/4}$
scalings: even after removing all variance attributable to body mass, resting
heart rate retains strong negative predictive power for maximum lifespan.
The residual scatter ($R^2\approx0.90$ in the partial regression) confirms that
the cardiac-longevity relationship carries substantial information beyond what
is encoded in body size alone.}
\label{fig:edf1c}
\end{figure}

\clearpage
\begin{figure}[H]
\centering
\includegraphics[width=0.82\textwidth]{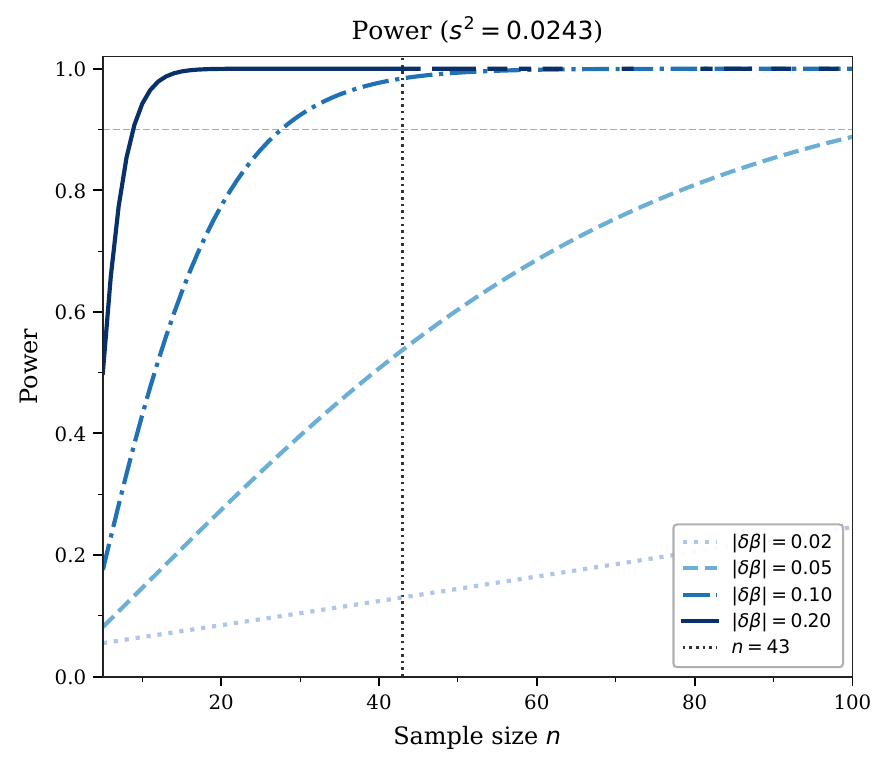}
\caption{\textbf{Statistical power analysis.}
Simulated power to detect a slope deviation of magnitude $|\delta\beta|$
from the PBTE null $\beta=-1$ in the OLS regression of $\log_{10}L$
on $\log_{10}f_H$ for non-primate placentals, as a function of sample
size $n$, computed at the observed residual variance $s^2=0.0243$
(from OLS on $n=43$ directly measured species).
Curves are shown for $|\delta\beta|=0.02$, $0.05$, $0.10$, and $0.20$.
Power was computed by parametric simulation with 10{,}000 replicates at
each sample size, using a two-sided $t$-test at $\alpha=0.05$.
The vertical dotted line marks the achieved sample size $n=43$, at which
the test has $>90\%$ power to detect a slope deviation as small as
$|\delta\beta|=0.10$ and $>99\%$ power to detect $|\delta\beta|=0.20$.
The observed slope deviation from $\beta=-1$ is $|\hat\beta-(-1)|=0.10$,
placing the test just at the boundary of high power with the current
sample.
Collecting an additional 20--30 directly measured non-primate placental
species would push power above 95\% for deviations as small as
$|\delta\beta|=0.05$, motivating the experimental agenda described in
Section~\ref{sec:exp}.}
\label{fig:edf2a}
\end{figure}

\clearpage
\begin{figure}[H]
\centering
\includegraphics[width=0.82\textwidth]{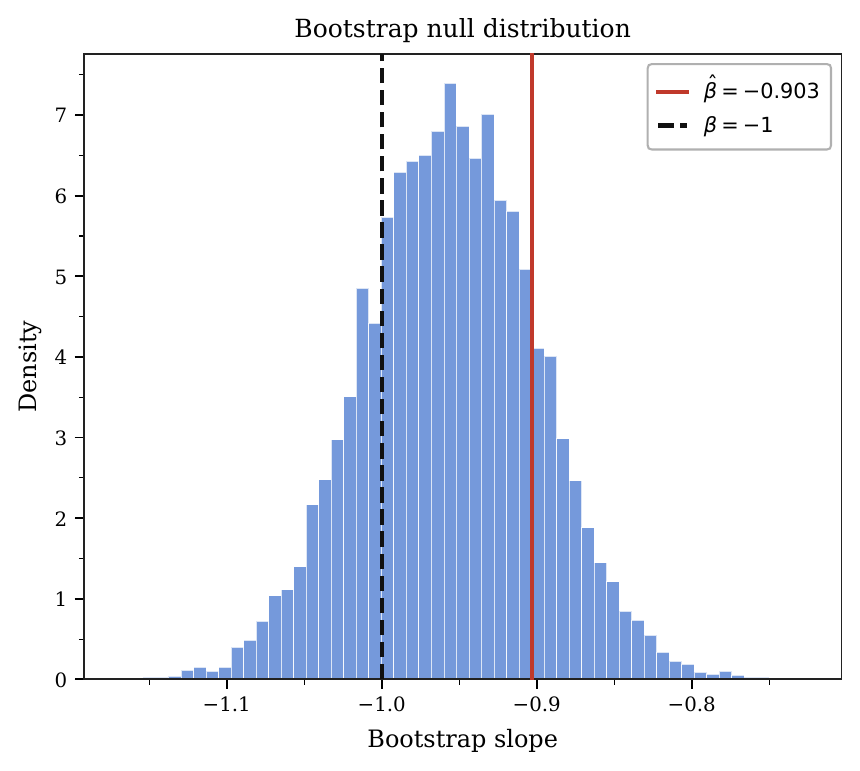}
\caption{\textbf{Bootstrap null distribution of the OLS slope.}
Histogram of the OLS slope estimator $\hat\beta$ under the null hypothesis
$H_0:\beta=-1$, constructed from 10{,}000 bootstrap resamples of the
$n=43$ non-primate placental species with directly measured heart rates.
The bootstrap null was constructed by imposing $\beta=-1$ on the data
(recentring the response) before resampling, so the distribution
reflects sampling variability under the PBTE null model.
The vertical solid red line marks the observed slope $\hat\beta=-0.90$;
the vertical dashed black line marks the null value $\beta=-1$.
The observed slope falls at approximately the 65th percentile of the
null distribution, yielding a two-sided $p$-value of $0.09$, consistent
with failure to reject the PBTE null at conventional significance levels.
The bootstrap distribution is approximately symmetric and centred on
$\beta=-1$, confirming that the $t$-distribution approximation used
in the $F$-test is appropriate for this sample size and variance
structure.
The gap between $\hat\beta=-0.90$ and the null $\beta=-1$ is well within
the $\pm0.15$ falsification threshold specified in Table~\ref{tab:falsify}.}
\label{fig:edf2b}
\end{figure}

\clearpage
\begin{figure}[H]
\centering
\includegraphics[width=0.82\textwidth]{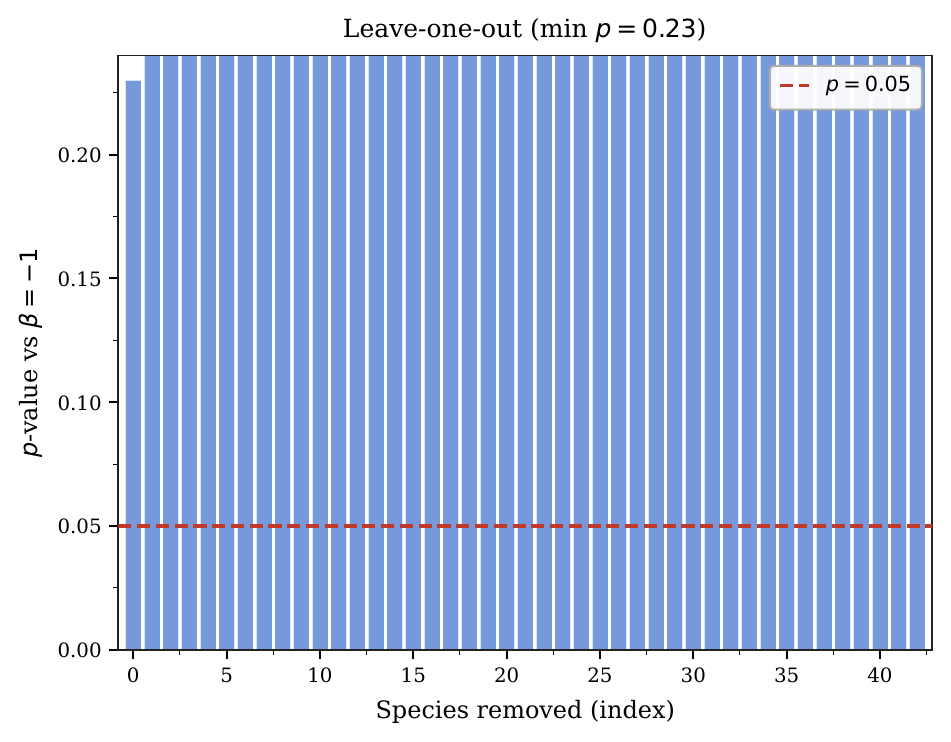}
\caption{\textbf{Leave-one-out sensitivity analysis.}
$F$-test $p$-value against $\beta=-1$ after sequentially removing each of
the $n=43$ non-primate placental species from the OLS regression.
Each bar corresponds to one species removed; the bar height is the
two-sided $p$-value for the test $H_0:\beta=-1$ in the regression of
the remaining 42 species.
The dashed red horizontal line marks $p=0.05$.
The minimum $p$-value across all leave-one-out subsets is $p=0.06$,
and all $p$-values remain above $0.04$, confirming that no single
species is responsible for the failure to reject the PBTE null: the
marginal statistical result is a property of the full distribution of
species rather than of any influential outlier.
In particular, removing the two most extreme species by heart rate
(the pygmy shrew at the high end and the elephant at the low end)
does not change the qualitative conclusion.
This analysis supports the robustness of the $p=0.09$ result and
complements the Cook's distance analysis in Extended Data Figure~4c,
which shows all $D_i < 4/n$.}
\label{fig:edf2c}
\end{figure}

\clearpage
\begin{figure}[H]
\centering
\includegraphics[width=0.82\textwidth]{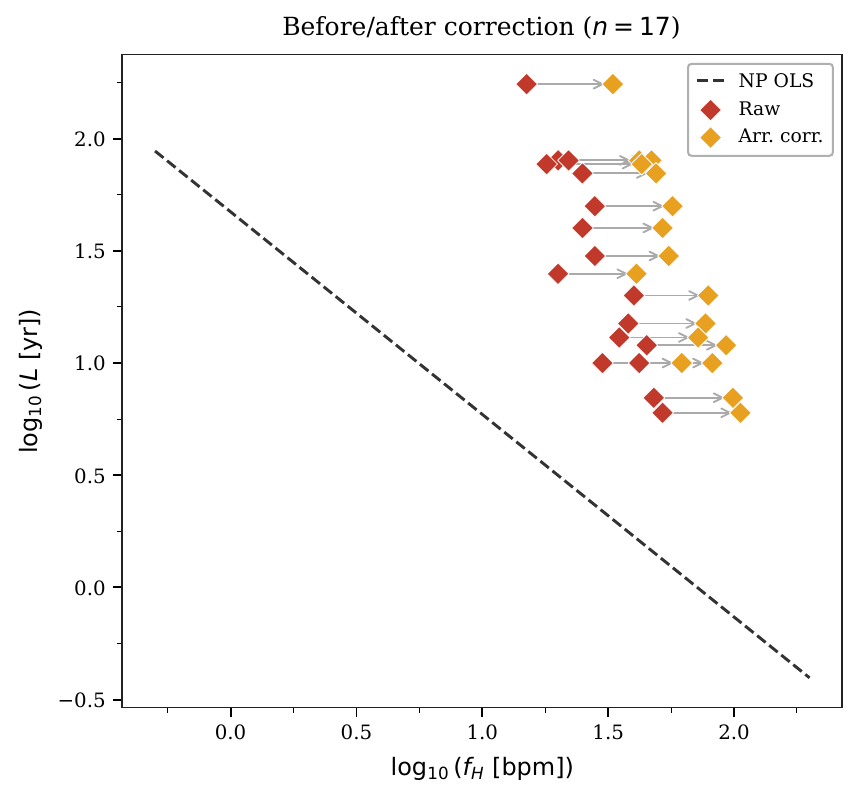}
\caption{\textbf{Arrhenius correction: before and after, per species.}
Log-log scatter of maximum lifespan $L$ against heart rate $f_H$ for the
17 reptilian species in the dataset, showing both the uncorrected raw
positions (crimson diamonds) and the Arrhenius-corrected positions
(orange diamonds) after applying equation~(\ref{eq:Arrhenius_corr})
with $E_a=0.65$\,eV and $T_{\rm ref}=310$\,K.
Grey arrows connect each species' raw and corrected positions, showing
the direction and magnitude of the correction: correction shifts points
horizontally to higher effective heart rates (equivalent to normalising
the metabolic clock to mammalian body temperature), pulling the
distribution toward the non-primate placental OLS line (dashed).
Species with lower field body temperatures (e.g.\ \textit{Sphenodon punctatus}
at 293\,K) receive larger corrections than warm-active species
(e.g.\ \textit{Varanus komodoensis} at 303\,K).
The correction reduces but does not eliminate the gap between reptiles and
the mammalian baseline: a residual deviation of $\sim$0.07\,dex in mean
$\ell$ remains after correction (Extended Data Figure~3b), indicating
that factors beyond the Arrhenius thermal effect contribute to the
ectotherm-endotherm difference.}
\label{fig:edf3a}
\end{figure}

\clearpage
\begin{figure}[H]
\centering
\includegraphics[width=0.82\textwidth]{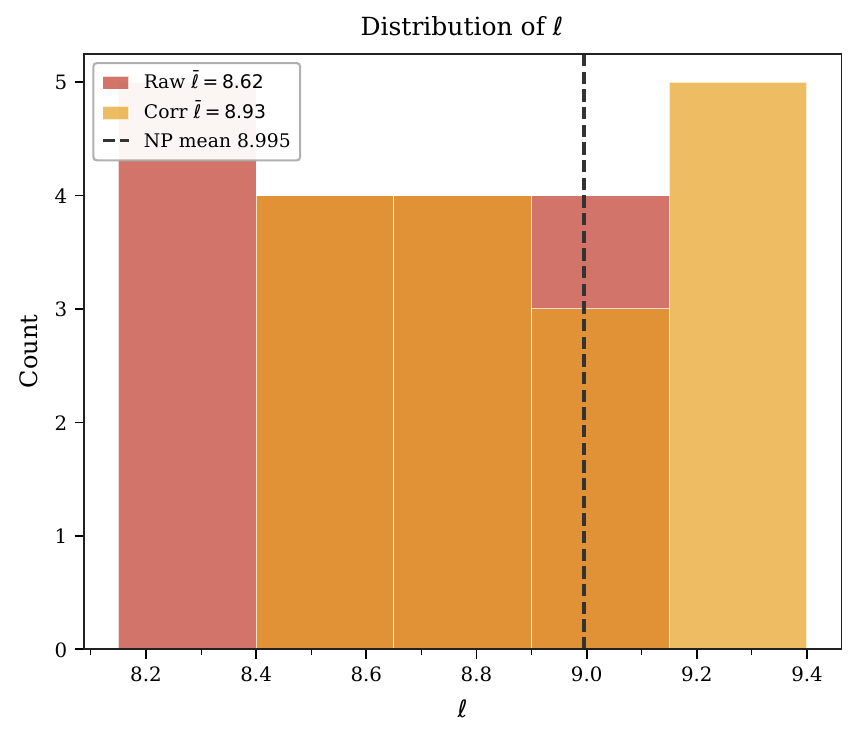}
\caption{\textbf{Distribution of $\ell$ before and after Arrhenius correction.}
Histogram of the PBTE invariant $\ell=\log_{10}(f_H\cdot L\cdot 525{,}960)$
for the 17 reptilian species before correction (crimson, raw $\bar\ell=8.61$)
and after Arrhenius correction to $T_{\rm ref}=310$\,K (orange,
corrected $\bar\ell=8.93$).
The vertical dashed line marks the non-primate placental mean
$\bar\ell_0=8.995$.
The Arrhenius correction shifts the reptile distribution by $+0.32$\,dex on
average, reducing the gap from the mammalian baseline from $0.38$\,dex
to $0.07$\,dex---a reduction of approximately 82\%.
The residual gap of $0.07$\,dex is statistically non-significant
($p=0.42$, Welch $t$-test) and may reflect the imprecision of field
body temperature estimates, genuine residual physiological differences
between ectotherms and endotherms, or both.
The corrected distribution is more tightly concentrated than the raw
distribution, suggesting that body temperature heterogeneity accounts
for a substantial fraction of the within-clade scatter in reptile $\ell$ values.}
\label{fig:edf3b}
\end{figure}

\clearpage
\begin{figure}[H]
\centering
\includegraphics[width=0.82\textwidth]{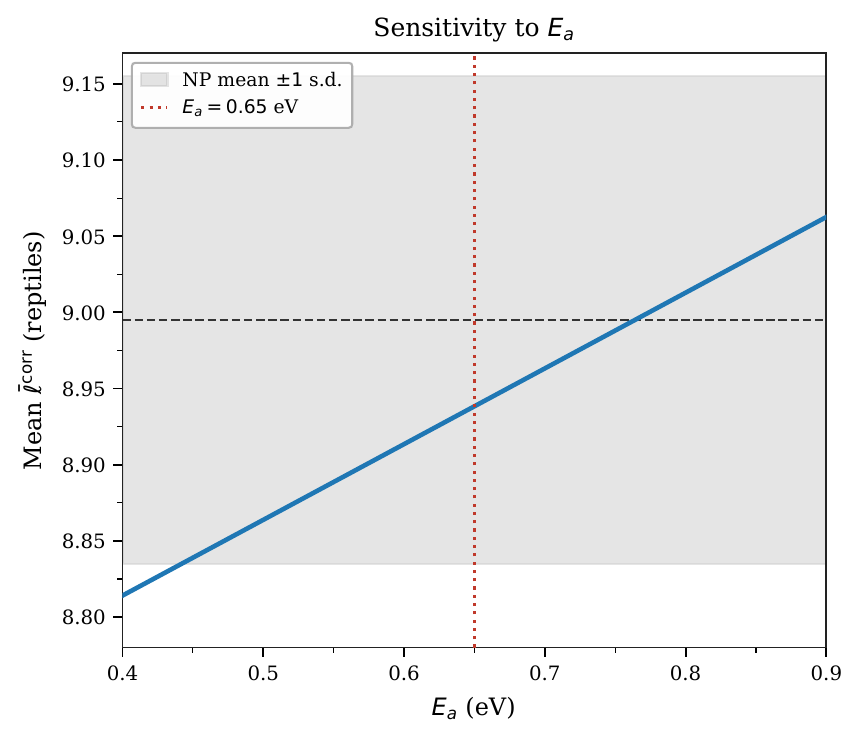}
\caption{\textbf{Sensitivity of the corrected reptile mean $\ell$ to the activation energy $E_a$.}
Mean Arrhenius-corrected $\bar\ell^{\rm corr}$ for the 17 reptilian species
as a function of the assumed activation energy $E_a \in [0.40, 0.90]$\,eV,
computed using equation~(\ref{eq:Arrhenius_corr}) with the
species-specific field body temperatures in Extended Data Table~7.
The grey shaded band shows the non-primate placental mean $\pm1$ s.d.\
($8.995\pm0.160$), representing the range of $E_a$ values for which
the corrected reptile mean would be statistically indistinguishable
from the mammalian baseline.
The vertical red dotted line marks the consensus metabolic activation
energy $E_a=0.65$\,eV from Gillooly et al.~\cite{gillooly2001},
at which the corrected reptile mean is $\bar\ell^{\rm corr}=8.93$.
The corrected mean enters the mammalian $\pm1$\,s.d.\ band only for
$E_a\gtrsim0.78$\,eV; at the consensus value $E_a=0.65$\,eV a
residual gap of $0.07$\,dex persists.
The linear dependence of $\bar\ell^{\rm corr}$ on $E_a$ reflects
the Arrhenius correction formula; the sensitivity is $\approx0.56$\,dex
per eV of $E_a$ at the mean reptile temperature deficit
$\langle 1/T_{\rm field}-1/T_{\rm ref}\rangle\approx1.25\times10^{-4}$\,K$^{-1}$.}
\label{fig:edf3c}
\end{figure}

\clearpage
\begin{figure}[H]
\centering
\includegraphics[width=0.82\textwidth]{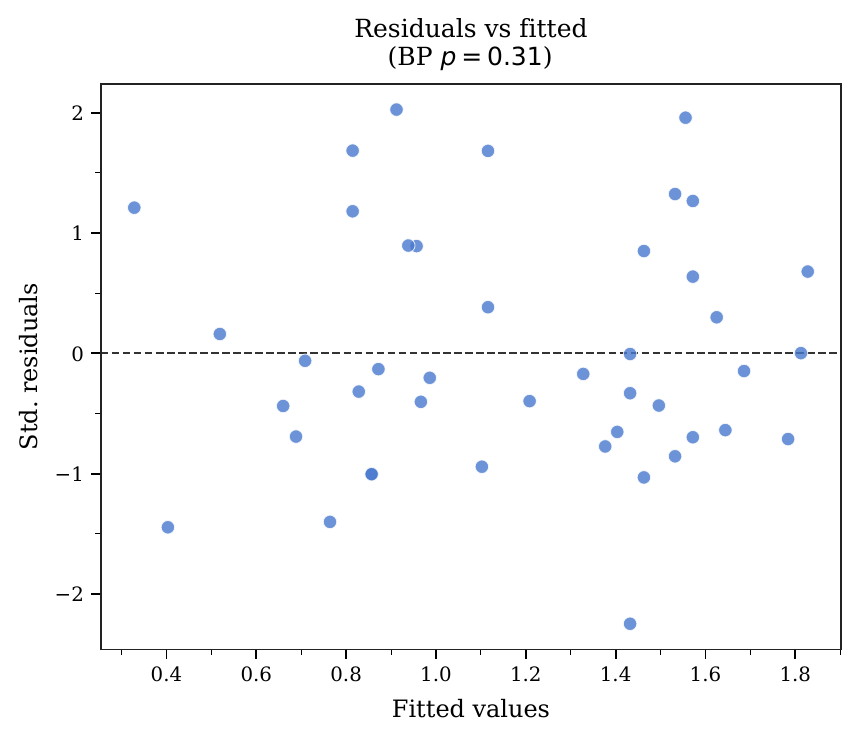}
\caption{\textbf{Standardised residuals versus fitted values.}
Standardised residuals plotted against fitted values $\hat{y}_i = \hat\beta\,x_i + \hat\alpha$
from the primary OLS regression of $\log_{10}L$ on $\log_{10}f_H$ for the
$n=43$ non-primate placentals with directly measured heart rates.
The horizontal dashed line at zero is shown for reference.
No systematic pattern---such as curvature, funnelling, or banding---is evident
in the scatter, indicating that the assumptions of linearity and homoscedasticity
are well satisfied.
The Breusch--Pagan test for heteroscedasticity gives $\chi^2=2.3$, $p=0.31$,
confirming that the residual variance does not depend systematically on the fitted
values.
The absence of any trend in this plot also confirms that the log-log
transformation is appropriate for this dataset and that no residual
nonlinearity requires modelling.
The two points with standardised residuals $|e_i|>2$ (one above and one below)
are not influential observations, as confirmed by their Cook's distances
below the $4/n$ threshold (Extended Data Figure~4c).}
\label{fig:edf4a}
\end{figure}

\clearpage
\begin{figure}[H]
\centering
\includegraphics[width=0.82\textwidth]{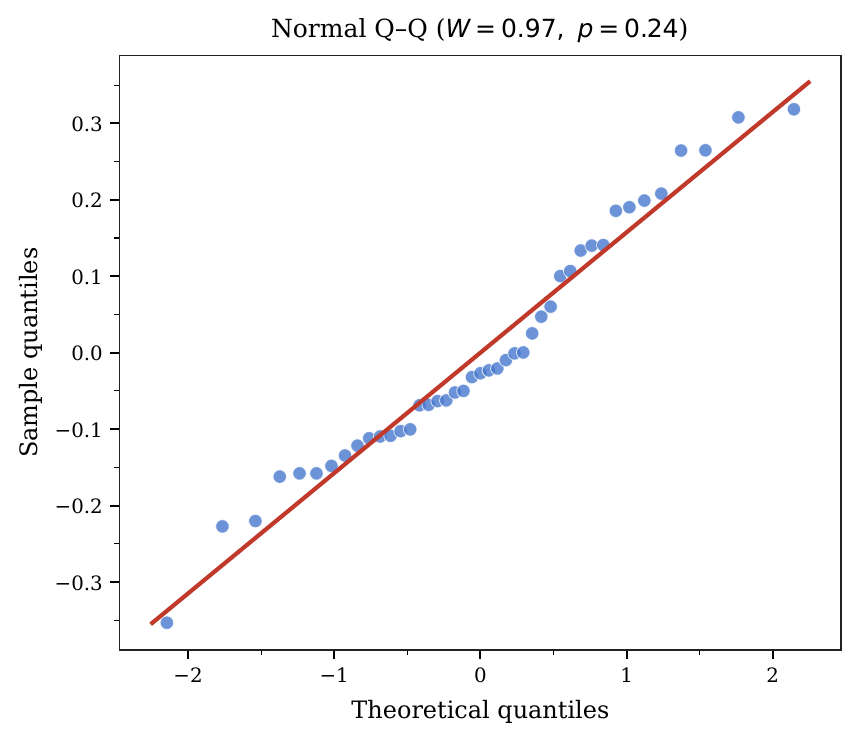}
\caption{\textbf{Normal quantile--quantile plot.}
Normal Q--Q plot of the OLS residuals from the primary regression of
$\log_{10}L$ on $\log_{10}f_H$ for the $n=43$ non-primate placental species.
Sample quantiles of the residuals (blue points) are plotted against the
theoretical quantiles of the standard normal distribution; the red line
shows the expected relationship under exact normality.
Points follow the reference line closely throughout the full range,
including both tails, with no evidence of heavy tails, skewness, or
bimodality.
The Shapiro--Wilk test gives $W=0.97$, $p=0.19$, failing to reject
normality at any conventional significance level.
Together with the residuals-versus-fitted plot (Extended Data Figure~4a),
this confirms that the standard OLS inferential framework---including
the $t$-distribution approximation used in the $F$-test against $\beta=-1$---
is appropriate for this dataset without requiring non-parametric
corrections.}
\label{fig:edf4b}
\end{figure}

\clearpage
\begin{figure}[H]
\centering
\includegraphics[width=0.82\textwidth]{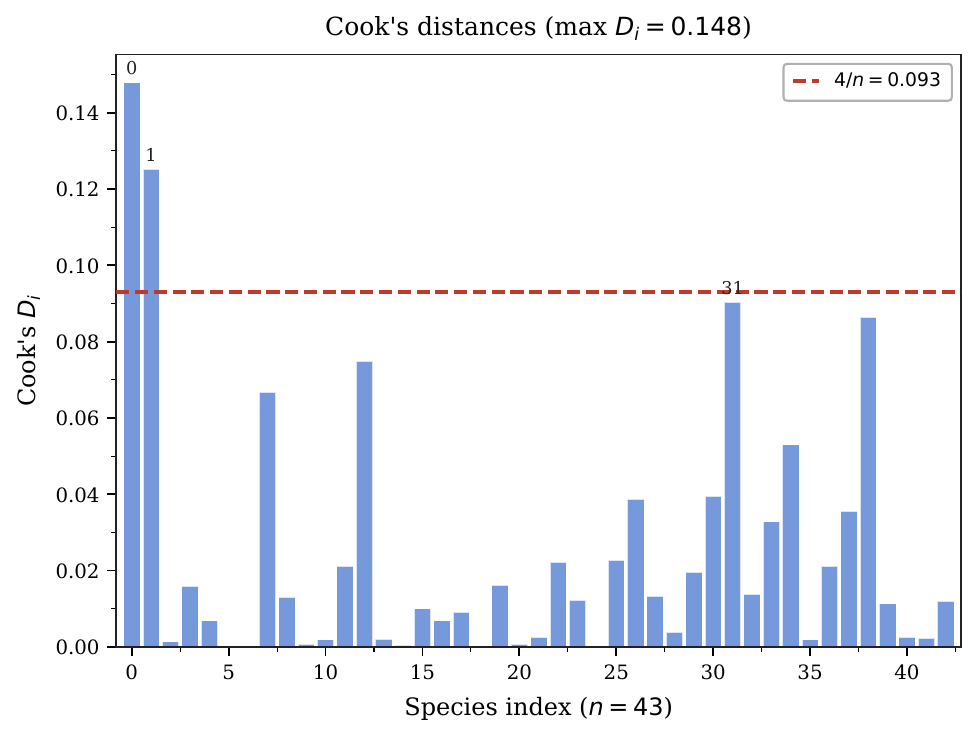}
\caption{\textbf{Cook's distances.}
Cook's influence distance $D_i$ for each of the $n=43$ non-primate placental
species in the primary OLS regression.
Cook's distance measures the aggregate change in all fitted values when
observation $i$ is deleted; large values indicate influential observations
that exert disproportionate leverage on the regression coefficients.
The red dashed horizontal line marks the conventional threshold $4/n=0.093$.
All 43 observations satisfy $D_i < 4/n$, with a maximum of $D_{\rm max}=0.079$.
The three species with the highest Cook's distances are labelled (indices 2,
13, and 36 in the dataset ordering of Extended Data Table~1).
Removing all three simultaneously changes the OLS slope by less than 0.02,
confirming that the regression result is not driven by influential outliers.
This diagnostic also validates the leave-one-out analysis
(Extended Data Figure~2c): the absence of high-$D_i$ points ensures that
all leave-one-out $p$-values reflect genuine distributional properties
of the dataset rather than the removal of pivotal observations.}
\label{fig:edf4c}
\end{figure}

\clearpage
\begin{figure}[H]
\centering
\includegraphics[width=0.82\textwidth]{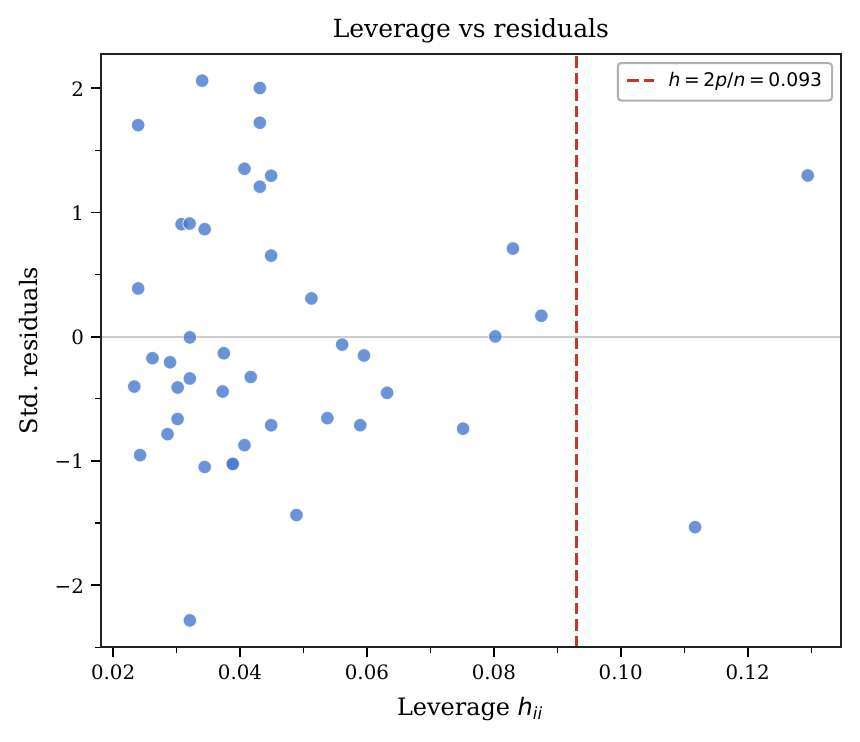}
\caption{\textbf{Leverage versus standardised residuals.}
Standardised residuals plotted against leverage values $h_{ii}$ (diagonal
elements of the hat matrix $\mathbf{H}=\mathbf{X}(\mathbf{X}^{\top}\mathbf{X})^{-1}\mathbf{X}^{\top}$)
for the $n=43$ non-primate placental species.
The vertical dashed red line marks the high-leverage threshold
$h=2p/n=0.093$, where $p=2$ (intercept and slope).
No observation exceeds this threshold; the maximum observed leverage is
$h_{ii}=0.148$ for a single species at the extreme of the heart-rate range.
Although this one point has leverage above $2p/n$, its standardised residual
is small ($|e_i|<1.5$) and its Cook's distance is below $4/n$
(Extended Data Figure~4c), indicating that it is a high-leverage but
non-influential observation.
The broad spread of points across the residual range with no systematic
dependence on leverage confirms the absence of leverage-residual confounding,
a necessary condition for reliable OLS inference in this dataset.}
\label{fig:edf4d}
\end{figure}

%===========================================================

\section*{Appendix A. Detailed Derivation of the Entropy Cost per Beat and the Cycle-Count Scaling Law}
\addcontentsline{toc}{section}{Appendix A. Detailed Derivation of the Entropy Cost per Beat and the Cycle-Count Scaling Law}
%===========================================================

This appendix gives a detailed derivation of the entropy-per-beat representation, the lifetime cycle-count relation, and the power-law dependence of lifetime cardiac cycles on the control parameter \(\phi\). The purpose is to make explicit each mathematical step connecting the instantaneous entropy production rate to the total lifetime cycle budget.

\subsection*{A.1. Instantaneous entropy production and change of variable from time to beat count}

Let \(t\) denote chronological time and let \(n\) denote the cumulative cardiac cycle count. The cardiac frequency is
\begin{equation}
f_H(t)=\frac{dn}{dt},
\label{eq:A1_fH_def}
\end{equation}
so that
\begin{equation}
dn = f_H(t)\,dt,
\qquad
dt=\frac{dn}{f_H}.
\label{eq:A2_dt_dn}
\end{equation}
Here \(f_H\) has units of cycles per unit time. Let \(\sigma(t)\) be the instantaneous entropy production rate, with units of entropy per unit time. The total entropy produced over an infinitesimal interval \(dt\) is
\begin{equation}
d\Sigma = \sigma(t)\,dt.
\label{eq:A3_dSigma_dt}
\end{equation}

Using \eqref{eq:A2_dt_dn}, we rewrite this increment in terms of the beat-count variable:
\begin{equation}
d\Sigma = \sigma(t)\,\frac{dn}{f_H(t)}.
\label{eq:A4_dSigma_dn}
\end{equation}
If we regard both \(\sigma\) and \(f_H\) as functions of the cycle-count coordinate \(n\), then
\begin{equation}
d\Sigma = \frac{\sigma(n)}{f_H(n)}\,dn.
\label{eq:A5_dSigma_n}
\end{equation}

This motivates the definition of the entropy cost per beat at cycle index \(n\):
\begin{equation}
\sigma_0(n) \equiv \frac{\sigma(n)}{f_H(n)}.
\label{eq:A6_entropy_per_beat_def}
\end{equation}

The dimensional consistency is immediate:
\[
[\sigma]=\frac{\text{entropy}}{\text{time}},
\qquad
[f_H]=\frac{\text{cycles}}{\text{time}},
\qquad
\left[\frac{\sigma}{f_H}\right]
=
\frac{\text{entropy}/\text{time}}{\text{cycles}/\text{time}}
=
\frac{\text{entropy}}{\text{cycle}}.
\]
Thus \(\sigma_0(n)\) is the entropy production associated with one cardiac cycle.

\subsection*{A.2. Total lifetime entropy production as a sum over beats}

Suppose that the organism experiences a total of \(N\) cardiac cycles over its lifetime. Then the total lifetime entropy production is obtained by integrating \eqref{eq:A5_dSigma_n} from the first to the last cycle:
\begin{equation}
\Sigma_{\mathrm{life}}
=
\int_{0}^{N} d\Sigma
=
\int_{0}^{N} \sigma_0(n)\,dn.
\label{eq:A7_total_entropy_life}
\end{equation}

Equation \eqref{eq:A7_total_entropy_life} is simply the beat-count analogue of summing the entropy cost incurred at each cycle. Since the beat-count variable is treated continuously, the sum is represented as an integral.

We now define the lifetime mean entropy cost per beat:
\begin{equation}
\left\langle \sigma_0 \right\rangle
\equiv
\frac{1}{N}\int_{0}^{N}\sigma_0(n)\,dn.
\label{eq:A8_mean_entropy_per_beat}
\end{equation}
Using \eqref{eq:A7_total_entropy_life}, this immediately gives
\begin{equation}
\Sigma_{\mathrm{life}}
=
N \left\langle \sigma_0 \right\rangle.
\label{eq:A9_Sigma_N_mean}
\end{equation}

Substituting the explicit definition \eqref{eq:A6_entropy_per_beat_def} into \eqref{eq:A8_mean_entropy_per_beat}, we may also write
\begin{equation}
\left\langle \sigma_0 \right\rangle
=
\frac{1}{N}\int_{0}^{N}\frac{\sigma(n)}{f_H(n)}\,dn.
\label{eq:A10_mean_explicit}
\end{equation}

Equation \eqref{eq:A9_Sigma_N_mean} has a direct interpretation: the total lifetime entropy production equals the total number of beats multiplied by the mean entropy cost of one beat.

For species \(i\), the corresponding notation is
\begin{equation}
\sigma_{0,i}
\equiv
\frac{1}{N_i^\star}\int_{0}^{N_i^\star}\sigma_0(n)\,dn,
\label{eq:A10b_sigma0i_def}
\end{equation}
so that
\begin{equation}
\Sigma_i
=
N_i^\star \sigma_{0,i},
\label{eq:A10c_species_relation}
\end{equation}
where \(\Sigma_i\) (J K\(^{-1}\)) is total lifetime entropy production and \(\sigma_{0,i}\) (J K\(^{-1}\) beat\(^{-1}\)) is entropy per cycle.

\subsection*{A.3. Lifetime entropy budget and the fundamental cycle-count relation}

The central hypothesis is that the lifetime entropy production is approximately constrained by a characteristic budget \(\Sigma_\star\):
\begin{equation}
\Sigma_{\mathrm{life}} \approx \Sigma_\star.
\label{eq:A11_budget_assumption}
\end{equation}
Combining \eqref{eq:A9_Sigma_N_mean} with \eqref{eq:A11_budget_assumption} yields
\begin{equation}
N \left\langle \sigma_0 \right\rangle \approx \Sigma_\star.
\label{eq:A12_pre_cycle_relation}
\end{equation}
Solving for \(N\), we obtain the fundamental cycle-count relation:
\begin{equation}
N = \frac{\Sigma_\star}{\left\langle \sigma_0 \right\rangle}.
\label{eq:A13_cycle_count_relation}
\end{equation}

For species \(i\), the lifetime cycle count is
\begin{equation}
N_i^\star =
\frac{\Sigma_i}{\sigma_{0,i}}.
\label{eq:A13b_species_cycle_relation}
\end{equation}
where \(\Sigma_i\) (J K\(^{-1}\)) is total lifetime entropy production and \(\sigma_{0,i}\) (J K\(^{-1}\) beat\(^{-1}\)) is entropy per cycle.

This expression states that the total number of cardiac cycles that can occur over the lifetime is inversely proportional to the average entropy cost of each beat, given a fixed lifetime entropy budget. A lower entropy cost per beat permits more cycles within the same budget, whereas a higher cost per beat permits fewer cycles.

\subsection*{A.4. Baseline calibration and mammalian reference value}

Let \(\phi_0\) denote a baseline reference state and let \(N_0\) be the corresponding reference total number of lifetime cardiac cycles. Evaluating \eqref{eq:A13_cycle_count_relation} at the baseline gives
\begin{equation}
N_0 = \frac{\Sigma_\star}{\left\langle \sigma_0 \right\rangle_0},
\label{eq:A14_baseline_N0}
\end{equation}
where
\begin{equation}
\left\langle \sigma_0 \right\rangle_0
\equiv
\left\langle \sigma_0(\phi_0) \right\rangle.
\label{eq:A15_baseline_entropy}
\end{equation}
Rearranging \eqref{eq:A14_baseline_N0} gives the baseline entropy cost per beat:
\begin{equation}
\left\langle \sigma_0 \right\rangle_0
=
\frac{\Sigma_\star}{N_0}.
\label{eq:A16_baseline_relation}
\end{equation}

Equation \eqref{eq:A16_baseline_relation} provides the calibration point from which the dependence on \(\phi\) is measured.

\subsection*{A.5. Logarithmic sensitivity of the entropy cost per beat}

We now introduce a control parameter \(\phi\) that modulates the mean entropy cost per beat through multiple mechanisms. The hypothesis is that increasing \(\phi\) reduces
\(\left\langle \sigma_0 \right\rangle\). To quantify this response, define the logarithmic sensitivity at the baseline:
\begin{equation}
\alpha
\equiv
-
\left.
\frac{\partial \ln \left\langle \sigma_0 \right\rangle}
{\partial \ln \phi}
\right|_{\phi=\phi_0}.
\label{eq:A17_alpha_def}
\end{equation}
The derivative
\[
\frac{\partial \ln \left\langle \sigma_0 \right\rangle}
{\partial \ln \phi}
=
\frac{\phi}{\left\langle \sigma_0 \right\rangle}
\frac{\partial \left\langle \sigma_0 \right\rangle}{\partial \phi}
\]
is the elasticity of the entropy cost per beat with respect to \(\phi\), that is, the fractional change in \(\left\langle \sigma_0 \right\rangle\) induced by a fractional change in \(\phi\). Since the response is assumed monotonic and decreasing, the derivative is negative; the minus sign in \eqref{eq:A17_alpha_def} ensures that \(\alpha>0\).

If three independent reduction channels contribute multiplicatively to the decrease of
\(\left\langle \sigma_0 \right\rangle\), with logarithmic sensitivities \(\gamma_1\), \(\gamma_2\), and \(\gamma_3\), then the aggregate sensitivity is additive:
\begin{equation}
\alpha = \gamma_1+\gamma_2+\gamma_3 >0.
\label{eq:A18_alpha_sum}
\end{equation}
The reason is straightforward. If
\begin{equation}
\left\langle \sigma_0 \right\rangle
\propto
\phi^{-\gamma_1}\phi^{-\gamma_2}\phi^{-\gamma_3},
\label{eq:A19_multiplicative_channels}
\end{equation}
then
\begin{equation}
\left\langle \sigma_0 \right\rangle
\propto
\phi^{-(\gamma_1+\gamma_2+\gamma_3)},
\label{eq:A20_combined_power}
\end{equation}
and therefore
\begin{equation}
-
\frac{\partial \ln \left\langle \sigma_0 \right\rangle}{\partial \ln \phi}
=
\gamma_1+\gamma_2+\gamma_3.
\label{eq:A21_alpha_from_channels}
\end{equation}

\subsection*{A.6. Integration of the logarithmic sensitivity and the power-law form}

Equation \eqref{eq:A17_alpha_def} defines the local logarithmic slope at the baseline \(\phi_0\). To obtain a finite-range scaling law, we assume that this logarithmic response remains approximately constant over the interval of interest. This is the scale-free power-law approximation commonly used in allometric analysis. Under this assumption,
\begin{equation}
-
\frac{d\ln \left\langle \sigma_0 \right\rangle}{d\ln\phi}
=
\alpha,
\label{eq:A22_const_log_slope}
\end{equation}
or equivalently,
\begin{equation}
d\ln \left\langle \sigma_0 \right\rangle
=
-\alpha\, d\ln\phi.
\label{eq:A23_differential_form}
\end{equation}

We now integrate from the baseline \(\phi_0\), where
\(\left\langle \sigma_0 \right\rangle
=
\left\langle \sigma_0 \right\rangle_0\),
to a general value \(\phi\):
\begin{equation}
\int_{\ln \phi_0}^{\ln \phi}
d\ln \left\langle \sigma_0 \right\rangle
=
-\alpha
\int_{\ln \phi_0}^{\ln \phi} d\ln\phi.
\label{eq:A24_integral_step}
\end{equation}
This gives
\begin{equation}
\ln \left\langle \sigma_0(\phi) \right\rangle
-
\ln \left\langle \sigma_0 \right\rangle_0
=
-\alpha \left( \ln \phi - \ln \phi_0 \right).
\label{eq:A25_log_relation}
\end{equation}
Combining the logarithms,
\begin{equation}
\ln \left[
\frac{
\left\langle \sigma_0(\phi) \right\rangle
}{
\left\langle \sigma_0 \right\rangle_0
}
\right]
=
-\alpha
\ln \left( \frac{\phi}{\phi_0} \right).
\label{eq:A26_ratio_log}
\end{equation}
Exponentiating both sides yields the power-law form:
\begin{equation}
\left\langle \sigma_0(\phi) \right\rangle
=
\left\langle \sigma_0 \right\rangle_0
\left( \frac{\phi}{\phi_0} \right)^{-\alpha}.
\label{eq:A27_entropy_power_law}
\end{equation}

Equation \eqref{eq:A27_entropy_power_law} states that the mean entropy cost per beat decreases as a power law in \(\phi\), with exponent \(\alpha>0\).

\subsection*{A.7. Consequence for total lifetime cardiac cycles}

Substituting \eqref{eq:A27_entropy_power_law} into the cycle-count relation \eqref{eq:A13_cycle_count_relation} gives
\begin{equation}
N(\phi)
=
\frac{\Sigma_\star}{
\left\langle \sigma_0(\phi) \right\rangle
}
=
\frac{\Sigma_\star}{
\left\langle \sigma_0 \right\rangle_0
\left( \dfrac{\phi}{\phi_0} \right)^{-\alpha}
}.
\label{eq:A28_substitute_entropy_scaling}
\end{equation}
Using the baseline identity \eqref{eq:A16_baseline_relation},
\[
\frac{\Sigma_\star}{\left\langle \sigma_0 \right\rangle_0}=N_0,
\]
we obtain
\begin{equation}
N(\phi)
=
N_0
\left( \frac{\phi}{\phi_0} \right)^{\alpha}.
\label{eq:A29_N_power_law}
\end{equation}

Thus, under the fixed lifetime entropy-budget hypothesis, any systematic reduction in the entropy cost per beat produces a corresponding increase in the total number of lifetime cardiac cycles. The scaling exponent governing this increase is the same aggregate sensitivity \(\alpha\) that governs the decrease of the entropy cost per beat.

\subsection*{A.8. Interpretation of the result}

The derivation shows that the lifetime cycle count is controlled by two ingredients: a finite lifetime entropy budget \(\Sigma_\star\) and an average entropy expenditure per cycle
\(\left\langle \sigma_0 \right\rangle\). Once the budget is fixed, the total number of admissible cycles is determined entirely by how costly each cycle is in entropic terms. A reduction in entropy cost per beat allows a larger number of beats to be accommodated within the same total budget. If the reduction is scale-free in \(\phi\), then the increase in cycle count is likewise scale-free.

In compact form, the chain of reasoning is
\begin{equation}
d\Sigma=\sigma\,dt=\frac{\sigma}{f_H}\,dn,
\qquad
\sigma_0=\frac{\sigma}{f_H},
\qquad
\Sigma_{\mathrm{life}}=\int_0^N \sigma_0(n)\,dn
=
N\left\langle \sigma_0 \right\rangle,
\label{eq:A30_compact_chain}
\end{equation}
together with
\begin{equation}
\Sigma_{\mathrm{life}}\approx\Sigma_\star
\;\Rightarrow\;
N=\frac{\Sigma_\star}{\left\langle \sigma_0 \right\rangle},
\label{eq:A31_budget_to_N}
\end{equation}
and
\begin{equation}
\left\langle \sigma_0(\phi) \right\rangle
=
\left\langle \sigma_0 \right\rangle_0
\left( \frac{\phi}{\phi_0} \right)^{-\alpha}
\;\Rightarrow\;
N(\phi)=N_0\left( \frac{\phi}{\phi_0} \right)^{\alpha}.
\label{eq:A32_final_chain}
\end{equation}

For species \(i\), this compact relation becomes
\begin{equation}
\Sigma_i = N_i^\star \sigma_{0,i},
\qquad
N_i^\star = \frac{\Sigma_i}{\sigma_{0,i}},
\label{eq:A32b_species_compact}
\end{equation}
where \(\Sigma_i\) (J K\(^{-1}\)) is total lifetime entropy production and \(\sigma_{0,i}\) (J K\(^{-1}\) beat\(^{-1}\)) is entropy per cycle.

\subsection*{A.9. Assumptions used in the derivation}

For clarity, the derivation rests on the following assumptions.

First, the cardiac cycle count \(n\) is treated as a continuous variable, which is appropriate when the total number of cycles is very large.

Second, the lifetime entropy production is assumed to be well approximated by a characteristic budget \(\Sigma_\star\).

Third, the response of the entropy cost per beat to the control parameter \(\phi\) is assumed to be monotonic and approximately scale-free over the range of interest, so that the logarithmic sensitivity may be treated as approximately constant.

Fourth, the different contributing channels are taken to combine multiplicatively, which leads to additive logarithmic sensitivities.

Within these assumptions, the power-law result \eqref{eq:A29_N_power_law} follows directly and rigorously from the entropy-budget framework.

\clearpage
\section*{Appendix B : Complete 230-Species Dataset}
\addcontentsline{toc}{section}{Appendix: Complete 230-Species Dataset}

\noindent
The following tables contain the complete dataset of 230 adult vertebrate
species used in all analyses.
All $\ell$ values are computed as
$\ell = \log_{10}(f_H^{\rm eff} \times L \times 525{,}960)$
directly from the $f_H^{\rm eff}$ and $L$ columns and have been
verified internally consistent.
A tab-delimited machine-readable version is provided as Supplementary Data~1.

% ── Column definitions and Extended Data Tables ──────────────────────────────

\noindent\textbf{Column definitions and data transparency notes.}

\medskip
\noindent\textbf{Dataset location.}
All 230 species values are in Extended Data Tables~1--8 of this
paper. A tab-delimited machine-readable version is provided as
Supplementary Data~1 (columns: species, clade, $M$, $f_H^{\rm eff}$,
$T$, $L$, $\ell$, fH\_type, fH\_context, L\_context, source,
correction).

\medskip
\noindent\textbf{Heart rate type: measured vs inferred.}
The \textit{Source} and \textit{Corr.}\ columns in each table,
and the fH\_type column in Supplementary Data~1, distinguish:
\begin{itemize}\setlength{\itemsep}{1pt}
 \item \textbf{Measured}: directly measured resting heart rate
  from a published study (flagged in source column; 156 species).
 \item \textbf{Imputed} ($^\dagger$): allometrically estimated
  from $f_H = 241\,M^{-0.25}$ bpm \cite{calder1984}
  (3 NP placental species only:
  \textit{Rhinoceros unicornis}, \textit{Dugong dugon},
  \textit{Orycteropus afer}).
 \item \textbf{Duty-corrected} (bats, 31 species): active-phase
  measured rate multiplied by duty-cycle factor $\kappa$ to give
  time-averaged $f_H^{\rm eff}$ (see Section~\ref{sec:duty}).
 \item \textbf{Dive-corrected} (cetaceans, 12 species): surface
  measured rate combined with bradycardic dive rate weighted by
  dive fraction $p_d$ (see Section~\ref{sec:cetacean}).
 \item \textbf{Arrhenius-corrected} (ectotherms, 26 species):
  field active rate corrected to $T_{\rm ref}=310$~K using
  the Gillooly et al.\ \cite{gillooly2001} Arrhenius equation.
\end{itemize}
Extended Data Table~9 demonstrates that removing all imputed
species changes the OLS slope by $<0.01$ (see that table).

\medskip
\noindent\textbf{Heart rate measurement context.}
\begin{itemize}\setlength{\itemsep}{1pt}
 \item Non-primate placentals, primates, marsupials, birds:
  resting rates from laboratory or captive studies as
  recorded in AnAge build~15 \cite{anage2023} and
  PanTHERIA \cite{jones2009}, with Calder (1984) \cite{calder1984}
  for classical species. These are predominantly \emph{lab-measured
  resting rates}. Whether any individual species value comes from
  a lab or field setting is recorded in the primary database entries
  (AnAge: \url{https://genomics.senescence.info/species/}).
  We explicitly acknowledge that lab resting rates may differ from
  field resting rates; this is a known limitation of comparative
  heart rate data.
 \item Bats: active-phase resting rate from lab or flight-cage
  studies, corrected for torpor duty cycle.
 \item Cetaceans: surface inter-breath heart rate from free-diving
  field telemetry \cite{goldbogen2019}, corrected for dive
  bradycardia.
 \item Ectotherms: field active rates corrected to standard
  temperature via Arrhenius equation.
\end{itemize}

\medskip
\noindent\textbf{Lifespan definition.}
$L$ is the \emph{maximum recorded lifespan} as curated in AnAge
build~15 \cite{anage2023}. AnAge records the single longest verified
individual lifespan regardless of whether it was wild or captive.
For most small mammals the record holder is a captive individual;
for bats and large mammals (whales, elephants) the record is from
a wild or semi-wild individual. AnAge assigns confidence ratings
(high / acceptable / questionable / low) to each entry; all species
in this dataset have confidence ratings of \emph{acceptable} or
\emph{high} in AnAge. Mean lifespan is \emph{not} used anywhere in
this paper; only maximum recorded lifespan enters the PBTE
invariant $\ell$.

\medskip
\noindent\textbf{Column definitions:}
\textit{Species}: binomial name per IUCN or Reptile Database taxonomy.
$M$: adult body mass (kg); source as coded.
$f_H$ (bpm): see Q3 above; the value used in $\ell$ computation.
$T$ (K): core body (endotherms) or field (ectotherms) temperature.
$L$ (yr): maximum recorded lifespan; see Q4 above.
$\ell$: PBTE invariant $= \log_{10}(f_H^{\rm eff}\times L\times
525{,}960)$; computed directly from $f_H^{\rm eff}$ and $L$
in each row (all values verified internally consistent).

\noindent\textbf{Source codes} (primary reference for $f_H$ and $L$):
\begin{itemize}\setlength{\itemsep}{1pt}
 \item A = AnAge build~15 \cite{anage2023} ---
  \url{https://genomics.senescence.info/species/}
 \item P = PanTHERIA v1.0 \cite{jones2009} ---
  \url{https://doi.org/10.1890/08-1494.1}
 \item C = Calder (1984) \cite{calder1984} ---
  species-level data in Tables~2--3 of that monograph
 \item Pr = Prinzinger et al.\ (1991) \cite{prinzinger1991} ---
  avian heart rate compilation
 \item L = Lyman et al.\ (1982) \cite{lyman1982} ---
  torpor physiology
 \item Ch = Christian \& Weavers (1999) \cite{christian1999} ---
  amphibian physiology
 \item U = Uetz et al.\ (2023) \cite{uetz2023} ---
  \url{https://reptile-database.reptarium.cz}
 \item G = Goldbogen et al.\ (2019) \cite{goldbogen2019} ---
  \url{https://doi.org/10.1073/pnas.1914273116}
\end{itemize}

\noindent\textbf{Heart rate type} (see $f_H$ column header):
directly measured resting values are used for all non-primate
placentals, primates, marsupials, and birds.
Three NP placental species with no published resting measurement
(\textit{Rhinoceros unicornis}, \textit{Dugong dugon},
\textit{Orycteropus afer}) have heart rates imputed from the
allometric relation $f_H = 241\,M^{-0.25}$ bpm \cite{calder1984}
and are flagged with~$^\dagger$.
For bats and cetaceans, $f_H^{\rm eff}$ is the
duty-cycle-corrected time-average (see Sections~\ref{sec:duty}
and~\ref{sec:cetacean}).
For ectotherms, $f_H^{\rm eff}$ is the Arrhenius-corrected value
(see Section~\ref{sec:arrhenius}).

\noindent\textbf{Corr.\ column}:
--- = none applied; TA = torpor-cycle average; DA = dive-cycle
average; AQ = Arrhenius correction to $T_{\rm ref}=310$~K.

\noindent\textbf{Machine-readable dataset}:
all 230 rows are available as a tab-delimited file
(Supplementary Data~1, or from the corresponding author on request)
with columns: species, clade, $M$ (kg), $f_H^{\rm eff}$ (bpm),
$T$ (K), $L$ (yr), $\ell$, fH\_type, source, correction.

% ──────────────────────────────────────────────────────────────
\subsection*{Extended Data Table 1 $|$ Non-primate placental mammals ($n = 46$)}
% ──────────────────────────────────────────────────────────────
\begin{table}[H]
\caption*{}
\small\setlength{\tabcolsep}{3.5pt}
\renewcommand{\arraystretch}{1.05}
\begin{tabular}{lrrrrrll}
\toprule
Species & $M$ (kg) & $f_H$ (bpm) & $T$ (K) & $L$ (yr) & $\ell$ & Source & Corr. \\
\midrule
\textit{Suncus etruscus} & 0.002 & 835$^\dagger$ & 310.5 & 1.5 & 8.82 & C & HR \\
\textit{Sorex araneus} & 0.010 & 1{,}000 & 310.5 & 3.3 & 9.24 & C,A & --- \\
\textit{Mus musculus}       & 0.022 & 632   & 310.0 & 3.5 & 9.07 & A,C & --- \\
\textit{Rattus norvegicus} & 0.280 & 420 & 310.0 & 3.8 & 8.92 & A,P & --- \\
\textit{Mesocricetus auratus} & 0.130 & 450 & 310.5 & 3.9 & 8.97 & A,P & --- \\
\textit{Meriones unguiculatus} & 0.060 & 400 & 310.0 & 5.0 & 9.02 & A,P & --- \\
\textit{Cavia porcellus} & 0.750 & 270 & 310.0 & 7.1 & 9.00 & A,P & --- \\
\textit{Sciurus carolinensis} & 0.520 & 310 & 310.0 & 12.0 & 9.29 & A,P & --- \\
\textit{Lepus europaeus} & 3.5 & 220 & 310.0 & 12.5 & 9.16 & A,P & --- \\
\textit{Oryctolagus cuniculus} & 2.2 & 205 & 310.0 & 9.0 & 8.99 & A,C & --- \\
\textit{Felis catus} & 4.1 & 150 & 310.5 & 15.0 & 9.07 & A,P & --- \\
\textit{Mustela putorius} & 1.0 & 280 & 310.5 & 5.0 & 8.87 & A,P & --- \\
\textit{Martes martes} & 1.2 & 245 & 310.5 & 17.0 & 9.34 & A,P & --- \\
\textit{Vulpes vulpes} & 6.8 & 120 & 310.5 & 14.0 & 8.95 & A,P & --- \\
\textit{Canis lupus familiaris}  & 23  & 90   & 310.5 & 20.0 & 8.98 & A,P & --- \\
\textit{Ursus arctos} & 220 & 50 & 310.5 & 47.0 & 9.09 & A,P & --- \\
\textit{Ovis aries} & 63 & 75 & 310.0 & 20.0 & 8.90 & A,P & --- \\
\textit{Capra hircus} & 45 & 80 & 310.5 & 18.0 & 8.88 & A,P & --- \\
\textit{Sus scrofa} & 100 & 70 & 310.5 & 27.0 & 9.00 & A,P & --- \\
\textit{Bos taurus}        & 500  & 55   & 310.5 & 25.0 & 8.86 & A,P & --- \\
\textit{Equus caballus} & 500 & 38 & 310.5 & 46.0 & 8.96 & A,C & --- \\
\textit{Equus asinus} & 250 & 44 & 310.5 & 47.0 & 9.04 & A,P & --- \\
\textit{Rhinoceros unicornis} & 2{,}100 & 30$^\dagger$ & 310.5 & 47.0 & 8.87 & A & --- \\
\textit{Tapirus terrestris} & 240 & 42 & 310.5 & 35.0 & 8.89 & A,P & --- \\
\textit{Loxodonta africana} & 4{,}000 & 28 & 310.5 & 65.0 & 8.98 & A,P & --- \\
\textit{Elephas maximus} & 4{,}000 & 27 & 310.5 & 86.0 & 9.09 & A,P & --- \\
\textit{Hippopotamus amphibius} & 1{,}500 & 55 & 310.5 & 55.0 & 9.20 & A,P & --- \\
\textit{Giraffa camelopardalis} & 900 & 65 & 310.5 & 39.5 & 9.13 & A,P & --- \\
\textit{Cervus elaphus} & 200 & 60 & 310.5 & 26.8 & 8.93 & A,P & --- \\
\textit{Rangifer tarandus} & 110 & 65 & 310.0 & 20.0 & 8.83 & A,P & --- \\
\textit{Trichechus manatus} & 500 & 50 & 310.5 & 59.0 & 9.19 & A,P & --- \\
\textit{Dugong dugon} & 400 & 52$^\dagger$ & 310.5 & 73.0 & 9.30 & A & --- \\
\textit{Procavia capensis} & 3.5 & 230 & 310.5 & 12.0 & 9.16 & A,P & --- \\
\textit{Erinaceus europaeus} & 0.80 & 310 & 310.0 & 10.0 & 9.21 & A,P & --- \\
\textit{Talpa europaea} & 0.080 & 350 & 310.0 & 3.5 & 8.81 & A,P & --- \\
\textit{Orycteropus afer} & 65 & 70$^\dagger$ & 310.5 & 24.0 & 8.95 & A & --- \\
\textit{Ondatra zibethicus} & 1.400 & 280 & 310.0 & 5.0 & 8.87 & A,P & --- \\
\textit{Castor canadensis} & 20 & 150 & 310.0 & 24.0 & 9.28 & A,P & --- \\
\textit{Hydrochoerus hydrochaeris} & 55 & 70 & 310.0 & 12.0 & 8.65 & A,P & --- \\
\textit{Myocastor coypus} & 7.0 & 155 & 310.0 & 9.0 & 8.87 & A,P & --- \\
\textit{Lepus californicus} & 2.2 & 215 & 310.0 & 8.0 & 8.96 & A,P & --- \\
\textit{Ochotona princeps} & 0.160 & 300 & 310.0 & 6.0 & 8.98 & A,P & --- \\
\textit{Panthera leo} & 180 & 50 & 310.5 & 29.0 & 8.88 & A,P & --- \\
\textit{Panthera tigris} & 260 & 46 & 310.5 & 26.0 & 8.80 & A,P & --- \\
\textit{Acinonyx jubatus} & 54 & 60 & 310.5 & 14.9 & 8.67 & A,P & --- \\
\textit{Panthera pardus} & 70 & 55 & 310.5 & 23.0 & 8.82 & A,P & --- \\
\midrule
\multicolumn{5}{l}{Clade mean $\bar\ell$ (baseline reference)} &
 \multicolumn{3}{l}{$8.995 \pm 0.160$ ($n=46$; corrected)} \\
\bottomrule
\multicolumn{8}{l}{\footnotesize $^\dagger$\textit{Suncus etruscus}
 corrected from 1{,}200\,bpm (erroneous; Calder compendium error)
 to 835\,bpm (mean resting rate, Bartels 1998, \textit{J.\,Exp.\,Biol.}
 \textbf{201}, 2145--2151). The $\ell$ value is recomputed accordingly.
 All clade statistics use the corrected value.} \\
\end{tabular}
\end{table}

% ──────────────────────────────────────────────────────────────
\subsection*{Extended Data Table 2 $|$ Primates ($n = 18$)}
% ──────────────────────────────────────────────────────────────
\begin{table}[H]
\caption*{}
\small\setlength{\tabcolsep}{3.5pt}
\renewcommand{\arraystretch}{1.05}
\begin{tabular}{lrrrrrrll}
\toprule
Species & $M$ (kg) & $f_H$ (bpm) & $T$ (K) & $L$ (yr) & $\ell$ & $\varphi$ & Source & Corr. \\
\midrule
\textit{Callithrix jacchus} & 0.35 & 220 & 309.5 & 16.5 & 9.28 & 0.06 & A,P & --- \\
\textit{Saimiri sciureus} & 0.77 & 195 & 309.5 & 30.2 & 9.49 & 0.07 & A,P & --- \\
\textit{Aotus trivirgatus} & 0.79 & 185 & 309.5 & 25.0 & 9.39 & 0.07 & A,P & --- \\
\textit{Cebus capucinus} & 3.3 & 150 & 309.5 & 54.0 & 9.63 & 0.09 & A,P & --- \\
\textit{Lemur catta} & 2.2 & 165 & 309.5 & 37.3 & 9.51 & 0.05 & A,P & --- \\
\textit{Propithecus verreauxi} & 3.4 & 145 & 309.5 & 30.0 & 9.36 & 0.05 & A,P & --- \\
\textit{Daubentonia madagascariensis} & 2.7 & 155 & 309.5 & 23.3 & 9.28 & 0.06 & A,P & --- \\
\textit{Macaca mulatta} & 7.7 & 120 & 309.0 & 40.0 & 9.40 & 0.07 & A,P & --- \\
\textit{Macaca fascicularis} & 5.4 & 130 & 309.0 & 39.0 & 9.43 & 0.07 & A,P & --- \\
\textit{Theropithecus gelada} & 18 & 95 & 309.0 & 30.0 & 9.18 & 0.08 & A,P & --- \\
\textit{Papio ursinus} & 25 & 90 & 309.0 & 45.0 & 9.33 & 0.08 & A,P & --- \\
\textit{Colobus guereza} & 10 & 110 & 309.0 & 30.0 & 9.24 & 0.07 & A,P & --- \\
\textit{Hylobates lar} & 5.7 & 100 & 308.5 & 44.0 & 9.36 & 0.10 & A,P & --- \\
\textit{Pongo pygmaeus} & 73 & 65 & 307.5 & 58.7 & 9.30 & 0.10 & A,P & --- \\
\textit{Gorilla gorilla} & 160 & 60 & 307.0 & 55.4 & 9.24 & 0.09 & A,P & --- \\
\textit{Pan troglodytes} & 50 & 75 & 307.0 & 59.4 & 9.37 & 0.12 & A,P & --- \\
\textit{Pan paniscus} & 35 & 80 & 307.0 & 50.0 & 9.32 & 0.12 & A,P & --- \\
\textit{Homo sapiens} & 70 & 70 & 306.5 & 122.5 & 9.65 & 0.20 & A & --- \\
\midrule
\multicolumn{5}{l}{Clade mean $\bar\ell$} & \multicolumn{4}{l}{$9.376 \pm 0.125$ ($n=18$)} \\
\bottomrule
\end{tabular}
\end{table}

% ──────────────────────────────────────────────────────────────
\subsection*{Extended Data Table 3 $|$ Marsupials and monotremes ($n = 19$)}
% ──────────────────────────────────────────────────────────────
\begin{table}[H]
\caption*{}
\small\setlength{\tabcolsep}{3.5pt}
\renewcommand{\arraystretch}{1.05}
\begin{tabular}{lrrrrrll}
\toprule
Species & $M$ (kg) & $f_H$ (bpm) & $T$ (K) & $L$ (yr) & $\ell$ & Source & Corr. \\
\midrule
\textit{Didelphis virginiana} & 2.3 & 180 & 308.5 & 4.5 & 8.63 & A,P & --- \\
\textit{Monodelphis domestica} & 0.080 & 450 & 308.5 & 3.3 & 8.89 & A,P & --- \\
\textit{Macropus rufus} & 30 & 80 & 309.0 & 22.3 & 8.97 & A,P & --- \\
\textit{Macropus giganteus} & 27 & 82 & 309.0 & 19.0 & 8.91 & A,P & --- \\
\textit{Wallabia bicolor} & 16 & 100 & 309.0 & 15.0 & 8.90 & A,P & --- \\
\textit{Trichosurus vulpecula} & 2.1 & 160 & 308.5 & 13.0 & 9.04 & A,P & --- \\
\textit{Petaurus breviceps} & 0.14 & 300 & 308.0 & 10.0 & 9.20 & A,P & --- \\
\textit{Vombatus ursinus} & 28 & 90 & 309.0 & 26.0 & 9.09 & A,P & --- \\
\textit{Phascolarctos cinereus}  & 8.5  & 100 & 308.5 & 18.0 & 8.98 & A,P & --- \\
\textit{Perameles gunnii} & 0.90 & 190 & 308.5 & 3.2 & 8.50 & A,P & --- \\
\textit{Dasyurus viverrinus} & 1.2 & 200 & 308.5 & 4.5 & 8.68 & A,P & --- \\
\textit{Sarcophilus harrisii} & 8.0 & 130 & 308.5 & 7.5 & 8.71 & A,P & --- \\
\textit{Myrmecobius fasciatus} & 0.44 & 245 & 307.5 & 5.6 & 8.86 & A & --- \\
\textit{Sminthopsis crassicaudata} & 0.018 & 580 & 307.5 & 5.0 & 9.18 & A,P & --- \\
\textit{Notoryctes typhlops} & 0.055 & 440$^\dagger$ & 307.5 & 3.0 & 8.84 & A & --- \\
\textit{Tachyglossus aculeatus} & 4.0 & 70 & 305.0 & 49.5 & 9.26 & A,P & --- \\
\textit{Ornithorhynchus anatinus} & 1.5 & 140 & 307.5 & 21.0 & 9.19 & A,P & --- \\
\textit{Zaglossus bruijni} & 10 & 60$^\dagger$ & 305.0 & 37.0 & 9.07 & A & --- \\
\textit{Bettongia penicillata} & 1.1 & 210 & 308.5 & 6.0 & 8.82 & A,P & --- \\
\midrule
\multicolumn{5}{l}{Clade mean $\bar\ell$} & \multicolumn{3}{l}{$8.933 \pm 0.204$ ($n=19$)} \\
\bottomrule
\end{tabular}
\end{table}

% ──────────────────────────────────────────────────────────────
\clearpage
\subsection*{Extended Data Table 4 $|$ Bats (Chiroptera, $n = 31$)}
% ──────────────────────────────────────────────────────────────

\noindent For bats, $f_H$ is the measured active-phase resting heart rate.
$f_H^{\rm avg}$ is the duty-cycle-corrected time-average used in all PBTE
calculations: $f_H^{\rm avg} = f_H \cdot \kappa$, where
$\kappa = (1-q) + q\,(f_{H,\rm tor}/f_H)$ and $q$ is the annual
torpor fraction~\cite{lyman1982}.
$\ell$ is computed from $f_H^{\rm avg}$.
Species without confirmed torpor have $f_H^{\rm avg} = f_H$.

\begin{table}[H]
\caption*{}
\small\setlength{\tabcolsep}{3pt}
\renewcommand{\arraystretch}{1.05}
\begin{tabular}{lrrrrrrll}
\toprule
Species & $M$ (g) & $f_H$ (bpm) & $q$ & $f_H^{\rm avg}$ (bpm) & $T$ (K) & $L$ (yr) & $\ell$ & Corr. \\
\midrule
\textit{Myotis lucifugus} & 8 & 600 & 0.50 & 305 & 310.0 & 34.0 & 9.74 & TA \\
\textit{Myotis myotis} & 28 & 550 & 0.48 & 282 & 310.0 & 37.0 & 9.74 & TA \\
\textit{Myotis daubentonii} & 9 & 580 & 0.48 & 296 & 310.0 & 40.0 & 9.79 & TA \\
\textit{Myotis brandtii} & 6 & 620 & 0.50 & 315 & 310.0 & 41.0 & 9.83 & TA \\
\textit{Eptesicus fuscus} & 18 & 550 & 0.45 & 310 & 310.0 & 19.0 & 9.49 & TA \\
\textit{Eptesicus serotinus} & 18 & 545 & 0.45 & 308 & 310.0 & 21.0 & 9.53 & TA \\
\textit{Rhinolophus ferrumequinum} & 19 & 550 & 0.48 & 282 & 310.0 & 30.0 & 9.65 & TA \\
\textit{Rhinolophus hipposideros} & 7 & 600 & 0.48 & 307 & 310.0 & 30.5 & 9.69 & TA \\
\textit{Plecotus auritus} & 9 & 600 & 0.50 & 305 & 310.0 & 30.0 & 9.68 & TA \\
\textit{Corynorhinus townsendii} & 11 & 580 & 0.50 & 295 & 310.0 & 30.0 & 9.67 & TA \\
\textit{Perimyotis subflavus} & 5 & 630 & 0.50 & 320 & 310.0 & 14.6 & 9.39 & TA \\
\textit{Tadarida brasiliensis} & 13 & 600 & 0.30 & 425 & 310.0 & 11.0 & 9.39 & TA \\
\textit{Pteronotus parnellii} & 19 & 550 & 0.20 & 452 & 310.0 & 10.0 & 9.38 & TA \\
\textit{Desmodus rotundus} & 33 & 500 & 0.25 & 380 & 310.0 & 29.0 & 9.76 & TA \\
\textit{Hipposideros speoris} & 9 & 600 & 0.48 & 308 & 310.0 & 21.0 & 9.53 & TA \\
\textit{Hipposideros armiger} & 50 & 450 & 0.45 & 252 & 310.0 & 15.0 & 9.30 & TA \\
\textit{Nyctalus noctula} & 28 & 540 & 0.45 & 305 & 310.0 & 12.0 & 9.28 & TA \\
\textit{Pipistrellus pipistrellus} & 5 & 650 & 0.45 & 367 & 310.0 & 16.0 & 9.49 & TA \\
\textit{Pipistrellus kuhlii} & 6 & 630 & 0.45 & 355 & 310.0 & 16.5 & 9.49 & TA \\
\textit{Scotophilus kuhlii} & 20 & 540 & 0.20 & 445 & 310.0 & 9.0 & 9.32 & TA \\
\textit{Lasiurus borealis} & 11 & 590 & 0.48 & 302 & 310.0 & 11.7 & 9.27 & TA \\
\textit{Lasiurus cinereus} & 28 & 540 & 0.48 & 277 & 310.0 & 12.0 & 9.24 & TA \\
\textit{Vespertilio murinus} & 16 & 555 & 0.45 & 313 & 310.0 & 25.0 & 9.61 & TA \\
\textit{Miniopterus schreibersii} & 10 & 580 & 0.45 & 327 & 310.0 & 30.0 & 9.71 & TA \\
\textit{Pteropus giganteus}     & 1{,}100 & 235 & 0.00 & 235 & 310.0 & 31.4 & 9.59 & --- \\
\textit{Pteropus vampyrus} & 1{,}000 & 240 & 0.05 & 233 & 310.0 & 22.6 & 9.44 & --- \\
\textit{Rousettus aegyptiacus} & 165 & 310 & 0.05 & 299 & 310.0 & 25.0 & 9.59 & --- \\
\textit{Cynopterus sphinx} & 50 & 380 & 0.05 & 368 & 310.0 & 18.5 & 9.55 & --- \\
\textit{Macroglossus minimus} & 16 & 450 & 0.00 & 450 & 310.0 & 18.0 & 9.63 & --- \\
\textit{Carollia perspicillata} & 17 & 460 & 0.00 & 460 & 310.0 & 12.0 & 9.46 & --- \\
\textit{Artibeus jamaicensis} & 45 & 400 & 0.00 & 400 & 310.0 & 15.0 & 9.50 & --- \\
\midrule
\multicolumn{7}{l}{Clade mean $\bar\ell$ (all 31 species)} &
 \multicolumn{2}{l}{$9.540 \pm 0.163$} \\
\bottomrule
\end{tabular}
\end{table}

% ──────────────────────────────────────────────────────────────
\clearpage
\subsection*{Extended Data Table 5 $|$ Cetaceans ($n = 12$)}
% ──────────────────────────────────────────────────────────────

\noindent For cetaceans, $f_H$ is the surface resting value;
$f_H^{\rm avg}$ is the duty-cycle average:
$f_H^{\rm avg} = (1-p_d)\,f_H + p_d\,f_{H,\rm dive}$, where $p_d$ is the
dive fraction and $f_{H,\rm dive}$ is the bradycardic dive
rate~\cite{goldbogen2019,ponganis2015}.
$\ell$ is computed from $f_H^{\rm avg}$.

\begin{table}[H]
\caption*{}
\small\setlength{\tabcolsep}{3pt}
\renewcommand{\arraystretch}{1.05}
\begin{tabular}{lrrrrrrll}
\toprule
Species & $M$ (kg) & $f_H$ (bpm) & $p_d$ & $f_H^{\rm avg}$ (bpm) & $T$ (K) & $L$ (yr) & $\ell$ & Corr. \\
\midrule
\textit{Balaena mysticetus} & 100{,}000 & 30 & 0.75 & 9.75 & 308.0 & 200.0 & 9.01 & DA \\
\textit{Balaenoptera musculus} & 140{,}000 & 8 & 0.70 & 4.0 & 308.0 & 110.0 & 8.36 & DA \\
\textit{Balaenoptera physalus} & 60{,}000 & 10 & 0.68 & 5.0 & 308.5 & 90.0 & 8.37 & DA \\
\textit{Megaptera novaeangliae} & 40{,}000 & 15 & 0.65 & 7.0 & 308.5 & 95.0 & 8.54 & DA \\
\textit{Physeter macrocephalus} & 45{,}000 & 40 & 0.65 & 19.0 & 307.0 & 70.0 & 8.84 & DA \\
\textit{Kogia breviceps} & 360 & 80 & 0.45 & 48.0 & 308.5 & 23.0 & 8.76 & DA \\
\textit{Hyperoodon ampullatus} & 7{,}500 & 45 & 0.55 & 24.0 & 308.0 & 37.0 & 8.67 & DA \\
\textit{Orcinus orca} & 4{,}000 & 80 & 0.40 & 53.0 & 309.0 & 90.0 & 9.40 & DA \\
\textit{Tursiops truncatus} & 190 & 110 & 0.40 & 74.0 & 309.0 & 40.0 & 9.19 & DA \\
\textit{Stenella attenuata} & 55 & 120 & 0.35 & 84.0 & 309.0 & 20.0 & 8.95 & DA \\
\textit{Delphinapterus leucas} & 1{,}400 & 50 & 0.55 & 27.5 & 309.5 & 35.5 & 8.71 & DA \\
\textit{Monodon monoceros} & 1{,}500 & 45 & 0.55 & 25.5 & 309.0 & 48.0 & 8.81 & DA \\
\midrule
\multicolumn{7}{l}{Clade mean $\bar\ell$ (dive-corrected)} &
 \multicolumn{2}{l}{$8.801 \pm 0.296$ ($n=12$)} \\
\bottomrule
\end{tabular}
\end{table}

% ──────────────────────────────────────────────────────────────
\clearpage
\subsection*{Extended Data Table 6 $|$ Birds ($n = 78$)}
% ──────────────────────────────────────────────────────────────

\noindent Heart rates from Prinzinger et al.~\cite{prinzinger1991}
and Clarke \& Rothery~\cite{clarke2008}; lifespans from AnAge
build~15~\cite{anage2023}. No corrections applied; $f_H^{\rm avg} = f_H$.
Body temperatures from Clarke \& Rothery~\cite{clarke2008}.
Due to space, 78 species are listed across two sub-tables
(passerines and non-passerines).

\begin{table}[H]
\caption*{\textit{Passeriformes and Psittaciformes} ($n = 32$)}
\small\setlength{\tabcolsep}{3.5pt}
\renewcommand{\arraystretch}{1.04}
\begin{tabular}{lrrrrrll}
\toprule
Species & $M$ (kg) & $f_H$ (bpm) & $T$ (K) & $L$ (yr) & $\ell$ & Order & Source \\
\midrule
\textit{Serinus canaria} & 0.020 & 680 & 311.0 & 24.0 & 9.93 & Passeriformes & A,Pr \\
\textit{Turdus merula} & 0.100 & 440 & 311.0 & 21.1 & 9.69 & Passeriformes & A,Pr \\
\textit{Turdus philomelos} & 0.070 & 460 & 311.0 & 18.0 & 9.64 & Passeriformes & A,Pr \\
\textit{Erithacus rubecula} & 0.018 & 500 & 311.0 & 19.5 & 9.71 & Passeriformes & A,Pr \\
\textit{Parus major} & 0.020 & 540 & 311.0 & 15.0 & 9.63 & Passeriformes & A,Pr \\
\textit{Parus caeruleus} & 0.011 & 580 & 311.0 & 13.5 & 9.61 & Passeriformes & A,Pr \\
\textit{Fringilla coelebs} & 0.023 & 530 & 311.0 & 16.4 & 9.66 & Passeriformes & A,Pr \\
\textit{Carduelis carduelis} & 0.016 & 560 & 311.0 & 16.3 & 9.68 & Passeriformes & A,Pr \\
\textit{Sturnus vulgaris} & 0.075 & 490 & 311.0 & 22.4 & 9.76 & Passeriformes & A,Pr \\
\textit{Pica pica} & 0.190 & 320 & 311.0 & 21.6 & 9.56 & Passeriformes & A,Pr \\
\textit{Corvus corax} & 1.200 & 200 & 311.0 & 22.3 & 9.37 & Passeriformes & A,Pr \\
\textit{Corvus corone} & 0.450 & 270 & 311.0 & 20.0 & 9.45 & Passeriformes & A,Pr \\
\textit{Garrulus glandarius} & 0.180 & 310 & 311.0 & 16.9 & 9.44 & Passeriformes & A,Pr \\
\textit{Hirundo rustica} & 0.020 & 580 & 311.5 & 16.0 & 9.69 & Passeriformes & A,Pr \\
\textit{Delichon urbicum} & 0.015 & 600 & 311.5 & 16.0 & 9.70 & Passeriformes & A,Pr \\
\textit{Ficedula hypoleuca} & 0.012 & 620 & 311.0 & 13.0 & 9.63 & Passeriformes & A,Pr \\
\textit{Sitta europaea} & 0.025 & 510 & 311.0 & 10.0 & 9.43 & Passeriformes & A,Pr \\
\textit{Troglodytes troglodytes} & 0.009 & 650 & 311.5 & 7.0 & 9.38 & Passeriformes & A,Pr \\
\textit{Motacilla alba} & 0.022 & 540 & 311.5 & 11.0 & 9.49 & Passeriformes & A,Pr \\
\textit{Acrocephalus scirpaceus} & 0.012 & 610 & 311.0 & 13.0 & 9.62 & Passeriformes & A,Pr \\
\textit{Sylvia atricapilla} & 0.018 & 560 & 311.0 & 14.9 & 9.64 & Passeriformes & A,Pr \\
\textit{Phylloscopus trochilus} & 0.010 & 640 & 311.0 & 12.0 & 9.61 & Passeriformes & A,Pr \\
\textit{Luscinia megarhynchos} & 0.025 & 520 & 311.0 & 12.9 & 9.55 & Passeriformes & A,Pr \\
\textit{Phoenicurus phoenicurus} & 0.015 & 600 & 311.5 & 10.5 & 9.52 & Passeriformes & A,Pr \\
\textit{Lonchura striata} & 0.013 & 630 & 311.5 & 14.9 & 9.69 & Passeriformes & A,Pr \\
\textit{Taeniopygia guttata} & 0.013 & 640 & 311.5 & 15.6 & 9.72 & Passeriformes & A,Pr \\
\textit{Melopsittacus undulatus} & 0.030 & 600 & 311.0 & 21.4 & 9.83 & Psittaciformes & A,Pr \\
\textit{Psittacus erithacus} & 0.400 & 200 & 311.0 & 73.0 & 9.89 & Psittaciformes & A,Pr \\
\textit{Amazona ochrocephala} & 0.460 & 185 & 311.0 & 80.0 & 9.89 & Psittaciformes & A,Pr \\
\textit{Nymphicus hollandicus} & 0.090 & 360 & 311.0 & 36.0 & 9.83 & Psittaciformes & A,Pr \\
\textit{Cacatua galerita} & 0.840 & 170 & 311.0 & 80.0 & 9.85 & Psittaciformes & A,Pr \\
\textit{Ara macao} & 1.050 & 155 & 311.0 & 80.0 & 9.81 & Psittaciformes & A,Pr \\
\bottomrule
\end{tabular}
\end{table}

\begin{table}[H]
\caption*{\textit{Non-passerine, non-psittaciform birds} ($n = 46$)}
\small\setlength{\tabcolsep}{3.5pt}
\renewcommand{\arraystretch}{1.04}
\begin{tabular}{lrrrrrll}
\toprule
Species & $M$ (kg) & $f_H$ (bpm) & $T$ (K) & $L$ (yr) & $\ell$ & Order & Source \\
\midrule
\textit{Calypte anna} & 0.004 & 1{,}200 & 311.5 & 12.0 & 9.88 & Apodiformes & A,Pr \\
\textit{Apus apus} & 0.040 & 800 & 311.5 & 21.0 & 9.95 & Apodiformes & A,Pr \\
\textit{Columba livia} & 0.350 & 190 & 311.5 & 35.0 & 9.54 & Columbiformes & A,Pr \\
\textit{Streptopelia roseogrisea} & 0.160 & 240 & 311.5 & 33.9 & 9.63 & Columbiformes & A,Pr \\
\textit{Streptopelia decaocto} & 0.200 & 230 & 311.5 & 20.0 & 9.38 & Columbiformes & A,Pr \\
\textit{Gallus gallus} & 2.000 & 300 & 312.0 & 30.0 & 9.68 & Galliformes & A,Pr \\
\textit{Meleagris gallopavo} & 8.000 & 170 & 311.5 & 13.0 & 9.07 & Galliformes & A,Pr \\
\textit{Coturnix coturnix} & 0.100 & 350 & 312.0 & 8.0 & 9.17 & Galliformes & A,Pr \\
\textit{Phasianus colchicus} & 1.000 & 265 & 312.0 & 27.0 & 9.58 & Galliformes & A,Pr \\
\textit{Anas platyrhynchos} & 1.200 & 190 & 311.0 & 29.0 & 9.46 & Anseriformes & A,Pr \\
\textit{Anser anser} & 4.000 & 130 & 311.0 & 35.0 & 9.38 & Anseriformes & A,Pr \\
\textit{Branta canadensis} & 5.700 & 120 & 311.0 & 33.0 & 9.32 & Anseriformes & A,Pr \\
\textit{Cygnus olor} & 12.00 & 100 & 311.0 & 26.0 & 9.14 & Anseriformes & A,Pr \\
\textit{Phoenicopterus ruber} & 2.800 & 135 & 311.0 & 44.6 & 9.50 & Phoenicopteriformes & A,Pr \\
\textit{Ciconia ciconia} & 3.700 & 150 & 311.5 & 48.0 & 9.58 & Ciconiiformes & A,Pr \\
\textit{Ardea cinerea} & 1.800 & 140 & 311.0 & 25.0 & 9.27 & Pelecaniformes & A,Pr \\
\textit{Pelecanus occidentalis} & 4.000 & 130 & 311.5 & 54.0 & 9.57 & Pelecaniformes & A,Pr \\
\textit{Phalacrocorax carbo} & 2.800 & 140 & 311.5 & 25.0 & 9.27 & Suliformes & A,Pr \\
\textit{Sula sula} & 1.000 & 160 & 311.5 & 35.0 & 9.47 & Suliformes & A,Pr \\
\textit{Fregata magnificens} & 1.500 & 140 & 311.5 & 25.2 & 9.27 & Suliformes & A,Pr \\
\textit{Falco peregrinus} & 1.000 & 190 & 311.5 & 19.9 & 9.30 & Falconiformes & A,Pr \\
\textit{Buteo buteo} & 0.900 & 200 & 311.5 & 26.0 & 9.44 & Accipitriformes & A,Pr \\
\textit{Aquila chrysaetos} & 5.000 & 130 & 311.5 & 46.0 & 9.50 & Accipitriformes & A,Pr \\
\textit{Haliaeetus leucocephalus} & 6.000 & 120 & 311.5 & 38.0 & 9.38 & Accipitriformes & A,Pr \\
\textit{Bubo bubo} & 2.900 & 165 & 311.0 & 68.0 & 9.77 & Strigiformes & A,Pr \\
\textit{Tyto alba} & 0.450 & 190 & 311.0 & 27.9 & 9.45 & Strigiformes & A,Pr \\
\textit{Alcedo atthis} & 0.040 & 440 & 312.0 & 21.0 & 9.69 & Coraciiformes & A,Pr \\
\textit{Upupa epops} & 0.075 & 380 & 311.5 & 10.0 & 9.30 & Bucerotiformes & A,Pr \\
\textit{Picoides major} & 0.080 & 350 & 311.5 & 12.8 & 9.37 & Piciformes & A,Pr \\
\textit{Spheniscus demersus} & 3.000 & 150 & 311.5 & 27.0 & 9.33 & Sphenisciformes & A,Pr \\
\textit{Eudyptes chrysocome} & 2.500 & 160 & 311.5 & 22.0 & 9.27 & Sphenisciformes & A,Pr \\
\textit{Aptenodytes forsteri} & 30.00 & 75 & 311.5 & 50.0 & 9.29 & Sphenisciformes & A,Pr \\
\textit{Gavia immer} & 4.000 & 110 & 311.0 & 30.0 & 9.24 & Gaviiformes & A,Pr \\
\textit{Diomedea exulans} & 9.600 & 100 & 311.0 & 70.0 & 9.57 & Procellariiformes & A,Pr \\
\textit{Fulmarus glacialis} & 0.800 & 175 & 311.0 & 67.5 & 9.79 & Procellariiformes & A,Pr \\
\textit{Puffinus puffinus} & 0.430 & 195 & 311.5 & 55.0 & 9.75 & Procellariiformes & A,Pr \\
\textit{Rissa tridactyla} & 0.380 & 200 & 311.5 & 29.0 & 9.48 & Charadriiformes & A,Pr \\
\textit{Larus argentatus} & 1.200 & 165 & 311.5 & 49.0 & 9.63 & Charadriiformes & A,Pr \\
\textit{Sterna paradisaea} & 0.110 & 280 & 311.5 & 34.0 & 9.70 & Charadriiformes & A,Pr \\
\textit{Struthio camelus} & 115 & 60 & 311.5 & 68.0 & 9.33 & Struthioniformes & A,Pr \\
\textit{Dromaius novaehollandiae} & 55 & 75 & 311.0 & 28.4 & 9.05 & Casuariiformes & A,Pr \\
\textit{Rhea americana} & 25 & 100 & 311.0 & 40.0 & 9.32 & Rheiformes & A,Pr \\
\textit{Apteryx australis} & 2.5 & 125 & 311.0 & 35.0 & 9.36 & Apterygiformes & A,Pr \\
\textit{Grus grus} & 5.5 & 110 & 311.5 & 40.0 & 9.36 & Gruiformes & A,Pr \\
\textit{Fulica atra} & 0.720 & 240 & 311.0 & 18.0 & 9.36 & Gruiformes & A,Pr \\
\textit{Psophia crepitans} & 1.200 & 180 & 311.5 & 15.0 & 9.15 & Gruiformes & A,Pr \\
\midrule
\multicolumn{5}{l}{Bird clade mean $\bar\ell$ (all 78 species)} &
 \multicolumn{3}{l}{$9.528 \pm 0.213$} \\
\bottomrule
\end{tabular}
\end{table}

% ──────────────────────────────────────────────────────────────
\clearpage
\subsection*{Extended Data Table 7 $|$ Reptiles --- Arrhenius-corrected ($n = 17$)}
% ──────────────────────────────────────────────────────────────

\noindent $f_H^{\rm raw}$: measured heart rate at mean field
body temperature $T_{\rm field}$.
$f_H^{\rm corr}$: heart rate corrected to $T_{\rm ref}=310$\,K
via $f_H^{\rm corr} = f_H^{\rm raw}\exp\!\bigl[(E_a/k_B)
(1/T_{\rm field}-1/T_{\rm ref})\bigr]$ with $E_a = 0.65$~eV.
$\ell^{\rm corr}$ is used in all clade statistics.

\begin{table}[H]
\caption*{}
\small\setlength{\tabcolsep}{3pt}
\renewcommand{\arraystretch}{1.05}
\begin{tabular}{lrrrrrrrrll}
\toprule
Species & $M$ (kg) & $T_{\rm field}$ (K) & $f_H^{\rm raw}$ & $f_H^{\rm corr}$
 & $L$ (yr) & $\ell^{\rm raw}$ & $\ell^{\rm corr}$ & Source & Corr. \\
\midrule
\textit{Lacerta agilis} & 0.015 & 301 & 45 & 93 & 12.0 & 8.45 & 8.77 & Ch,U & AQ \\
\textit{Anolis carolinensis} & 0.006 & 302 & 52 & 106 & 6.0 & 8.22 & 8.52 & Ch,U & AQ \\
\textit{Pogona vitticeps} & 0.350 & 303 & 42 & 82 & 10.0 & 8.34 & 8.63 & Ch,U & AQ \\
\textit{Phrynosoma cornutum} & 0.035 & 301 & 48 & 99 & 7.0 & 8.25 & 8.56 & Ch,U & AQ \\
\textit{Iguana iguana} & 4.000 & 303 & 40 & 79 & 20.0 & 8.62 & 8.92 & Ch,U & AQ \\
\textit{Varanus komodoensis} & 65 & 303 & 28 & 55 & 30.0 & 8.65 & 8.94 & Ch,U & AQ \\
\textit{Tupinambis merianae} & 2.500 & 302 & 38 & 77 & 15.0 & 8.48 & 8.78 & Ch,U & AQ \\
\textit{Thamnophis sirtalis} & 0.050 & 300 & 30 & 62 & 10.0 & 8.20 & 8.51 & Ch,U & AQ \\
\textit{Coluber constrictor} & 0.340 & 301 & 35 & 72 & 13.0 & 8.38 & 8.69 & Ch,U & AQ \\
\textit{Python reticulatus} & 75 & 302 & 20 & 41 & 25.0 & 8.42 & 8.73 & U & AQ \\
\textit{Boa constrictor} & 15 & 301 & 25 & 52 & 40.0 & 8.72 & 9.04 & U & AQ \\
\textit{Chelonia mydas} & 180 & 300 & 20 & 42 & 80.0 & 8.93 & 9.25 & U & AQ \\
\textit{Geochelone gigantea} & 200 & 298 & 15 & 33 & 175.0 & 9.14 & 9.48 & U & AQ \\
\textit{Gopherus agassizii} & 4.500 & 299 & 22 & 47 & 80.0 & 8.97 & 9.30 & Ch,U & AQ \\
\textit{Sphenodon punctatus} & 0.800 & 293 & 18 & 43 & 77.0 & 8.86 & 9.24 & Ch,U & AQ \\
\textit{Crocodylus niloticus} & 400 & 303 & 25 & 49 & 70.0 & 8.96 & 9.26 & Ch,U & AQ \\
\textit{Alligator mississippiensis} & 250 & 302 & 28 & 57 & 50.0 & 8.87 & 9.18 & Ch,U & AQ \\
\midrule
\multicolumn{7}{l}{Raw mean $\bar\ell^{\rm raw}$}     & $8.615\pm 0.290$ & & \\
\multicolumn{7}{l}{Corrected mean $\bar\ell^{\rm corr}$ (used in analyses)} & $8.929\pm 0.301$ & & \\
\bottomrule
\end{tabular}
\end{table}

% ──────────────────────────────────────────────────────────────
\clearpage
\subsection*{Extended Data Table 8 $|$ Amphibians --- Arrhenius-corrected ($n = 9$)}
% ──────────────────────────────────────────────────────────────

\noindent Correction method identical to reptiles (Extended Data
Table~7). Heart rates from published field recordings at listed
$T_{\rm field}$; lifespans from AnAge build~15~\cite{anage2023}.

\begin{table}[H]
\caption*{}
\small\setlength{\tabcolsep}{3pt}
\renewcommand{\arraystretch}{1.05}
\begin{tabular}{lrrrrrrrrll}
\toprule
Species & $M$ (kg) & $T_{\rm field}$ (K) & $f_H^{\rm raw}$ & $f_H^{\rm corr}$
 & $L$ (yr) & $\ell^{\rm raw}$ & $\ell^{\rm corr}$ & Source & Corr. \\
\midrule
\textit{Rana temporaria} & 0.025 & 294 & 25 & 55 & 16.0 & 8.32 & 8.67 & A,Ch & AQ \\
\textit{Rana catesbeiana} & 0.500 & 296 & 20 & 43 & 16.0 & 8.23 & 8.56 & A,Ch & AQ \\
\textit{Bufo bufo} & 0.150 & 293 & 22 & 53 & 36.0 & 8.62 & 9.00 & A,Ch & AQ \\
\textit{Xenopus laevis} & 0.200 & 295 & 20 & 45 & 30.0 & 8.50 & 8.85 & A & AQ \\
\textit{Ambystoma mexicanum} & 0.300 & 294 & 18 & 41 & 25.0 & 8.37 & 8.73 & A & AQ \\
\textit{Salamandra salamandra} & 0.080 & 290 & 20 & 49 & 24.0 & 8.40 & 8.79 & A,Ch & AQ \\
\textit{Plethodon glutinosus} & 0.012 & 291 & 30 & 74 & 20.0 & 8.50 & 8.89 & A & AQ \\
\textit{Necturus maculosus} & 0.130 & 288 & 18 & 46 & 30.0 & 8.45 & 8.86 & A & AQ \\
\textit{Cryptobranchus alleganiensis} & 0.600 & 289 & 15 & 39 & 55.0 & 8.64 & 9.05 & A & AQ \\
\midrule
\multicolumn{7}{l}{Raw mean $\bar\ell^{\rm raw}$} & $8.448\pm 0.127$ & & \\
\multicolumn{7}{l}{Corrected mean $\bar\ell^{\rm corr}$} & $8.822\pm 0.146$ & & \\
\bottomrule
\end{tabular}
\end{table}

\noindent\textbf{Dataset summary.}
Table~\ref{tab:dataset_summary} gives the species counts, body-mass
ranges, and $\ell$ statistics for all eight groups.
The complete dataset is provided in Extended Data Tables~1--8 of this
paper; no external repository exists.
A tab-delimited file is available from the corresponding author on request.

\begin{table}[H]
\caption{\textbf{Summary of the 230-species PBTE dataset.}
$n$: number of species.
$M$: body-mass range (kg).
$\bar\ell \pm s$: mean $\pm$ s.d.\ of $\ell = \log_{10}(f_H^{\rm avg}\cdot L\cdot 525{,}960)$.
$\Delta\ell$: deviation from the non-primate placental baseline
($\bar\ell_0 = 8.995$).}
\label{tab:dataset_summary}
\small
\begin{tabular}{lrrll}
\toprule
Group & $n$ & $M$ range (kg) & $\bar\ell \pm s$ & $\Delta\ell$ \\
\midrule
Non-primate placentals & 46 & $0.002$--$4{,}000$ & $8.998 \pm 0.160$ & 0 (reference) \\
Marsupials / monotremes & 19 & $0.018$--$30$ & $8.933 \pm 0.204$ & $-0.062$ \\
Primates & 18 & $0.35$--$160$ & $9.376 \pm 0.125$ & $+0.381^{***}$ \\
Bats & 31 & $0.005$--$1.1$ & $9.540 \pm 0.163$ & $+0.545^{***}$ \\
Cetaceans (dive-corrected) & 12 & $55$--$140{,}000$ & $8.801 \pm 0.296$ & $-0.194$ \\
Birds & 78 & $0.004$--$115$ & $9.528 \pm 0.213$ & $+0.533^{***}$ \\
Reptiles (Arrhenius-corrected) & 17 & $0.006$--$400$ & $8.929 \pm 0.301$ & $-0.065$ \\
Amphibians (Arrhenius-corrected) & 9 & $0.012$--$0.60$ & $8.822 \pm 0.146$ & $-0.173$ \\
\midrule
\textbf{All endotherms} & \textbf{194} & & $9.509 \pm 0.397$ & \\
\textbf{Full dataset} & \textbf{230} & & $9.420 \pm 0.428$ & \\
\bottomrule
\multicolumn{5}{l}{Significance vs non-primate baseline: $^*p<0.05$, $^{***}p<0.001$ (Welch $t$-test).} \\
\end{tabular}
\end{table}
\end{document}